\documentclass[12pt,onecolumn,draftclsnofoot,journal]{IEEEtran}
\usepackage{graphicx}
\usepackage{float}
\usepackage{epstopdf}
\usepackage[cmex10]{amsmath}
\usepackage{array}
\usepackage{cite}
\usepackage{amssymb}
\usepackage{amsfonts}
\usepackage{amsmath}
\usepackage{stackrel}
\usepackage{booktabs,multirow}
\usepackage{arydshln}
\usepackage{slashbox}
\usepackage{amsthm}
\usepackage{listings}
\usepackage{algorithm}
\usepackage{algorithmicx}
\usepackage{algpseudocode}
\usepackage{tablefootnote}
\usepackage{threeparttable}
\usepackage{multicol}


\newtheorem{lem}{Lemma}
\newtheorem{rem}{Remark}
\newtheorem{theo}{Theorem}

\newtheorem{cor}{Corollary}
\newtheorem{pro}{Proposition}
\newtheorem{ex}{Example}
\newfloat{routine}{htbp}{loa}
\floatname{routine}{Routine} 

\makeatletter
\newcommand{\algmargin}{\the\ALG@thistlm}
\makeatother
\newlength{\forwidth}
\settowidth{\forwidth}{\algorithmicfor\ }
\algdef{SE}[parFOR]{parFor}{EndparFor}[1]
  {\parbox[t]{\dimexpr\linewidth-\algmargin}{
     \hangindent\forwidth\strut\algorithmicfor\ #1\ \algorithmicdo\strut}}{\algorithmicend\ \algorithmicfor}
\algnewcommand{\parState}[1]{\State
  \parbox[t]{\dimexpr\linewidth-\algmargin}{\strut #1\strut}}

\newlength{\ifwidth}
\settowidth{\ifwidth}{\algorithmicif\ }
\algdef{SE}[parIF]{parIf}{EndparIf}[1]
  {\parbox[t]{\dimexpr\linewidth-\algmargin}{
     \hangindent\ifwidth\strut\algorithmicif\ #1\ \algorithmicdo\strut}}{\algorithmicend\ \algorithmicif}

\hyphenation{op-tical net-works semi-conduc-tor}

\begin{document}

\title{Characterization and Efficient Search of Non-Elementary Trapping Sets of LDPC Codes with Applications to Stopping Sets}
\author{Yoones Hashemi, and Amir H. Banihashemi,\IEEEmembership{ Senior Member, IEEE}
}

\maketitle


\begin{abstract}
In this paper, we propose a characterization for non-elementary trapping sets (NETSs) of low-density parity-check (LDPC) codes. The characterization is based on viewing a NETS as a hierarchy of embedded graphs starting from an ETS. The characterization corresponds to an efficient search algorithm that under certain conditions is exhaustive.  
As an application of the proposed characterization/search, we obtain lower and upper bounds on the stopping distance $s_{min}$ of LDPC codes. 
We examine a large number of regular and irregular LDPC codes, and demonstrate the efficiency and versatility of our technique in finding lower and upper bounds on, and in many cases the exact value of, $s_{min}$. 
Finding $s_{min}$, or establishing search-based lower or upper bounds, for many of the examined codes are out of the reach of any existing algorithm.
\end{abstract}

\section{introduction}

Finite-length LDPC codes under iterative decoding algorithms suffer from the \textit{error floor} phenomenon.  
It is well-known that the error-floor performance of LDPC codes is related to the presence of certain problematic graphical structures in the Tanner graph of the code, commonly referred to as \textit{trapping sets (TS)} \cite{richardson}.
Empirical results demonstrate that over the binary symmetric channel (BSC) and the additive white Gaussian noise channel (AWGNC),  the majority of error-prone structures are elementary TSs (ETS)~\cite{mehdi2014}, \cite{hashemireg}, \cite{hashemiireg}. These are TSs whose induced subgraphs contain only
degree-1 and degree-2 check nodes. In particular, the \textit{leafless ETSs (LETSs)}, in which each variable node is connected to at least two even-degree (satisfied) check nodes, are recognized as the main culprit in the error-floor of variable-regular LDPC codes~ \cite{hashemireg}. Most recently, in~\cite{hashemilower}, for variable-regular LDPC codes, lower bounds on the size of the smallest ETSs and TSs that are non-elementary (NETS) were established. It was shown in~\cite{hashemilower} that NETSs are generally larger than ETSs with the same number of odd-degree (unsatisfied) check nodes. This provided a theoretical justification, though not quite conclusive, for why ETSs often happen to be more harmful than NETSs. From a practical viewpoint, the ``elementary'' property simplifies the analysis and search of ETSs compared to NETSs, see,~\cite{mehdi2014}, \cite{hashemireg}, \cite{hashemiireg}, and the references therein. In particular, ETSs lend themselves to an alternate graphical representation, dubbed {\em normal graph}~\cite{mehdi2014}, that is simpler than the commonly used bipartite graph representation. Normal graphs have been used to develop the most efficient search algorithms for ETSs~\cite{mehdi2014}, \cite{hashemireg}, \cite{hashemiireg}. 

While empirical results have shown that the majority of harmful TSs over BSC and AWGNC are elementary, there are still some smaller NETSs that can trap iterative decoders over these channels in the error floor region.  
To the best of our knowledge, the \textit{branch-\&-bound} algorithm of \cite{zhang} is the only exhaustive search algorithm in the literature capable of finding both ETS and NETSs of LDPC codes.  The branch-\&-bound technique is a systematic enumeration of all candidate solutions that is commonly used to solve NP-hard integer programming problems.
Being a branch-\&-bound algorithm, 
the algorithm of \cite{zhang} is thus only capable of finding relatively small TSs with relatively small number of unsatisfied check nodes, and is only applicable to codes with short block lengths (the block lengths of all the reported codes are less than 1008). In this work, we propose an efficient search algorithm for NETSs of LDPC codes that has a much wider reach than branch-\&-bound-type algorithms in terms of both the code's block length and the size of TSs. The proposed search algorithm is graph-based, and relies on the characterization of a NETS as an embedded sequence of graphs that starts from an ETS, and expands one variable node at a time to reach the NETS.  The relatively low computational complexity of the proposed algorithm is a result of the simplicity of the expansions, and the fact that efficient algorithms already exist for finding ETSs~\cite{hashemireg}, \cite{hashemiireg}. One of the main contributions of this paper is to determine theoretically the range in which the proposed algorithm finds an exhaustive list of NETSs. 
%
%

As an important application of the proposed characterization/search of NETSs, we derive lower and upper bounds on the {\em stopping distance},  $s_{\min}$, of LDPC codes. {\em Stopping sets} (SS) are known to be the error-prone structures of LDPC codes over the binary erasure channel (BEC) under the belief propagation algorithm~\cite{Di}, and stopping distance is the size of the smallest stopping set(s).
%
It is well-known that, in general, finding $s_{\min}$ of an arbitrary LDPC code is an NP-hard problem \cite{krish}. 
Nevertheless, much research has been devoted to estimating/finding $s_{\min}$, and to obtaining a list of small stopping sets, for LDPC codes~\cite{richter}, \cite{Hu2}, \cite{Hirotomo2}, \cite{wang2009finding}, \cite{Rosnes}, \cite{Rosnes2}, \cite{Rosnes3}, \cite{esma}, \cite{But}. These results are mostly limited to codes with short to moderate block lengths and/or low to moderate rates and/or small variable degrees.
In \cite{wang2009finding}, the authors proposed a branch-\&-bound search algorithm to find small stopping sets. 
The proposed algorithm, however, becomes quickly infeasible to use as the block length, $n$, and $s_{\min}$ are increased. All the three LDPC codes studied in \cite{wang2009finding} are structured regular codes with variable degree $3$ and rate less than or equal to $0.5$. Using the Stern's probabilistic algorithm \cite{stern}, the authors in \cite{Hu2}, \cite{Hirotomo2} proposed search algorithms for computing $s_{\min}$ of LDPC codes. Their search algorithms, however, are also applicable only to short block length random codes or medium block length structured codes. Moreover, all the LDPC codes studied in \cite{Hu2}, \cite{Hirotomo2} have variable degree $3$ and rate less than or equal to $0.5$.
Authors in \cite{Rosnes} and \cite{Rosnes2} proposed  branch-\&-bound algorithms to find the stopping sets of LDPC codes.  To the best of our knowledge, this is the most efficient exhaustive search of stopping sets available in the literature. Similar to all the other branch-\&-bound algorithms, however, the computational complexity of these algorithms increases very rapidly with block length and thus the approach is only limited to short block lengths.
In particular, except for two random regular codes with rate $0.5$ and block length of $504$, all the codes studied in  \cite{Rosnes} and \cite{Rosnes2} are structured codes\footnote{The structural properties of the codes were used in \cite{Rosnes} and \cite{Rosnes2} to  speed up the search.} with length less than or equal to $4896$. 
Also, all the regular codes studied in  \cite{Rosnes} and \cite{Rosnes2} have variable degree $3$ and rate $0.5$.

In this paper, we use the graphical structure of stopping sets within the Tanner graph of an LDPC code to devise our search algorithm, and to derive bounds on $s_{\min}$. The subgraph induced by a stopping set in the Tanner graph of an LDPC code contains only check nodes with degree two or larger. We consider two categories of stopping sets depending on the check node degrees in their subgraph. If all the check nodes have degree two, we call the stopping set {\em elementary (ESS)}. Otherwise, the stopping set is referred to as {\em non-elementary (NESS)}.
Considering that an ESS is a LETS with no unsatisfied check node, 
we use the highly efficient algorithms of~\cite{hashemireg},~\cite{hashemiireg}, to search for ESSs.  NESSs, on the other hand, are a subset of NETSs. To search for NESSs, we thus use the proposed search algorithm for NETSs. 
Despite the fact that the proposed algorithms here are highly efficient, the exhaustive search of stopping sets of large size for longer LDPC codes may still happen to be too complex to perform. For a manageable complexity, we thus derive  a bound $L$ on the size of stopping sets that can be searched exhaustively. If the exhaustive search within this range results in finding at least one stopping set, then the smallest size of such stopping sets is $s_{\min}$. Otherwise, we establish the lower bound of $L$ on  $s_{\min}$. In this case, we modify our search algorithms to further reduce their complexity but at the expense of sacrificing the exhaustiveness. We then use the modified algorithms to search for stopping sets of size larger than $L$. The smallest size of such stopping sets is used as an upper bound on  $s_{\min}$.
In general, if we succeed in finding the exact stopping distance of an LDPC code, we do so in much higher speed than existing algorithms. If we fail, and establish the lower bound of $L \leq s_{\min}$, 
our algorithm for finding an upper bound is often much faster than the existing algorithms, for example, those in \cite{richter}, \cite{Hu2}, \cite{Hirotomo2}, \cite{wang2009finding}. We provide extensive numerical results that demonstrate the application of our technique to a variety of regular and irregular LDPC codes with block lengths as large as more than $16,000$. In fact, one of the main advantages of our search algorithms is that, unlike 
the existing algorithms in the literature such as \cite{Hirotomo2}, \cite{Hu2}, \cite{Rosnes}, the complexity does not change much by increasing the block length.

The rest of the paper is organized as follows. Basic definitions and notations are provided in Section \ref{sec:pre}.  We also briefly explain the search algorithms of~\cite{hashemireg} and~\cite{hashemiireg} for finding LETSs/ETSs in this section.
Section II ends with revisiting the lower bounds derived in  \cite{hashemilower} on the smallest size of ETSs and NETSs. In Section \ref{sec:searTS}, we present the characterization of NETS structures and propose an efficient exhaustive/non-exhaustive search of NETSs for regular and irregular LDPC codes.  In Section \ref{sec:bounds}, 
we discuss the derivation of lower and upper bounds on the stopping distance of LDPC codes.
Finally, numerical results are provided in Section \ref{sec:num}, followed by concluding remarks in Section \ref{sec:conclude}.

\section{Preliminaries}
\label{sec:pre}

\subsection{Definitions and Notations}

Consider an undirected graph $G=(F, E)$, where the two sets $F=\{f_1,\dots,f_k\}$ and $E=\{e_1,\dots,e_m\}$, are the sets of \textit{nodes} and \textit{edges} of $G$,
respectively. We say that an edge $e$ is \textit{incident} to a node $f$ if $e$ is connected to $f$.  If there exists an edge $e_k$ which is incident to 
two distinct nodes $f_i$ and $f_j$, we represent $e_k$ by $f_i f_j$ or $f_j f_i$.   
The degree of a node $f$ is denoted by $d_f$, and is defined as the number of edges incident to $f$. 
The \textit{minimum degree} of a graph $G$,  denoted by $\delta (G)$ is defined to be the  minimum 
degree of its nodes. A node $f$ is called \textit{leaf} if $d_f=1$.  A \textit{leafless graph} $G$ is a graph with $\delta (G) \geq 2$.

Given an undirected graph $G=(F,E)$,  a \textit{walk} between two nodes $f_1$ and $f_{k+1}$ is a sequence of nodes and edges
$f_1$, $e_1$, $f_2$, $e_2$, $\dots$, $f_k$, $e_k$, $f_{k+1}$, where $e_i=f_i f_{i+1}$, $\forall i \in [1,k]$. 
A {\em path} is a walk  with no repeated nodes or edges, except the first and the last nodes that can be the same. 
If the first and the last nodes are distinct, we call the path an {\em open path}.
Otherwise, we call the path a {\em cycle}. The \textit{length} of a walk, a  path, or a cycle is the number of its edges. 
A {\em lollipop walk} is a walk $f_1$, $e_1$, $f_2$, $e_2$, $\dots$, $f_k$, $e_k$, $f_{k+1}$, such that all the edges and all the nodes are distinct, except that $f_{k+1}=f_m$, for some $m \in (1,k)$.
A \textit{chord} of a cycle is an edge which is not part of the cycle but is incident to two distinct nodes in the cycle.
A {\em chordless} or \textit{simple cycle} is a cycle which does not have any chord. 
 The length of the shortest cycle(s) in a graph is called {\em girth}, and is denoted by $g$.
A graph is called {\em connected} when there is a {\em path} between every pair of nodes in the graph. 
A \textit{tree} is a connected graph that contains no cycles. 
A \textit{rooted tree} is a tree in which one specific node is assigned as the \textit{root}. 
The \textit{depth of a node} in a rooted tree is the length of the
path from the node to the root. The \textit{depth of a tree} is the maximum depth of any node in the tree. 
\textit{Depth-one tree (dot)} is a tree with depth one.

Any $m \times n$ parity check matrix $H$ of a binary LDPC code $\mathcal{C}$ can be represented by its bipartite Tanner graph $G=(V \cup C, E)$, 
where $V=\{ v_1,v_2,\dots,v_n \}$ is the set of variable nodes and $C=\{ c_1,c_2,\dots,c_m \}$ is the set of check nodes. If there is an edge $e \in E$ between the nodes $v_i$ and $c_j$ in the Tanner graph, 
then, correspondingly, there is a ``1'' in the $(j,i)$-th entry of matrix $H$.
A Tanner graph is called {\em variable-regular} with variable degree $d_{\mathrm{v}}$ if $d_{v_i} = d_{\mathrm{v}}$, $\forall~{v}_{i} \in V$. 
A Tanner graph is called {\em irregular} if it has multiple variable and check node degrees. 
An irregular LDPC code is usually described by its variable and check node degree distributions, $\lambda(x)=\sum\limits_{i=d_{v_{\min}}}^{d_{v_{\max}}} \lambda_i x^{i-1}$ and $\rho(x)=\sum\limits_{i=d_{c_{\min}}}^{d_{c_{\max}}} \rho_i x^{i-1}$, respectively, where $\lambda_i$ and $\rho_i$ are the fractions of edges in the Tanner graph that are incident to degree-$i$ variable and check nodes, respectively. The terms $d_{v_{\max}}, d_{v_{\min}}$ ($d_{c_{\max}}, d_{c_{\min}}$) 
are the maximum and minimum degrees of variable nodes (check nodes), respectively. For variable-regular Tanner graphs, we have $ d_{\mathrm{v}} \geq 2$, and for irregular 
ones, we assume $d_{v_{\min}} \geq 2$. The girth of a Tanner graph is an even number and in this work, we study Tanner graphs that are free of 4-cycles ($g > 4$). 

For a subset $\mathcal{S}$ of $V$, the subset $\Gamma{(\mathcal{S})}$ of $C$ denotes the set of neighbors of $\mathcal{S}$ in $G$.
The \textit{induced subgraph} of $\mathcal{S}$ in $G$, denoted by $G(\mathcal{S})$, is the graph with the set of nodes $\mathcal{S} \cup \Gamma{(\mathcal{S})}$ and 
the set of edges $\{f_i f_j \in E : f_i \in \mathcal{S}, f_j \in \Gamma{(\mathcal{S})}\}$. The set of check nodes with odd and even degrees in $G(\mathcal{S})$ are denoted by $\Gamma_{o}{(\mathcal{S})}$ and $\Gamma_{e}{(\mathcal{S})}$, respectively. In this paper, the terms \textit{unsatisfied check nodes} and \textit{satisfied check nodes} are used to refer to the check nodes in
$\Gamma_{o}{(\mathcal{S})}$ and $\Gamma_{e}{(\mathcal{S})}$, respectively. 
The \textit{size} of an induced subgraph $G(\mathcal{S})$ is defined to be the number of its variable nodes.  
We assume that an induced subgraph is connected. Disconnected subgraphs can be considered as the union of connected ones. 


Given a Tanner graph G, a set $\mathcal{S}\subset V$ is called an \textit{(a,b) trapping set (TS)} if $|\mathcal{S}| = a$ and $|\Gamma_{o}{(\mathcal{S})}| = b$.
Alternatively, $\mathcal{S}$ is said to belong to the {\em class of (a,b) TSs}. Parameter $a$ is referred to as the {\em size} of the TS.
 An \textit{elementary trapping set (ETS)} is a trapping set for which all the check nodes in $G(\mathcal{S})$ have degree 1 or 2. 
To simplify the representation of ETSs, similar to ~\cite{mehdi2014}, \cite{yoones2015}, \cite{hashemireg}, we use an alternate graph representation of ETSs, called \textit{normal graph} in variable-regular graphs. The normal graph of an ETS $\mathcal{S}$ is obtained from $G(\mathcal{S})$ by removing all the check nodes of degree one and their incident edges, and by replacing all the  degree-2 check nodes and their two incident edges by a single edge. 
 We call a set $\mathcal{S}\subset V$ an \textit{(a,b) leafless ETS (LETS)} if $\mathcal{S}$ is an $(a,b)$
 ETS and if the normal graph of $\mathcal{S}$ is leafless.
 Otherwise, the set ${\cal S}$ is called an ETS with leaf (ETSL). 
A \textit{non-elementary trapping set (NETS)} is a trapping set which is not elementary. A \textit{stopping set (SS)} is a trapping set for which  $G(\mathcal{S})$ has no check node of degree one. In general, similar to the TSs, SSs can be partitioned into two categories of \textit{elementary SSs (ESSs)} and \textit{non-elementary SSs (NESSs)}.
 
The following lemma shows that for variable-regular LDPC codes, depending on $d_{\mathrm{v}}$ being odd or even, some classes of trapping sets cannot exist.

\begin{lem}
\label{rem:cannot}
\cite{hashemilower} In a variable-regular Tanner graph with variable degree $d_{\mathrm{v}}$, (a) if $d_{\mathrm{v}}$ is odd, then there does not exist any $(a,b)$ TS with odd $a$ and even $b$, or with even $a$ and odd $b$;
and (b) if $d_{\mathrm{v}}$ is even, then there does not exist any $(a,b)$ TS with odd $b$.
\end{lem}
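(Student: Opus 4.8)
The plan is to use a double-counting (handshake) argument on the edges of the induced subgraph $G(\mathcal{S})$: I would count the number of edges of $G(\mathcal{S})$ once from the variable side and once from the check side, and then compare parities. Everything reduces to a single congruence mod $2$.

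First I would fix an $(a,b)$ TS $\mathcal{S}$ and count $|E(G(\mathcal{S}))|$, the number of edges of the induced subgraph. By the definition of $G(\mathcal{S})$, every edge of $G$ incident to a node of $\mathcal{S}$ is present in $G(\mathcal{S})$, since its other endpoint lies in $\Gamma(\mathcal{S})$ by definition. Hence each $v_i \in \mathcal{S}$ has degree exactly $d_{\mathrm{v}}$ in $G(\mathcal{S})$, and since $G(\mathcal{S})$ is bipartite between $\mathcal{S}$ and $\Gamma(\mathcal{S})$, summing degrees over the variable side gives $|E(G(\mathcal{S}))| = a\, d_{\mathrm{v}}$. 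Counting the same quantity from the check side gives $|E(G(\mathcal{S}))| = \sum_{c \in \Gamma_e(\mathcal{S})} d_c + \sum_{c \in \Gamma_o(\mathcal{S})} d_c$, where the first sum is even (a sum of even integers) and the second is a sum of $b$ odd integers, hence of the same parity as $b$. Combining the two counts yields $a\, d_{\mathrm{v}} \equiv b \pmod{2}$.

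Part (a) then follows by taking $d_{\mathrm{v}}$ odd: $a \equiv a\, d_{\mathrm{v}} \equiv b \pmod{2}$, so $a$ and $b$ must have the same parity, which rules out both the case ($a$ odd, $b$ even) and the case ($a$ even, $b$ odd). Part (b) follows by taking $d_{\mathrm{v}}$ even: then $a\, d_{\mathrm{v}}$ is even, which forces $b$ to be even, so no $(a,b)$ TS with odd $b$ can exist.

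As for the main obstacle, there is no deep one here; the statement is essentially a parity identity. The only point that requires a little care is the first step, namely justifying that a variable node of $\mathcal{S}$ retains all $d_{\mathrm{v}}$ of its edges inside $G(\mathcal{S})$ — in contrast to a check node, which may lose edges to variable nodes outside $\mathcal{S}$. This is immediate from the definition of the induced subgraph, but it is precisely the place where variable-regularity is used, so it is worth spelling out explicitly.
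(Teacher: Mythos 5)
Your proof is correct: the double-counting of edges of $G(\mathcal{S})$ giving $a\,d_{\mathrm{v}} \equiv b \pmod{2}$ is exactly the standard parity argument, and your care in noting that variable-regularity guarantees each variable node keeps all $d_{\mathrm{v}}$ edges in the induced subgraph is the right point to make explicit. The paper itself does not reproduce a proof (it cites \cite{hashemilower}), but the argument there is this same handshake/parity identity, so your approach matches the intended one.
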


%
%

\subsection{Exhaustive Search of ETSs}
\label{sec:searETS}
In  \cite{hashemireg},  a hierarchical graph-based expansion approach was proposed to characterize LETSs of variable-regular LDPC codes. It was proved in~\cite{hashemireg} that any LETS structure of variable-regular Tanner graphs for any variable degree $d_{\mathrm{v}}$, and in any $(a,b)$ class, can be generated by applying a combination of \textit{depth-one tree (dot),  path} and \textit{lollipop} expansions to simple cycles.  Figs. \ref{fig:pathen}  (a)-(c) show the three expansions in the space of normal graphs. (Notations $pa_m^o$ and $pa_m^c$ are used for open and closed paths of length $m+1$, respectively. The notation $lo^c_m$ is used for a lollipop walk of length $m+1$ that consists of a cycle of length $c$.)
\begin{figure}[] 
\centering
\includegraphics [width=0.4\textwidth]{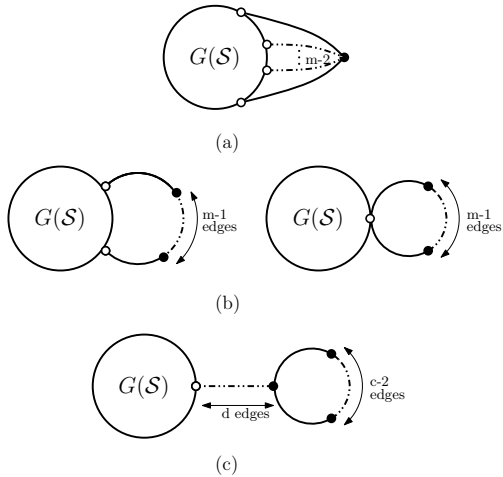}
\caption{Expansion of the LETS structure $\mathcal{S}$ with (a) a depth-one tree with $m$ edges, $dot_m$, (b) an open and closed $path$ of length $m+1, pa_m^o, pa_m^c$, respectively, (c) a lollipop walk of length $m+1=d+c$, $lo^c_m$.}
\label{fig:pathen}
\end{figure}
The characterization, dubbed as $dpl$, was then used as a road map to devise search algorithms that are provably efficient in finding all the instances of $(a,b)$ LETS structures with $a \leq a_{\max}$ and $b \leq b_{\max}$, for any choice of $a_{\max}$ and $b_{\max}$, in a guaranteed fashion. The $dpl$ search algorithm starts by enumerating short simple cycles in the graph and then searches for the children (descendants) of those cycles through the three expansion techniques recursively, until it reaches the targeted structure.

In \cite{hashemiireg}, LETSs and ETSLs were studied in irregular Tanner graphs. It was shown that these structures in irregular graphs can also be characterized and searched using a $dpl$-based technique in any interest range, efficiently and exhaustively.
In the $dpl$ characterization/search algorithm of~\cite{hashemireg} and \cite{hashemiireg}, to exhaustively cover all the $(a,b)$ LETS structures in the interest range of  $a \leq a_{\max}$ and $b \leq b_{\max}$, the algorithm sometimes 
 needs to also cover auxiliary structures with their $b$ values larger than $b_{\max}$, and up to $b'_{\max} > b_{\max}$. 
 
 To the best of our knowledge, the $dpl$-based algorithms of \cite{hashemireg} and \cite{hashemiireg} are the most efficient exhaustive search algorithms available for finding ETSs of LDPC codes.

\subsection{Lower bounds on the size of TSs}
 
\begin{theo}
\label{lowf}
 \cite{hashemilower} Consider a variable-regular Tanner graph $G$ with variable degree $d_\mathrm{v}$ and girth $g$. 
A lower bound on the size $a$ of an $(a,b)$ trapping set in $G$, whose induced subgraph contains a check node of degree $k\:(\geq 2)$ is given in (\ref{ineq1}), 
where $b'=b-(k \mod 2)$, $T=k(d_\mathrm{v}-1)-b'$, and $b$ is assumed to satisfy $b < k(d_\mathrm{v}-1)+(k \mod 2)$. (Notation $\mod$ is for modulo operation.)
\begin{table*}
\begin{align}
a \geq 
\left\{
\begin{array}{ll}
k+ T \sum\limits_{i=0}^{\lfloor g/4 -2  \rfloor}  (d_\mathrm{v}-1)^i,    & \emph{for}~g/2~\emph{even}, \\
k+T \sum\limits_{i=0}^{\lfloor g/4 -2  \rfloor} (d_\mathrm{v}-1)^i+max \{\lceil (T (d_\mathrm{v}-1)^{\lfloor g/4-1\rfloor})/d_\mathrm{v} \rceil, (d_\mathrm{v}-1-\lfloor b'/ k \rfloor)(d_\mathrm{v}-1)^{\lfloor g/4-1\rfloor}\},   & \emph{for}~g/2~ \emph{odd},
\end{array}
\right.
\label{ineq1}
\end{align}
\end{table*}
\end{theo}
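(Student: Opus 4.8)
The plan is to root a breadth-first tree at the given degree-$k$ check node $c$ inside the induced subgraph $G(\mathcal{S})$ and to count the variable nodes that this forces into $\mathcal{S}$, exploiting the standard fact that a graph of girth $g$ induces a tree on every ball of radius $g/2-1$. First I would fix the boundary quantities: $c$ has exactly $k$ neighbours $v_1,\dots,v_k$ in $\mathcal{S}$, which accounts for the leading term $k$; the node $c$ is itself unsatisfied precisely when $k$ is odd, so the number of unsatisfied check nodes other than $c$ equals $b'=b-(k\bmod 2)$; and the hypothesis $b<k(d_\mathrm{v}-1)+(k\bmod 2)$ says exactly that $T=k(d_\mathrm{v}-1)-b'\geq 1$, which guarantees the geometric sums in (\ref{ineq1}) are positive.

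The heart of the argument is a layer-by-layer count. Because $g>4$, the $d_\mathrm{v}-1$ edges of each $v_i$ other than $v_i c$ reach $k(d_\mathrm{v}-1)$ pairwise distinct check nodes, none equal to $c$; at most $b'$ of these are unsatisfied, so at least $T$ are satisfied and hence have degree at least $2$ in $G(\mathcal{S})$. Writing $n_{2j+1}$ for the number of variable nodes at tree-depth $2j+1$ and $\beta_{2j}$ for the number of unsatisfied check nodes at depth $2j$, the tree structure inside radius $g/2-1$ yields, for every depth that stays inside that radius, the recursion $n_{2j+1}\geq(d_\mathrm{v}-1)\,n_{2j-1}-\beta_{2j}$: each depth-$(2j-1)$ variable has $d_\mathrm{v}-1$ distinct check children, and every satisfied one of those has at least one distinct variable child (which again lies in $\mathcal{S}$). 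Since $\sum_j\beta_{2j}\leq b'$ and $\sum_i(d_\mathrm{v}-1)^{j-i}\beta_{2i}\leq(d_\mathrm{v}-1)^{j-1}b'$, unrolling from $n_1=k$ gives $n_{2j+1}\geq T(d_\mathrm{v}-1)^{j-1}$. The odd depths lying inside radius $g/2-1$ are $1,3,\dots,2\lfloor g/4\rfloor-1$, so adding $n_1+n_3+\cdots$ produces $k+T\sum_{i=0}^{\lfloor g/4\rfloor-2}(d_\mathrm{v}-1)^i$; for $g/2$ even this is already the claimed bound, and for $g/2$ odd it is its first two terms.

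It remains, for $g/2$ odd, to handle the one extra odd depth $2t+1$ with $t=\lfloor g/4\rfloor$, which sits just outside the tree radius $g/2-1=2t$, so its variable nodes may be reached from several depth-$2t$ checks at once and cannot be counted by the naive recursion. Two lower bounds are available and I would take the larger. First, the same unrolling gives at least $T(d_\mathrm{v}-1)^{t-1}$ satisfied depth-$2t$ checks, hence that many edges going down to depth $2t+1$; each depth-$(2t+1)$ variable is incident to at most $d_\mathrm{v}$ of them --- two such edges sharing a grandparent $v$ at depth $2t-1$ would close a $4$-cycle --- so there are at least $\lceil T(d_\mathrm{v}-1)^{t-1}/d_\mathrm{v}\rceil$ of these variables. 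Second, the subtrees hanging off $v_1,\dots,v_k$ are vertex-disjoint and jointly contain at most $b'$ unsatisfied checks, so one of them contains at most $\lfloor b'/k\rfloor$; re-running the recursion rooted at that $v_i$, whose own ball of radius $g/2-1=2t$ is a tree and therefore still reaches depth $2t+1$ (measured from $c$) with all nodes distinct, gives at least $(d_\mathrm{v}-1-\lfloor b'/k\rfloor)(d_\mathrm{v}-1)^{t-1}$ of them. Finally one checks that these depth-$(2t+1)$ variables are genuinely new --- coinciding with a variable of depth at most $2t-1$ would close a cycle of length at most $4t<g$ --- so the larger of the two estimates simply adds to the earlier sum, giving (\ref{ineq1}) for $g/2$ odd.

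I expect this last step to be the main obstacle: once one leaves the comfortable tree radius, the distinctness claims, the girth inequalities, and the accounting of the unsatisfied-check budget $b'$ across depths and across the $k$ subtrees must all be kept simultaneously consistent, and it is precisely there that the two-term maximum and the floor/ceiling operations in (\ref{ineq1}) originate.
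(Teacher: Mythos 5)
Your argument is correct and is essentially the same as the paper's (which only sketches the proof, deferring details to the cited reference): a tree-like expansion of $G(\mathcal{S})$ rooted at the degree-$k$ check node, with layer-by-layer counting up to depth $g/2-1$ using the girth to force distinctness, the budget $b'$ of non-root unsatisfied checks driving the term $T\sum_i(d_\mathrm{v}-1)^i$, and the extra half-layer for $g/2$ odd handled exactly as you do, yielding the two-term maximum. Your bookkeeping of the recursion, the pigeonhole over the $k$ subtrees, and the $\lceil\cdot/d_\mathrm{v}\rceil$ sharing bound all check out against the stated formula (and against the numerical entries in Tables II--IV).
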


The proof of the above result, presented in~\cite{hashemilower}, is based on considering the tree-like expansion of the induced subgraph of the TS starting from the degree-$k$ check node $w$ as the root with $g/2$ layers (depth $g/2-1$).  
Fig. \ref{treegen} shows such an expansion. (Variable nodes, satisfied and unsatisfied check nodes are represented by circles, empty and full squares, respectively.)
\begin{figure}[] 
\centering
\includegraphics [width=0.45\textwidth]{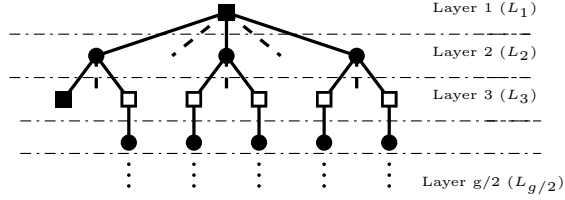}
\caption{A tree-like expansion of a TS rooted at a check node of degree $k$.}
\label{treegen}
\end{figure}
In this tree, node $w$, in layer one ($L_1$), is connected to $k$ variable nodes in layer two ($L_2$), and each variable node in $L_2$ is connected to $d_\mathrm{v}-1$ check nodes in $L_3$. Also, 
to minimize the size of TSs, it is assumed that the degree of all the other check nodes in the subgraph is either $2$ or $1$. From Theorem \ref{lowf}, one can see that for any given values of $b$, $g$ and $d_\mathrm{v}$, 
by increasing $k$, the lower bound on the size of the smallest TS is increased. 

\begin{rem}
\label{cor:fargh}
Based on Lemma \ref{rem:cannot}, for odd values of $d_\mathrm{v}$, there is no TS in the $(a,0)$ class with an odd value of $a$. In such cases, therefore, the lower bound of 
Theorem~\ref{lowf} can be improved, if the value in (\ref{ineq1}) happens to be an odd number.
\end{rem}

\begin{cor}
 \cite{hashemilower}
 \label{cor:nets}
Theorem \ref{lowf} with $k=3$ ($k=4$) provides a lower bound on the size of the smallest possible NETSs with $b > 0$ ($b=0$).
\end{cor}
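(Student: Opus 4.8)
\emph{Proof idea.} The plan is to combine the defining property of a non-elementary trapping set with the monotonicity of the bound of Theorem~\ref{lowf} in the parameter $k$. By definition, a NETS $\mathcal{S}$ is a trapping set whose induced subgraph $G(\mathcal{S})$ contains at least one check node $w$, of some degree $k_0\geq 3$. Provided the standing hypothesis $b<k_0(d_\mathrm{v}-1)+(k_0\bmod 2)$ of Theorem~\ref{lowf} is met, applying that theorem with $k=k_0$ gives that $a$ is at least the value of (\ref{ineq1}) at $k=k_0$. Since, as observed right after Theorem~\ref{lowf}, the right-hand side of (\ref{ineq1}) is non-decreasing in $k$ for fixed $b$, $g$ and $d_\mathrm{v}$, evaluating (\ref{ineq1}) at the \emph{smallest admissible} value of $k_0$ produces a lower bound on $a$ that is valid for every NETS, independently of the actual degree of $w$.

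First I would treat the case $b>0$. Here $k_0$ may be any integer $\geq 3$, so by the monotonicity just recalled the value of (\ref{ineq1}) at $k=3$ is no larger than its value at $k=k_0$; hence $a$ is bounded below by (\ref{ineq1}) evaluated at $k=3$, which is the claimed bound. Implicit in this is the inequality $b<3(d_\mathrm{v}-1)+1$, namely the hypothesis of Theorem~\ref{lowf} specialized to $k=3$.

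Next, for $b=0$ we have $\Gamma_o(\mathcal{S})=\emptyset$, so every check node of $G(\mathcal{S})$ has \emph{even} degree; in particular $w$, having degree at least $3$ and even, must have degree $k_0\geq 4$. The same monotonicity argument then bounds $a$ below by the value of (\ref{ineq1}) at $k=4$, and the required hypothesis $b<4(d_\mathrm{v}-1)$ holds trivially since $b=0$ and $d_\mathrm{v}\geq 2$.

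I do not anticipate a serious obstacle: the statement is largely a bookkeeping consequence of the definition of a NETS, of the parity restriction recorded in Lemma~\ref{rem:cannot}, and of the monotonicity of (\ref{ineq1}) in $k$. The step that most warrants care is verifying that this monotonicity indeed persists when $g/2$ is odd, where the extra $\max\{\cdot,\cdot\}$ term of (\ref{ineq1}) must also be shown non-decreasing in $k$ (both of its arguments grow with $k$, since $T$ increases while $\lfloor b'/k\rfloor$ does not increase). A minor refinement, available when $d_\mathrm{v}$ is odd, is to raise the resulting bound by one whenever it has the wrong parity relative to $b$, exactly as in Remark~\ref{cor:fargh}.
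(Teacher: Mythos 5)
Your proof is correct and follows essentially the same reasoning the paper relies on (the corollary is imported from \cite{hashemilower}): a NETS must contain a check node of degree at least $3$, which is forced up to degree $4$ when $b=0$ since all check nodes are then satisfied, and the bound of Theorem~\ref{lowf} is non-decreasing in $k$ (as the paper itself notes right after the theorem), so evaluating it at the smallest admissible $k$ yields a valid lower bound. Your parity observation for $b=0$ and your check of the hypothesis $b<k(d_\mathrm{v}-1)+(k \bmod 2)$ match the intended argument, so no gap remains.
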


It was shown in \cite{hashemilower} that the lower bounds of Corollary \ref{cor:nets} are often tight. The size of the smallest ETSs was also compared in \cite{hashemilower} with the lower bounds of 
Corollary \ref{cor:nets}. The following result follows from Table I of \cite{hashemilower}.

\begin{rem}
\label{cor:LET,NET}
For any given $b \leq 5$, $d_\mathrm{v}=3,4,5,6$, and $g=6,8,10$, the smallest possible TSs with cycles are LETSs.
\end{rem}

The following lemma is easy to prove based on an approach similar to the one used to prove Theorem~\ref{lowf}.

\begin{lem}
\label{lem:comb1}
The lower bound of Theorem~\ref{lowf} also applies to the size $a$ of an $(a,b)$ NETS whose induced subgraph contains \underline{at least} one check node of degree $k\:(\geq 3)$.
\end{lem}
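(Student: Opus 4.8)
The plan is to re-run the tree-expansion counting behind Theorem~\ref{lowf} essentially unchanged, preceded by one structural observation. A trapping set whose induced subgraph contains a check node of degree $k\ge 3$ is, by the very definition of ``elementary'', automatically non-elementary; conversely, the hypothesis of the lemma supplies exactly such a check node. Hence, for $k\ge 3$, the class of $(a,b)$ NETSs whose induced subgraph contains a check node of degree $k$ coincides with the class of $(a,b)$ TSs whose induced subgraph contains a check node of degree $k$. Since Theorem~\ref{lowf} requires only $k\ge 2$ together with $b<k(d_\mathrm{v}-1)+(k\bmod 2)$, its bound~(\ref{ineq1}) applies to the former class word for word.

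To make the argument self-contained --- which is presumably what ``an approach similar to the one used to prove Theorem~\ref{lowf}'' refers to --- I would instead repeat the expansion directly: fix a check node $w$ of degree $k$ in $G(\mathcal{S})$, grow the tree-like expansion of $G(\mathcal{S})$ rooted at $w$ with $g/2$ layers as in Fig.~\ref{treegen}, and invoke the girth hypothesis to conclude that the nodes in layers $L_1,\dots,L_{g/2}$ are pairwise distinct (otherwise $G$ would contain a cycle shorter than $g$), so that the number of variable nodes appearing in the expansion is a genuine lower bound on $a=|\mathcal{S}|$. The counting of that number is then identical to~\cite{hashemilower}: $w$ forces $k$ variable nodes in $L_2$; each of them forces $d_\mathrm{v}-1$ check nodes in $L_3$; of those $k(d_\mathrm{v}-1)$ check nodes, at most $b'=b-(k\bmod 2)$ can be leaves (the term $k\bmod 2$ accounting for the case where $w$, having odd degree, is itself one of the $b$ unsatisfied check nodes), so at least $T=k(d_\mathrm{v}-1)-b'$ of them branch further; propagating this $(d_\mathrm{v}-1)$-ary branching through the remaining layers, and, when $g/2$ is odd, adding the last-layer contribution, yields precisely the right-hand side of~(\ref{ineq1}).

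The only point at which $k\ge 3$ rather than $k\ge 2$ is used is the following: the size-minimizing configuration in the counting above --- the one in which every check node of $G(\mathcal{S})$ other than $w$ has degree $1$ or $2$ --- must still be an admissible NETS here, and it is, precisely because $w$ itself has degree $\ge 3$. Put differently, imposing on $\mathcal{S}$ the extra requirement of being non-elementary creates no variable node beyond those already counted, since any additional check node of degree $\ge 3$ would only increase the branching (hence the count) in the expansion. Therefore the minimum size over NETSs containing a degree-$k$ check node equals the minimum size over TSs containing a degree-$k$ check node, namely~(\ref{ineq1}).

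I do not expect a substantive obstacle; the only care required is in the bookkeeping of the $b$ unsatisfied check nodes when $w$ is itself unsatisfied (odd $k$), and this is already absorbed into the substitution $b'=b-(k\bmod 2)$ and the standing hypothesis $b<k(d_\mathrm{v}-1)+(k\bmod 2)$ --- equivalently $T\ge 1$ --- both inherited from Theorem~\ref{lowf}. Together with the monotonicity of~(\ref{ineq1}) in $k$ noted right after Theorem~\ref{lowf}, this lemma is what makes Corollary~\ref{cor:nets} go through: a smallest NETS with $b>0$ contains a check node of degree $\ge 3$ and a smallest NETS with $b=0$ contains one of even degree $\ge 4$, so applying the bound with $k=3$, respectively $k=4$, is legitimate.
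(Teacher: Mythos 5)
Your proof is correct and follows essentially the route the paper intends: the paper omits the argument, remarking only that the lemma follows by the same tree-expansion counting used for Theorem~\ref{lowf}, and your self-contained paragraphs reproduce exactly that counting, together with the key (and correct) monotonicity observation that any further check node of degree $\geq 3$ beyond the root only adds branches, hence variable nodes, so the size-minimizing configuration is unchanged. Your opening shortcut --- that for $k \geq 3$ a TS containing a degree-$k$ check node is automatically non-elementary, so the lemma is just Theorem~\ref{lowf} restricted to a subclass --- is also valid and consistent with the theorem as stated.
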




\section{Characterization/Search of Non-Elementary TSs (NETSs) in LDPC Codes}
\label{sec:searTS}

\subsection{Characterization and Exhaustive Search of NETSs in Variable-Regular LDPC Codes}
\label{sec:nets}
The characterization of ETSs (LETSs and ETSLs) for variable-regular graphs, provided in \cite{hashemireg} and \cite{hashemiireg}, is based on normal graph representation of structures. This approach, however, is not applicable to NETSs. In this work, to develop the characterization of NETSs, we investigate the parent-child relationships 
between ETSs and NETSs. As natural candidates for the expansion of ETSs to reach NETSs, we consider $dot$, $path$ and $lollipop$ expansions. One can see that the application of $path$ and $lollipop$ expansions to a TS increases the $b$ value of the structure rather rapidly. For NETSs with relatively small $b$ values, we thus limit the expansions to $dot$ in the rest of the paper. Due to the low computational complexity of $dot$ expansion \cite{hashemireg}, this results in an efficient NETS search algorithm starting from ETSs. Using only the $dot$ expansion limits the variety of NETS structures that can be generated starting from ETS structures. In the following, we first discuss the (successive) application(s) of $dot$ expansions to ETS structures and then describe the NETS structures that are out of reach.

Suppose that $\mathcal{S}$ is an $(a,b)$ TS structure of variable-regular Tanner graphs with variable degree $d_\mathrm{v}$, where $b \geq 1$. The notation $dot_m$ is used for a {\em dot} expansion with $m$ edges, connecting a new variable node to $m$ check nodes of $\mathcal{S}$.
Similar to \cite{hashemireg}, we assume that the new variable node in $dot_m$ is connected to at least two check nodes of $\mathcal{S}$, i.e., $2 \leq m \leq d_\mathrm{v}$. 
However, unlike the $dot_m$ used in \cite{hashemireg}, the $m$ edges can be connected to both satisfied and unsatisfied check nodes of $\mathcal{S}$. The following result is simple to prove.

\begin{lem}
\label{pro:dot}
Suppose that $\mathcal{S}$ is an $(a,b)$ TS structure of variable-regular Tanner graphs with variable degree $d_\mathrm{v}$, where $b \geq 1$.
Expansion of $\mathcal{S}$ using $dot_m$, $ 2 \leq m \leq d_\mathrm{v}$, will result in 
NETS structure(s) in the  $(a+1,b+d_\mathrm{v}-2q)$ class, where $m=p+q$ and $p \geq 0$ and $q \geq 0$ are the number of edges connecting the new variable node to the satisfied and unsatisfied check nodes of $\mathcal{S}$, respectively. 
\end{lem}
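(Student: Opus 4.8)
The statement packages three assertions — the size, the number of unsatisfied check nodes, and non-elementarity — which I would establish in that order. Write $\mathcal{S}' = \mathcal{S}\cup\{v\}$, where $v$ is the single new variable node added by the $dot_m$ expansion; then $|\mathcal{S}'|=a+1$, which gives the first coordinate of the class.

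For the second coordinate I would compare $G(\mathcal{S}')$ with $G(\mathcal{S})$ check node by check node. By variable-regularity, $v$ is incident to exactly $d_\mathrm{v}$ distinct check nodes: the $m=p+q$ check nodes of $\mathcal{S}$ targeted by the dot, plus $d_\mathrm{v}-m$ further check nodes that by definition are not in $\Gamma(\mathcal{S})$, hence in $G(\mathcal{S}')$ are incident to $v$ only and so have degree $1$ (unsatisfied) — a contribution of $d_\mathrm{v}-m$ new unsatisfied check nodes. Among the $m$ targeted check nodes, each of the $p$ satisfied ones gains one edge and flips from even to odd degree ($+p$), each of the $q$ unsatisfied ones flips from odd to even ($-q$), and every remaining check node of $G(\mathcal{S})$ keeps its incident edge set and therefore its parity. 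Summing, $|\Gamma_o(\mathcal{S}')| = b + (d_\mathrm{v}-m) + p - q = b + d_\mathrm{v} - 2q$ after substituting $m=p+q$, so $\mathcal{S}'$ lies in the $(a+1,\,b+d_\mathrm{v}-2q)$ class; one can check this is consistent with the parity restrictions of Lemma~\ref{rem:cannot}.

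Finally I would verify non-elementarity by exhibiting a check node of degree at least $3$ in $G(\mathcal{S}')$. If $\mathcal{S}$ is itself non-elementary, any check node of degree at least $3$ in $G(\mathcal{S})$ still has degree at least $3$ in the supergraph $G(\mathcal{S}')$. Otherwise $\mathcal{S}$ is an ETS, so its satisfied check nodes all have degree exactly $2$; for the dot to hit such a check node (the case $p\geq1$) forces that check node to acquire degree $3$ in $G(\mathcal{S}')$. Hence $G(\mathcal{S}')$ is non-elementary whenever $p\geq1$ or $\mathcal{S}$ is already non-elementary, which are the relevant cases for the intended search (a dot attached only to leaf check nodes of an ETS merely turns them into degree-$2$ check nodes and yields another ETS).

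I expect this last step to be the delicate point: the non-elementarity conclusion is not automatic from $m\geq2$ alone and must use that the dot reaches a satisfied check node of $\mathcal{S}$ — precisely the feature distinguishing this $dot_m$ (which attaches $v$ to \emph{existing} check nodes of $\mathcal{S}$) from the normal-graph $dot_m$ of~\cite{hashemireg}. The $b$-count, by contrast, is routine bookkeeping, its only subtle ingredient being the $d_\mathrm{v}-m$ freshly created degree-$1$ check nodes, which is where variable-regularity (fixing the degree of $v$ at $d_\mathrm{v}$) is used.
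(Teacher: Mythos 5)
Your proof is correct and is exactly the routine degree/parity bookkeeping the paper has in mind when it states the lemma without proof (``simple to prove''): the count $b+(d_\mathrm{v}-m)+p-q=b+d_\mathrm{v}-2q$ is right, and your observation that non-elementarity fails when $\mathcal{S}$ is an ETS and $p=0$ is precisely what the paper addresses in the remark immediately following the lemma (``if $\mathcal{S}$ is an ETS, then $p>0$''). No gaps.
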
 
 
 \begin{rem}
 One should note that in Lemma \ref{pro:dot}, if  $\mathcal{S}$ is an ETS, then $p>0$.
 \end{rem}
 
 \begin{lem}
 \label{lem3}
 Consider the induced subgraph $G(\mathcal{S})$ of a NETS $\mathcal{S}$ in a variable-regular Tanner graph $G$ with variable degree $d_\mathrm{v}$. Further, consider the expansions of  this subgraph in a layered tree-like fashion 
 starting from one of the check nodes $w$ with degree $k$, where $k \geq 3$. If any such expansion can be partitioned into $k$ subgraphs, where the only connection of the subgraphs is through $w$, as shown in Fig.~\ref{discon}, then the NETS structure ${\cal S}$ cannot be generated by (successive) application(s) of $dot_m$ expansions, $2 \leq m \leq d_\mathrm{v}$, to any ETS structure.  
 \end{lem}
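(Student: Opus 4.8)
The plan is to argue by contradiction. Suppose, contrary to the claim, that $\mathcal{S}$ \emph{can} be obtained from some ETS by a sequence of $dot_m$ expansions with $2\le m\le d_\mathrm{v}$. I will show that the degree-$k$ check node $w$ (with $k\ge 3$) must then lie on a cycle of $G(\mathcal{S})$, which is incompatible with the assumed partition of the layered tree-like expansion rooted at $w$ into $k$ subgraphs whose only common node is $w$.

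First I would translate the hypothesis of Fig.~\ref{discon} into a clean graph-theoretic statement. If the layered expansion rooted at $w$ breaks into $k$ pieces meeting only at $w$, then no two of the $k$ variable-node neighbours of $w$ are joined by a path of $G(\mathcal{S})$ avoiding $w$; equivalently, $w$ lies on no cycle of $G(\mathcal{S})$. Indeed, any simple cycle through $w$ uses exactly two edges at $w$, hence two distinct neighbouring variable nodes, and the remainder of the cycle is a $w$-avoiding path between them, which would force two of the $k$ pieces to share a node other than $w$. Consequently it suffices to establish the following invariant: \emph{if a structure $\mathcal{S}'$ is produced from an ETS by successive $dot_m$ expansions with $2\le m\le d_\mathrm{v}$, then every check node of $G(\mathcal{S}')$ of degree at least $3$ lies on a cycle of $G(\mathcal{S}')$.} Applying this with $\mathcal{S}'=\mathcal{S}$ and the check node $w$ (degree $k\ge 3$) then contradicts the previous sentence.

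I would prove the invariant by induction on the number $t$ of $dot$ expansions, noting that $m\ge 2$ together with the paper's standing assumption that induced subgraphs are connected makes every intermediate induced subgraph connected, and that the induced subgraphs form an increasing chain $G(\mathcal{S}_0)\subseteq G(\mathcal{S}_1)\subseteq\cdots$. For $t=0$ the structure is an ETS, whose check nodes all have degree $1$ or $2$, so the statement is vacuous; this also records the key bookkeeping point that a check node acquires degree $\ge 3$ only through a $dot$ expansion. For the inductive step, let $\mathcal{S}_{i+1}$ arise from $\mathcal{S}_i$ by a $dot_m$ expansion adding a variable node $v$ joined to check nodes $c_1,\dots,c_m$ of $G(\mathcal{S}_i)$ (and to $d_\mathrm{v}-m$ fresh degree-$1$ check nodes). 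Let $w$ be a check node of $G(\mathcal{S}_{i+1})$ with $d_w\ge 3$. If $w$ is one of the new check nodes, $d_w=1$, impossible. If $w\notin\{c_1,\dots,c_m\}$, its degree and incident subgraph are unchanged, so it is already on a cycle of $G(\mathcal{S}_i)\subseteq G(\mathcal{S}_{i+1})$ by the inductive hypothesis; likewise if $w=c_j$ but already had degree $\ge 3$ in $G(\mathcal{S}_i)$. In the remaining case $w=c_j$ has degree exactly $2$ in $G(\mathcal{S}_i)$; since $m\ge 2$ there is another attachment point $c_{j'}$ with $j'\ne j$, and connectedness of $G(\mathcal{S}_i)$ furnishes a simple path $Q$ from $w$ to $c_{j'}$ inside $G(\mathcal{S}_i)$, hence avoiding $v$. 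Then $Q$ followed by the edges $c_{j'}v$ and $vw$ is a cycle of $G(\mathcal{S}_{i+1})$ through $w$. This completes the induction, and with the second paragraph, the proof.

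The only delicate point, and the step I would be most careful about writing down, is the reduction in the second paragraph — making the ``partitioned into $k$ subgraphs connected only through $w$'' picture of Fig.~\ref{discon} precisely equivalent to ``$w$ lies on no cycle of $G(\mathcal{S})$.'' Once that is pinned down, the rest is a short structural induction whose heart is the elementary observation that a $dot_m$ expansion with $m\ge 2$ places each of its $m$ attachment check nodes (and the new variable node) on a cycle of the expanded graph, so that no check node can ever be pushed past degree $2$ while remaining off every cycle.
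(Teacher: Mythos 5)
Your proof is correct and rests on the same key mechanism as the paper's: since every $dot_m$ expansion has $m\geq 2$, any variable node that raises the degree of $w$ must also attach to another check node of the existing connected subgraph, creating a connection (a cycle through $w$) that the partition of Fig.~\ref{discon} forbids. Your cycle-invariant induction and the explicit reduction of the Fig.~\ref{discon} hypothesis to ``$w$ lies on no cycle'' are simply a cleaner, more formal packaging of the paper's case analysis (whether $w$ is in the starting ETS or is created during the expansion), so this is essentially the paper's argument, carefully written out.
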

 
 \begin{proof}
Consider a NETS structure ${\cal S}$ whose subgraph satisfies the condition of the lemma, i.e., there is a tree-like expansion of  $G(\mathcal{S})$ rooted at a degree-$k$ check node  ($k \geq 3$) with $k$ disconnected subgraphs $\mathcal{S}_1,...,\mathcal{S}_k$, as shown in Fig. \ref{discon}. In Fig. \ref{discon}, the degree-$k$ check node at the root (first layer $L_1$) is connected to $k$ variable nodes at layer $L_2$. Those variable nodes are each connected to 
$d_\mathrm{v} - 1$ other check nodes at $L_3$ and so on. It is straightforward to see that a structure with the subgraph of Fig. \ref{discon} cannot be generated through successive applications of $dot_m, m \geq 2$, to an ETS structure ${\cal S}'$. The reason is that to create the check node $w$ (with degree $k \geq 3$) in the process of expansion, there are two possibilities: ($i$) Node $w$ belongs to ${\cal S}'$, or ($ii$) it is added in the expansion process. In Case ($i$), the degree of $w$ in ${\cal S}'$ is either one or two. For the degree of $w$ to be increased to $k$ in the expansion process, through one or more $dot_m$ expansions, one or more variable nodes will have to be added to the subgraph, each with one connection to $w$ and with one or more connection(s) to the other check nodes of the existing (connected) subgraph.  This is in contradiction with the structure in Fig~\ref{discon}, where otherwise disconnected subgraphs $\mathcal{S}_1,...,\mathcal{S}_k$ are only connected through $w$. The proof for Case ($ii$) is similar.
\begin{figure}[] 
\centering
\includegraphics [width=0.45\textwidth]{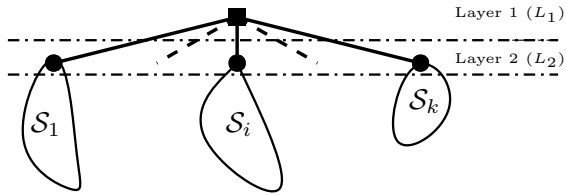}
\caption{A NETS with $k$ disconnected subgraphs.}
\label{discon}
\end{figure}
 \end{proof}

In the following lemma, we investigate the smallest size of NETS structures with induced subgraphs of the form discussed in Lemma~\ref{lem3} and presented in Fig.~\ref{discon}, for different values of $d_\mathrm{v}$, $g$, and $b \leq 4$.
Similar results may be derived for other variable degrees, girths and $b$ values. 

\begin{lem}
\label{lem4}
For variable-regular graphs with $d_\mathrm{v}=3,4,5,6$, $g=6,8,10$, and for $b \leq 4$, the size $a$ of the smallest possible NETS in the $(a,b)$ class containing disconnected subgraphs (as shown in Fig. \ref{discon}) is listed in Table \ref{tab:disc}. 
\end{lem}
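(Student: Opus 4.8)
The plan is to establish Table~\ref{tab:disc} by a constructive lower-bounding argument that parallels the proof of Theorem~\ref{lowf}, but specialized to the disconnected-subgraph topology of Fig.~\ref{discon}. First I would observe that, by Lemma~\ref{lem3}, any NETS $\mathcal{S}$ of the relevant form must have a non-elementary check node $w$ of some degree $k\ge 3$ whose tree-like expansion splits into $k$ subgraphs $\mathcal{S}_1,\dots,\mathcal{S}_k$ that meet only at $w$. The size of $\mathcal{S}$ is therefore $a = 1 + \sum_{i=1}^{k}|\mathcal{S}_i|$, where $|\mathcal{S}_i|$ counts the variable nodes in subgraph $\mathcal{S}_i$ (the ``$1$'' accounts for nothing, since $w$ is a check node — rather, the count is $a=\sum_i |\mathcal{S}_i|$, with each $\mathcal{S}_i$ containing at least the $L_2$ variable node hanging off $w$). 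Each branch $\mathcal{S}_i$ is rooted at a single variable node $v_i \in L_2$ adjacent to $w$, and I would lower-bound $|\mathcal{S}_i|$ using the girth constraint exactly as in Fig.~\ref{treegen}: since $g>4$, the $d_\mathrm{v}-1$ check nodes that $v_i$ feeds into $L_3$ are distinct, their further neighbours in $L_4$ are distinct, and so on down to depth roughly $g/2-1$, with the standard case split on whether $g/2$ is even or odd.

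Next I would handle the bookkeeping of unsatisfied check nodes. The global budget is $b\le 4$ odd-degree check nodes, and $w$ itself contributes $k \bmod 2$ to that budget (so $w$ is unsatisfied iff $k$ is odd). The remaining $b - (k\bmod 2)$ unsatisfied checks are distributed among the $k$ branches. For each branch, the minimum number of variable nodes is a decreasing function of how many unsatisfied checks it is allowed to ``absorb'' at the fringe of its tree expansion — this is precisely the quantity $T = k(d_\mathrm{v}-1)-b'$ mechanism in Theorem~\ref{lowf}, now applied per-branch with $k$ replaced by $1$ (each branch root is a single variable node) or, more naturally, I would re-run the Theorem~\ref{lowf} tree argument with the degree-$k$ root but remembering that the $k$ subtrees cannot share any nodes below $L_2$. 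Because the subgraphs are genuinely disjoint, one cannot reuse a variable node to satisfy check nodes in two different branches, which is exactly what makes these structures larger than a generic TS with a degree-$k$ check node — so the bound obtained should be at least as large as, and generically strictly larger than, the value in~(\ref{ineq1}). I would then minimize over the admissible $k$: for $b>0$ the smallest relevant $k$ is $3$, and for $b=0$ it is $4$ (mirroring Corollary~\ref{cor:nets} and Lemma~\ref{rem:cannot}), and larger $k$ only increases the bound per the monotonicity noted after Theorem~\ref{lowf}.

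Finally, to show the tabulated values are the \emph{exact} smallest sizes (not merely lower bounds), I would exhibit, for each listed triple $(d_\mathrm{v},g,b)$, an explicit NETS whose induced subgraph realizes the disconnected form of Fig.~\ref{discon} and meets the bound — typically by taking $k=3$ (or $k=4$ when $b=0$), attaching to $w$ three (resp.\ four) minimal elementary ``pieces,'' each a short cycle or a short open path terminating in the allotted unsatisfied checks, and verifying the girth is respected. This realizability half is mostly a finite check over the small range $d_\mathrm{v}\in\{3,4,5,6\}$, $g\in\{6,8,10\}$, $b\le 4$, and can be done by instantiating the configurations and counting; Remark~\ref{cor:fargh} is invoked to round up the lower bound by one when $d_\mathrm{v}$ is odd, $b=0$, and the raw count is odd, ensuring the lower and upper bounds coincide.

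The main obstacle I anticipate is the per-branch optimization of the unsatisfied-check allocation together with the girth-forced distinctness of nodes \emph{across} branches: one must argue carefully that spreading the $b' = b-(k\bmod 2)$ unsatisfied checks over the $k$ disjoint branches cannot do better than a particular balanced (or as-unbalanced-as-possible) allocation, and that the depth-$(g/2-1)$ tree in each branch genuinely contributes the claimed number of fresh variable nodes with no cross-branch collisions permitted. Getting the case analysis for $g/2$ odd right — where the last layer is only partially populated and the $\max\{\cdot,\cdot\}$ term in~(\ref{ineq1}) appears — is where the accounting is most delicate, and it is also where one must double-check that the matching construction in the realizability direction actually achieves equality for every entry of Table~\ref{tab:disc}.
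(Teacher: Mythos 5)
Your overall skeleton is the same as the paper's: root the structure at the degree-$3$ check node $w$, view each branch $\mathcal{S}_i$ as a stand-alone trapping set in which $w$ appears as a degree-$1$ unsatisfied check, so that $a=\sum_i a_{\mathcal{S}_i}$ and $b_{\mathcal{S}_1}+b_{\mathcal{S}_2}+b_{\mathcal{S}_3}=b+2$, then optimize the allocation of unsatisfied checks over the branches and match with constructions. The genuine gap is in how you lower-bound each branch. You propose to re-run the Theorem~\ref{lowf}-style girth/tree expansion per branch, but that count is strictly weaker than what is needed to reach the table entries. For instance, for $d_\mathrm{v}=3$, $g=8$, a branch whose only unsatisfied check is the root must have at least $7$ variable nodes (this is the smallest possible $(a,1)$ LETS size, obtained from Table~I of \cite{hashemilower} via a parity/degree argument, not from the tree count), whereas a Theorem~\ref{lowf}-type count (e.g., $k=2$, $b=1$) forces only $a\ge 5$. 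With your per-branch bound the $b=2$ entry would come out near $5+5+6=16$, while any matching construction gives $7+7+6=20$; the two halves never meet, so the exact values in Table~\ref{tab:disc} are not established. The paper closes this by plugging in known smallest sizes per class: the most favorable branch is a cycle-free ETSL ($ETSL_2$, existing only in the $(a,\,a(d_\mathrm{v}-2)+2)$ class, e.g.\ a single variable node absorbing $d_\mathrm{v}$ unsatisfied checks of the branch), and otherwise a minimum-size LETS (justified by Remark~\ref{cor:LET,NET}, sizes from Table~I of \cite{hashemilower}); it then enumerates the few partitions of $b+2$ into positive parts, e.g.\ for $d_\mathrm{v}=3$, $g=8$, $b=3$ the partitions $\{1,2,2\}$ and $\{1,1,3\}$ give $7+6+6=19$ and $1+7+7=15$, hence the entry $15$. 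Your plan never identifies the degenerate single-variable-node branch as the key minimizer in the lower-bound direction, and without per-branch bounds of this sharpness the argument fails to pin down the table.

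A second omission: you do not address the ``not possible'' entries. For even $d_\mathrm{v}$, Lemma~\ref{rem:cannot} forces every branch (itself a TS) to have an even, hence at least $2$, number of unsatisfied checks, so $b+2=\sum_i b_{\mathcal{S}_i}\ge 6$, i.e.\ $b\ge 4$; this is what rules out $b\le 3$ for $d_\mathrm{v}=4,6$. Finally, your appeals to the $b=0$, $k=4$ case and to Remark~\ref{cor:fargh} are moot here, since Table~\ref{tab:disc} only covers $1\le b\le 4$.
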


\begin{table*}[]
\centering
\renewcommand{\arraystretch}{1.1}
\setlength{\tabcolsep}{2pt}
\caption{The smallest size $a$ of NETSs in the $(a,b)$ class ($b \leq 4$) with disconnected subgraphs for Tanner graphs with $d_\mathrm{v}=3,4,5,6$, $g=6,8,10$}
\label{tab:disc}
\begin{tabular}{||c|c|c|c||c|c|c||c|c|c||c|c|c||}
\cline{1-13}
&\multicolumn{3}{c||}{$d_\mathrm{v}=3$}&\multicolumn{3}{c||}{$d_\mathrm{v}=4$}&\multicolumn{3}{c||}{$d_\mathrm{v}=5$}&\multicolumn{3}{c||}{$d_\mathrm{v}=6$}\\
\cline{2-13}
&$g=6$&$g=8$&$g=10$&$g=6$&$g=8$&$g=10$&$g=6$&$g=8$&$g=10$&$g=6$&$g=8$&$g=10$\\
\cline{1-13}
$b=1$&$\begin{array}{@{}c@{}}15 \end{array} $&$\begin{array}{@{}c@{}}21 \end{array} $&$\begin{array}{@{}c@{}}27 \end{array} $&\multicolumn{3}{c||}{not possible}&$\begin{array}{@{}c@{}}21  \end{array} $&$\begin{array}{@{}c@{}}39 \end{array} $&$\begin{array}{@{}c@{}}93  \end{array} $&\multicolumn{3}{c||}{not possible}\\
\cline{1-13}
$b=2$&$\begin{array}{@{}c@{}} 14 \end{array} $&$\begin{array}{@{}c@{}} 20 \end{array} $&$\begin{array}{@{}c@{}} 26 \end{array} $&\multicolumn{3}{c||}{not possible}&$\begin{array}{@{}c@{}}20  \end{array} $&$\begin{array}{@{}c@{}}38  \end{array} $&$\begin{array}{@{}c@{}}92  \end{array} $&\multicolumn{3}{c||}{not possible}\\
\cline{1-13}
$b=3$&$\begin{array}{@{}c@{}} 11 \end{array} $&$\begin{array}{@{}c@{}} 15 \end{array} $&$\begin{array}{@{}c@{}}19 \end{array} $&\multicolumn{3}{c||}{not possible}&$\begin{array}{@{}c@{}}19  \end{array} $&$\begin{array}{@{}c@{}}37  \end{array} $&$\begin{array}{@{}c@{}}91  \end{array} $&\multicolumn{3}{c||}{not possible}\\
\cline{1-13}
$b=4$&$\begin{array}{@{}c@{}} 10 \end{array} $&$\begin{array}{@{}c@{}} 14 \end{array} $&$\begin{array}{@{}c@{}} 18 \end{array} $&$\begin{array}{@{}c@{}}15  \end{array} $&$\begin{array}{@{}c@{}} 24  \end{array} $&$\begin{array}{@{}c@{}}57  \end{array} $&$\begin{array}{@{}c@{}}18  \end{array} $&$\begin{array}{@{}c@{}}36  \end{array} $&$\begin{array}{@{}c@{}}90  \end{array} $&$\begin{array}{@{}c@{}}21  \end{array} $&$\begin{array}{@{}c@{}}36 \end{array} $&$\begin{array}{@{}c@{}}90  \end{array} $\\
\cline{1-13}
\end{tabular}
\end{table*} 

\begin{proof}

Suppose that ${\cal S}^*$ is the smallest NETS structure with disconnected subgraphs. Based on Corollary \ref{cor:nets}, the structure ${\cal S}^*$ contains a degree-$3$ check node at $L_1$. 
Consider each subgraph $\mathcal{S}_1$,  $\mathcal{S}_2$ and $\mathcal{S}_3$ as a TS containing the degree-3 check node at $L_1$ as a degree-$1$ (unsatisfied) check node (Fig. \ref{discon}). 
Let  $a_{\mathcal{S}_i}$ and $b_{\mathcal{S}_i}$ be the size and the number of unsatisfied check nodes of $\mathcal{S}_i$, respectively. 
Clearly, $b_{\mathcal{S}_i}>0$. We also have

\begin{align}
 b_{\mathcal{S}_1}+b_{\mathcal{S}_2}+b_{\mathcal{S}_3}= b+2, 
 \label{ineq2}
\end{align}

For ${\cal S}^*$ to have the smallest size, we look for $\mathcal{S}_i$s with smallest $a_{\mathcal{S}_i}$ whose $b_{\mathcal{S}_i}$ values satisfy $b_{\mathcal{S}_i}>0$, and the constraint (\ref{ineq2}).  
It is easy to see that the most favorable candidate for TSs $\mathcal{S}_i$s is an ETSL with no cycle. These structures  are denoted by $ETSL_2$ in~\cite{hashemiireg}, and exist only in the $(a, b= a(d_\mathrm{v}-2)+2)$ class.
If an $ETSL_2$ structure is not possible (due to the specific choice of $b_{\mathcal{S}_i}$), then based on Remark \ref{cor:LET,NET}, a LETS structure is the next favorable choice. 
To find the size of ${\cal S}^*$, therefore, one needs to consider all the possible combinations of positive integers $b_{\mathcal{S}_1}$, $b_{\mathcal{S}_2}$ and $b_{\mathcal{S}_3}$ that satisfy (\ref{ineq2}), 
and for each combination finds the smallest values of $a_{\mathcal{S}_i}$s using the aforementioned guidelines.
In the following, we prove the result for the case of  $d_\mathrm{v}=3$, $g=8$ and $b= 2,3$. The proof for the other cases listed in Table~\ref{tab:disc} is similar.

For $b=2$, using (\ref{ineq2}), we have $ b_{\mathcal{S}_1}+b_{\mathcal{S}_2}+b_{\mathcal{S}_3}=4$. The only positive integers satisfying this equality are $\{1,1,2\}$. Since, for $d_\mathrm{v}=3$, there does not exist any $ETSL_2$ with 
$b < 3$, then we look for LETS structures of minimum size with these $b$ values. From Table I in \cite{hashemilower}, the size of the smallest LETSs with  $b_{\mathcal{S}_i}=1$ and $b_{\mathcal{S}_i}=2$ is $7$ and $6$, respectively.
We thus conclude that the size of ${\cal S}^*$ is $7 + 7 + 6 = 20$.

For $b=3$, we have $b_{\mathcal{S}_1}+b_{\mathcal{S}_2}+b_{\mathcal{S}_3}=5$. The only solutions to this equation are $\{1,2,2\}$ and $\{1,1,3\}$. Again for the first set, no $ETSL_2$ structure exists, and based on LETS structures of minimum size, we obtain $7+6+6=19$ as the size of the corresponding NETS structure. For the second set of $b_{\mathcal{S}_i}$ values, we select an $ETSL_2$ structure for the $b_{\mathcal{S}_i}$ value $3$. This corresponds to $a_{\mathcal{S}_i}=1$.
For the other two TSs, the minimum size LETS structures have size $7$, and thus the size of the corresponding NETS structure in this case is $1+7+7=15$. Since $15$ is the smaller value between $19$ and $15$, it is in fact the size of ${\cal S}^*$. 

To obtain the entries in Table~\ref{tab:disc} for even values of $d_\mathrm{v}$, one should note that based on Lemma \ref{rem:cannot}, it is not possible to have a TS with even $d_\mathrm{v}$ and odd $b$. 
Therefore, for even values of $d_\mathrm{v}$, the minimum value of $b_{\mathcal{S}_i}$ is $2$, and in (\ref{ineq2}), the smallest value of $b$ for NETS structures under consideration is strictly larger than $3$.
\end{proof}

In the following, we investigate the parent-child relationships between ETSs and NETSs based on $dot_m$ expansions. Since the NETS structures discussed in Lemmas~\ref{lem3} and~\ref{lem4} are 
excluded, in the rest of the paper, we use the expression ``interest range of $a$ and $b$'' or ``$(a,b)$ class of interest'' to mean the $b$ values that satisfy $b \leq 4$, and for a given $b$ value in this range, the value of $a$ being strictly less than 
the entry provided in Table \ref{tab:disc}.
 
\begin{pro}
\label{lem:nets3}
Any  NETS structure $\mathcal{S}$ of variable-regular graphs with variable degree $d_\mathrm{v}$ in an $(a,b)$ class of interest, containing only one degree-$3$ check node (the rest of the check nodes have degree $2$ or $1$) 
can be characterized by the application of a $dot_m$ expansion ($2 \leq m \leq d_\mathrm{v}$) to an ETS substructure, $\mathcal{S}'$,  in the $(a-1,b+2m-2-d_\mathrm{v})$ class, where $m$ is number of edges connecting  
the variable node in $\mathcal{S} / \mathcal{S}'$  to one degree-$2$ and $m-1$ degree-$1$ check nodes of $\mathcal{S}'$.
\end{pro}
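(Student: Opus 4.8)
The plan is to realize the ETS substructure $\mathcal{S}'$ as $\mathcal{S}$ with exactly one variable node removed — specifically one of the three variable neighbours of the unique degree-$3$ check node $w$. Let $v_1,v_2,v_3$ be the variable nodes adjacent to $w$, choose $v\in\{v_1,v_2,v_3\}$ (to be pinned down below), and set $\mathcal{S}'=\mathcal{S}\setminus\{v\}$, so $|\mathcal{S}'|=a-1$. Removing $v$ drops the degree of $w$ from $3$ to $2$ and can only decrease the degrees of the other check nodes incident to $v$; since every check node of $G(\mathcal{S})$ other than $w$ already had degree at most $2$, every check node of $G(\mathcal{S}')$ has degree $1$ or $2$, i.e. $\mathcal{S}'$ is an ETS (leaves are allowed). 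Conversely, $\mathcal{S}$ is recovered from $\mathcal{S}'$ by a single $dot_m$ expansion: $v$ is joined to the degree-$2$ check $w$ of $\mathcal{S}'$ and to the $m-1$ further check nodes of $\mathcal{S}'$ it touches; these are precisely the neighbours of $v$ that had degree $2$ in $\mathcal{S}$ (hence appear as degree-$1$ checks of $\mathcal{S}'$), while the remaining $d_\mathrm{v}-m$ checks incident to $v$ are degree-$1$ checks of $\mathcal{S}$ that vanish when $v$ is deleted. Thus the new variable is attached to one degree-$2$ and $m-1$ degree-$1$ check nodes of $\mathcal{S}'$, with $m\le d_\mathrm{v}$ automatic.

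For the parameter bookkeeping I would simply track how $b$ changes under deletion of $v$: $w$ turns from unsatisfied to satisfied ($-1$); each of the $m-1$ degree-$2$ neighbours of $v$ becomes a degree-$1$ check ($+1$ each); each of the $d_\mathrm{v}-m$ degree-$1$ neighbours of $v$ disappears ($-1$ each). Summing gives $b'=b-1+(m-1)-(d_\mathrm{v}-m)=b+2m-2-d_\mathrm{v}$, so $\mathcal{S}'$ lies in the $(a-1,\,b+2m-2-d_\mathrm{v})$ class. As a consistency check, applying Lemma~\ref{pro:dot} to $\mathcal{S}'$ with the new variable joined to $p=1$ satisfied check and $q=m-1$ unsatisfied checks returns the class $(a,\,b'+d_\mathrm{v}-2(m-1))=(a,b)$, as it should.

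The substance of the proof is choosing $v$ so that (i) $m\ge 2$, i.e. $v$ shares a degree-$2$ check of $\mathcal{S}$ with the rest of $G(\mathcal{S})$, and (ii) $\mathcal{S}'$ is connected. I would argue by contradiction: suppose that for every $i\in\{1,2,3\}$ the node $v_i$ violates (i) or (ii). Examining the components of $G(\mathcal{S})\setminus\{w\}$, one shows this forces the tree-like expansion of $G(\mathcal{S})$ rooted at $w$ to decompose into three subgraphs whose only common node is $w$, i.e. the configuration of Fig.~\ref{discon} (a node $v_i$ with $m_i=1$ is a dead-end branch at $w$ beyond which all checks have degree $1$; a node whose removal disconnects $\mathcal{S}$ carries a branch that is attached to the rest of $G(\mathcal{S})$ only through $w$). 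But by Lemma~\ref{lem4} (Table~\ref{tab:disc}) such a structure has size at least the tabulated entry, whereas $\mathcal{S}$ was assumed to lie in the $(a,b)$ class of interest, where $a$ is strictly below that entry — a contradiction. Hence some $v_i$ satisfies (i) and (ii), and the construction of the previous paragraphs applies.

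I expect the choice-of-$v$ step to be the main obstacle. The two failure modes ($m_i=1$ versus $G(\mathcal{S})\setminus\{v_i\}$ disconnected) interact, and one must verify carefully that excluding all three $v_i$ genuinely forces a Fig.~\ref{discon}-type decomposition — in particular handling the case in which two of $v_1,v_2,v_3$ lie in the same component of $G(\mathcal{S})\setminus\{w\}$, which creates a cycle through $w$ — so that the size bound of Lemma~\ref{lem4} and the class-of-interest hypothesis can be invoked. Everything else (the ETS property of $\mathcal{S}'$, the edge-type count determining $m$, and the class computation) follows directly from the fact that $w$ is the only check node of degree $\ge 3$, together with Lemma~\ref{pro:dot}.
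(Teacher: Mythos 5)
Your proposal follows the same route as the paper's proof: delete one variable neighbour $v$ of the unique degree-$3$ check $w$, observe that the remainder $\mathcal{S}'$ is a connected ETS, recover $\mathcal{S}$ by a single $dot_m$ expansion that attaches $v$ to $w$ (now a degree-$2$ check of $\mathcal{S}'$) and to $m-1$ degree-$1$ checks, and obtain the class of $\mathcal{S}'$ via Lemma~\ref{pro:dot} with $p=1$; the paper likewise invokes the class-of-interest assumption (through Lemma~\ref{lem3} and Table~\ref{tab:disc}) to guarantee that a suitable neighbour exists, by picking two variable nodes at $L_2$ joined by a path avoiding $w$ and removing one of them. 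Your edge-type count for $m$ and the $b$-bookkeeping $b'=b-1+(m-1)-(d_\mathrm{v}-m)=b+2m-2-d_\mathrm{v}$ are correct and consistent with Lemma~\ref{pro:dot}.

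The one place where your sketch does not quite hold up is exactly the step you flagged. The parenthetical claim that ``a node whose removal disconnects $\mathcal{S}$ carries a branch that is attached to the rest of $G(\mathcal{S})$ only through $w$'' is not true in general: if $v_i$ is a cut vertex of $G(\mathcal{S})$, the hanging branch is attached to the rest through $v_i$ itself, not through $w$. Consequently, ``all three neighbours of $w$ fail (i) or (ii)'' does not literally force the $k$-fold decomposition of Fig.~\ref{discon}; for instance all three $v_i$ could lie in one component of $G(\mathcal{S})\setminus\{w\}$ while each is a cut vertex, and Lemma~\ref{lem4}/Table~\ref{tab:disc} says nothing directly about that configuration. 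Closing this requires an additional size estimate showing that such ``every neighbour of $w$ is a cut vertex'' structures (each bad $v_i$ carries a disjoint hanging branch, and parity/girth force each branch to be either large or to spend several unsatisfied checks) are also too large to fit in the interest range. To be fair, the paper's own proof is no more detailed at this exact point: it simply asserts that $\mathcal{S}'$ remains connected after removing $v_1$, without addressing the possibility that $v_1$ is a cut vertex. So your approach matches the paper's and everything except the choice-of-$v$ step is complete; the contradiction argument as sketched, however, would need to be repaired along the lines above rather than by an appeal to Fig.~\ref{discon} alone.
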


\begin{proof}
The structure $\mathcal{S}$ contains only one check node $w$ of degree $3$.  We consider the tree-like expansion of $\mathcal{S}$ from $w$ as the root at $L_1$.
Based on the knowledge that this expansion of $\mathcal{S}$ does not consist of disconnected subgraphs as shown in Fig.~\ref{discon}, there must exist two variable nodes, say $v_1$ and $v_2$ at $L_2$ that are connected through a path that does not pass through $w$. Now, consider removing one of these two variable nodes, say $v_1$, and all its incident edges from $\mathcal{S}$. The remaining graph $\mathcal{S}'$ is still connected and has no check node with degree larger than 2, i.e., $\mathcal{S}'$ is an ETS. It is easy to see that $\mathcal{S}$ can be obtained by expanding $\mathcal{S}'$ by $v_1$ through a $dot_m$ ($2 \leq m \leq d_\mathrm{v}$) expansion. 
The class of $\mathcal{S}'$ can be obtained by using Lemma \ref{pro:dot}, assuming $p=1$.
\end{proof}

The following corollary describes the exhaustive search of NETSs with only one degree-$3$ check node.
 
\begin{cor}
\label{cor:nets3}
In variable-regular Tanner graphs with variable degree $d_\mathrm{v}$, all the $(a,b)$ NETSs containing only one degree-$3$ check node in the interest range of $a \leq a_{\max}$ and $b \leq b_{\max}$ ($a_{\max}$ less than the value in Table \ref{tab:disc} and $b_{max} \leq 4$) can be found by applying $dot_m$  expansions to all the ETSs in the range of $a \leq a_{\max}-1$ and $b \leq b'_{\max}= b_{\max}+d_\mathrm{v}-2$.
\end{cor}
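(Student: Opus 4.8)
The plan is to derive the corollary almost immediately from Proposition~\ref{lem:nets3}; the only extra work is the bookkeeping of the parameter ranges and an appeal to the exhaustiveness of the $dpl$-based ETS search of~\cite{hashemireg},~\cite{hashemiireg}. First I would fix an arbitrary $(a,b)$ NETS $\mathcal{S}$ having exactly one degree-$3$ check node, with $a \le a_{\max}$ and $b \le b_{\max} \le 4$. Since $a_{\max}$ is chosen below the corresponding entry of Table~\ref{tab:disc}, the pair $(a,b)$ lies in the ``$(a,b)$ class of interest'', so Lemmas~\ref{lem3} and~\ref{lem4} guarantee that the tree-like expansion of $G(\mathcal{S})$ rooted at the degree-$3$ check node is \emph{not} of the disconnected form of Fig.~\ref{discon}. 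Hence the hypotheses of Proposition~\ref{lem:nets3} are met, and $\mathcal{S}$ is obtained from some ETS substructure $\mathcal{S}'$ in the $(a-1,\,b+2m-2-d_\mathrm{v})$ class by a single $dot_m$ expansion ($2 \le m \le d_\mathrm{v}$), where the variable node in $\mathcal{S}/\mathcal{S}'$ is joined to one satisfied and $m-1$ unsatisfied check nodes of $\mathcal{S}'$.

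Next I would bound the parameters of the parent $\mathcal{S}'$. Its size is $a-1 \le a_{\max}-1$. For its number of unsatisfied check nodes $b' = b+2m-2-d_\mathrm{v}$, I would use $m \le d_\mathrm{v}$ together with $b \le b_{\max}$ to obtain $b' \le b_{\max}+d_\mathrm{v}-2 = b'_{\max}$; the inequality $b' \ge 0$ is automatic because $b'$ counts check nodes. Therefore every parent $\mathcal{S}'$ lies in the range $a \le a_{\max}-1$, $b \le b'_{\max}$, which the ETS search of~\cite{hashemireg},~\cite{hashemiireg} is guaranteed to enumerate exhaustively. Conversely, by Lemma~\ref{pro:dot} each $dot_m$ expansion of an ETS yields a NETS, so running all $dot_m$ expansions ($2 \le m \le d_\mathrm{v}$) over this complete ETS list and retaining exactly the outputs that fall in the target $(a,b)$ window and contain a single degree-$3$ check node cannot omit any such $\mathcal{S}$. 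This proves the claim.

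The main point to be careful about is not an estimate but the integrity of the completeness chain: one must be sure that \emph{every} qualifying NETS genuinely admits an ETS ancestor of the stated type. This is precisely where the interest-range hypothesis enters through Lemmas~\ref{lem3}--\ref{lem4} — outside that range, NETSs of the Fig.~\ref{discon} type (which have no ETS ancestor under $dot$ expansions) appear — and it is also where the bound $m \le d_\mathrm{v}$ on the number of edges of a $dot$ expansion is used to keep $b'$ from exceeding $b'_{\max}$. I would also remark, as noted in Section~\ref{sec:searETS}, that the ETS search subroutine may itself need to cover auxiliary structures with $b$ beyond $b'_{\max}$ in order to be exhaustive up to $b'_{\max}$; this is an internal property of that subroutine and does not alter the range $b \le b'_{\max}$ claimed here.
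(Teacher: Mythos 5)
Your argument is correct and follows essentially the paper's route: the corollary is stated there without a separate proof, being the immediate parameter-bookkeeping consequence of Proposition~\ref{lem:nets3} (parent size $a-1 \le a_{\max}-1$, parent class $b+2m-2-d_\mathrm{v} \le b_{\max}+d_\mathrm{v}-2$ attained at $m=d_\mathrm{v}$), with the interest-range hypothesis invoked exactly as you do to rule out the disconnected structures of Lemmas~\ref{lem3}--\ref{lem4}. The only small slip — asserting via Lemma~\ref{pro:dot} that every $dot_m$ expansion of an ETS yields a NETS, which holds only when at least one edge attaches to a degree-$2$ check node ($p>0$) — is immaterial, since exhaustiveness only requires that every target NETS have an ETS ancestor in the stated range, which you established.
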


\begin{pro}
\label{lem:nets33}
Any NETS structure $\mathcal{S}$  in an interest class of $(a,b)$ for variable-regular graphs with variable degree $d_\mathrm{v}$, that contains two degree-$3$ check nodes (the rest are degree-2 or -1) 
can be characterized by a $dot_m$ expansion ($2 \leq m \leq d_\mathrm{v}$) applied to one of the two following substructures $\mathcal{S}'$:
(i)  an ETS in the $(a-1,b+2m-4-d_\mathrm{v})$ class,  where from $m$ edges connecting the variable node in $\mathcal{S} / \mathcal{S}'$  to $\mathcal{S}'$, two and $m-2$ are connected to degree-$2$ and degree-$1$ check nodes of $\mathcal{S}'$, respectively; or 
(ii) a NETS containing one degree-$3$ check node in the $(a-1,b+2m-2-d_\mathrm{v})$ class, where from $m$ edges connecting the variable node in $\mathcal{S} / \mathcal{S}'$  to $\mathcal{S}'$, one and $m-1$ are connected to degree-$2$ and degree-$1$ check nodes of $\mathcal{S}'$, respectively.
\end{pro}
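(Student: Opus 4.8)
The plan is to mirror the argument of Proposition~\ref{lem:nets3}, but now tracking two degree-$3$ check nodes instead of one, and to split the analysis according to whether the variable node removed in the reverse ``un-$dot$'' step lies above one or both of the degree-$3$ check nodes. Let $\mathcal{S}$ be a NETS in an $(a,b)$ class of interest containing exactly two degree-$3$ check nodes $w_1$ and $w_2$, with all other check nodes of degree $1$ or $2$. First I would fix the tree-like expansion of $G(\mathcal{S})$ rooted at $w_1$ (say) and invoke Lemma~\ref{lem3}/Lemma~\ref{lem4}: since $(a,b)$ is in the interest range, $\mathcal{S}$ is not one of the excluded ``disconnected-subgraph'' structures, so the three variable-node neighbours of $w_1$ at $L_2$ are not separated by $w_1$ alone; hence there exist two of them, $v_1,v_2$, joined by a path avoiding $w_1$. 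The same holds with $w_2$ as root.

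Next I would pick the variable node $v$ to remove. The idea is to remove a variable node adjacent to (at least) one of the degree-$3$ check nodes so that deleting $v$ drops that check node's degree to $2$. Case~(i): there is a choice of $v$ adjacent to \emph{both} $w_1$ and $w_2$; then removing $v$ and its $d_\mathrm{v}$ incident edges lowers the degree of both $w_1$ and $w_2$ to $2$, so the remainder $\mathcal{S}'$ has no check node of degree exceeding $2$ and is therefore an ETS. The $m$ edges joining $v$ back to $\mathcal{S}'$ consist of two that land on degree-$2$ check nodes of $\mathcal{S}'$ (the former $w_1,w_2$, now degree $2$) and $m-2$ on degree-$1$ check nodes, so $p=2$ in the bookkeeping of Lemma~\ref{pro:dot} and $\mathcal{S}'$ lies in the $(a-1,\,b+2m-4-d_\mathrm{v})$ class — exactly item~(i). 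Case~(ii): no such common neighbour exists; then I remove a variable node $v$ adjacent to exactly one of them, say $w_2$, and I must argue such a $v$ can be chosen so that $\mathcal{S}'=\mathcal{S}\setminus\{v\}$ remains connected. This is where the no-disconnected-subgraph hypothesis does the work again: because $w_1$ alone does not disconnect the $L_2$ neighbours, deleting a variable node above $w_2$ (but not above $w_1$) cannot disconnect the graph — there is still a path among $w_1$'s neighbours, and $v$ was a leaf-like degree-$3$ attachment to $w_2$. The remainder $\mathcal{S}'$ then has $w_1$ still of degree $3$ and every other check node of degree $\le 2$, i.e.\ it is a NETS with exactly one degree-$3$ check node; the $m$ reconnecting edges hit one degree-$2$ check node (the former $w_2$) and $m-1$ degree-$1$ check nodes, so $p=1$ and, by Lemma~\ref{pro:dot}, $\mathcal{S}'$ is in the $(a-1,\,b+2m-2-d_\mathrm{v})$ class — exactly item~(ii). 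In both cases the range $2\le m\le d_\mathrm{v}$ follows as in Proposition~\ref{lem:nets3}, since a $dot_m$ expansion by definition connects the new variable node to between $2$ and $d_\mathrm{v}$ check nodes.

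The step I expect to be the main obstacle is the connectivity guarantee in Case~(ii): one has to be sure that among all variable nodes adjacent to $w_2$ but not to $w_1$ there is one whose removal does not break $\mathcal{S}$ into pieces, and to rule out that \emph{every} such removal disconnects the graph — which would force $\mathcal{S}$ back toward a Fig.~\ref{discon}-type structure and contradict the interest-range assumption. I would handle this by a careful case analysis on how $w_2$'s three neighbours are interconnected, using that $G(\mathcal{S})$ is connected, that $w_1$ is the \emph{only} other degree-$3$ node, and that the excluded structures of Lemmas~\ref{lem3}–\ref{lem4} are precisely the ones in which a degree-$3$ check node acts as the sole cut vertex. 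A minor secondary point is confirming that Cases~(i) and~(ii) are exhaustive, i.e.\ that every $\mathcal{S}$ of the stated type falls into at least one of them; this is immediate, since either some variable node is adjacent to both $w_1$ and $w_2$ or none is.
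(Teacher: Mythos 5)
Your case split is driven by the wrong criterion, and that is where the argument breaks. In the paper's proof the variable node to be removed is chosen for \emph{connectivity} reasons: rooting the tree-like expansion at one degree-$3$ check node $w$, the interest-range assumption (exclusion of the Fig.~\ref{discon}-type structures of Lemmas~\ref{lem3}--\ref{lem4}) yields two variable nodes $v_1,v_2$ at $L_2$ joined by a path avoiding $w$, and it is one of \emph{these} nodes that gets deleted, so that $\mathcal{S}'$ stays connected; whether the parent is an ETS (item (i)) or a NETS with one degree-$3$ check node (item (ii)) is then decided by whether the second degree-$3$ check node happens to sit at $L_3$ and be adjacent to the deleted node. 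You instead pick the deleted node by its adjacency pattern (a common neighbour of $w_1,w_2$ if one exists, otherwise a neighbour of exactly one of them) and never secure connectivity of $\mathcal{S}'$. In your Case~(i) the issue is not even mentioned: a common neighbour $v$ of $w_1$ and $w_2$ can perfectly well be a cut vertex of $G(\mathcal{S})$ (its remaining edges can lead to a piece of the subgraph attached to the rest only through $v$), and such configurations are \emph{not} excluded by the interest-range assumption, since Lemmas~\ref{lem3}--\ref{lem4} only exclude structures in which a \emph{check} node of degree at least $3$ is the sole connection between subgraphs; they say nothing about variable-node cut vertices. In that situation your removal does not yield a valid (connected) ETS parent, and your dichotomy ``common neighbour exists $\Rightarrow$ item (i)'' is not even the right one---the valid parent may then be of type (ii), obtained by deleting a different variable node.

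In Case~(ii) you yourself flag connectivity as the main obstacle, and the sketch you give does not close it: the fact that $w_1$'s $L_2$ neighbours are joined by a path avoiding $w_1$ says nothing about whether a neighbour of $w_2$ (not adjacent to $w_1$) is a cut vertex, and a variable node of degree $d_\mathrm{v}$ is not a ``leaf-like attachment.'' The missing idea is precisely the paper's device: let the interest-range assumption select which node to delete (one of the two $L_2$ neighbours of the root connected around it), establish connectivity once from that choice, and then let the location of the second degree-$3$ check node (adjacent to the deleted node at $L_3$, or not) dictate whether you land in (i) or (ii), with the class bookkeeping following from Lemma~\ref{pro:dot} with $p=2$ or $p=1$, respectively. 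Your class computations and the range $2 \leq m \leq d_\mathrm{v}$ are fine, but without the connectivity step the characterization itself is not established.
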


\begin{proof}
Consider the expansion of the NETS structure starting from one of the degree-$3$ check nodes $w$ at $L_1$. In the expansion, the other
degree-$3$ check node is located either at $L_3$ or at $L_{2i+1}$, where $i>1$. With an argument similar to the one presented in the proof of Proposition \ref{lem:nets3}, there exist two variable nodes $v_1$ and $v_2$ 
at $L_2$ such that there is a path between them that does not pass through $w$. Therefore, by removing one of these two variable nodes, say, $v_1$, and all its incident edges from ${\cal S}$, the resulted subgraph ${\cal S}'$ remains connected. 
Now if the second degree-$3$ check node was at $L_3$ and connected to $v_1$, then there remains no check node with degree larger than $2$ after the removal of $v_1$, i.e., the subgraph $\mathcal{S}'$ is an ETS.
On the other hand, if the other degree-$3$ check node was at $L_3$ but not connected to $v_1$ or it was at $L_{2i+1}$ with $i>1$, then the resulted subgraph $\mathcal{S}'$ is a NETS containing one degree-$3$ check node. 
In either case, structure $\mathcal{S}$ can be obtained by applying a $dot_m$ expansion ($2 \leq m \leq d_\mathrm{v}$) to $\mathcal{S}'$. The class of $\mathcal{S}'$ can be determined in each case by using Lemma~\ref{pro:dot}, assuming $p=2$ and $p=1$, respectively.
\end{proof}

The following result is a generalization of Proposition~\ref{lem:nets33}.

\begin{pro}
\label{lem:netsf}
Any NETS structure $\mathcal{S}$  in an interest class of $(a,b)$ for variable-regular graphs with variable degree $d_\mathrm{v}$, that contains $f \geq 2$ degree-$3$ check nodes (the rest are degree-$2$ or -$1$) 
can be characterized by a $dot_m$ expansion ($2 \leq m \leq d_\mathrm{v}$) applied to one of the $\eta = \min\{m,f\}$ following substructures $\mathcal{S}'$: For any value of $p$ in the range $1 \leq p \leq \eta$, substructure
$\mathcal{S}'$ is in the $(a-1,b+2(m-p)-d_\mathrm{v})$ class,  where from $m$ edges connecting the variable node in $\mathcal{S} / \mathcal{S}'$  to $\mathcal{S}'$, $p$ and $m-p$ are connected to degree-$2$ and degree-$1$ check nodes of $\mathcal{S}'$, respectively.
\end{pro}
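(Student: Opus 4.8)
The plan is to generalize the argument of Proposition~\ref{lem:nets33} by induction on the number $f$ of degree-$3$ check nodes, or equivalently to argue directly by a careful accounting of how many degree-$3$ check nodes can be ``resolved'' by the removal of a single variable node. First I would set up the tree-like expansion of the NETS $\mathcal{S}$ rooted at one of its degree-$3$ check nodes $w$ at $L_1$. Since $\mathcal{S}$ lies in an interest class, Lemma~\ref{lem4} (Table~\ref{tab:disc}) guarantees that the expansion is \emph{not} of the disconnected form of Fig.~\ref{discon}; hence among the $k=3$ variable-node children of $w$ at $L_2$ there exist two, say $v_1$ and $v_2$, joined by a path avoiding $w$. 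Removing $v_1$ and its incident edges therefore leaves a connected subgraph $\mathcal{S}'$, and $\mathcal{S}$ is recovered from $\mathcal{S}'$ by a single $dot_m$ expansion with $2 \leq m \leq d_\mathrm{v}$, where $v_1$ is the new variable node and $m = d_{v_1}$ in $\mathcal{S}$ (recall each variable node in a variable-regular graph has degree $d_\mathrm{v}$, but $v_1$ may connect to fewer than $d_\mathrm{v}$ check nodes of $\mathcal{S}'$; more carefully, $m$ is the number of its edges to $\mathcal{S}'$, all of which are present because $v_1$ has no other neighbours).

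The heart of the argument is the bookkeeping on degree-$3$ check nodes. Each of the $m$ check nodes that $v_1$ is attached to in $\mathcal{S}$ either had degree $1$ in $\mathcal{S}'$ (in which case $v_1$ raised it to degree $2$) or had degree $2$ in $\mathcal{S}'$ (in which case $v_1$ raised it to degree $3$). Write $p$ for the number of the latter; then $p$ of the degree-$3$ check nodes of $\mathcal{S}$ are ``created'' by the expansion, and these are exactly the neighbours of $v_1$ with prior degree $2$. Consequently $\mathcal{S}'$ contains $f - p$ degree-$3$ check nodes, all other check nodes of $\mathcal{S}'$ having degree $1$ or $2$. Since $\mathcal{S}'$ has at least one fewer variable node than $\mathcal{S}$ and has $f-p \geq 0$ degree-$3$ check nodes, $\mathcal{S}'$ is an ETS when $p = f$ and a NETS with $f-p$ degree-$3$ check nodes otherwise. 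The constraint $1 \leq p$ holds because the removed $v_1$ is a child of the degree-$3$ root $w$, so at least one edge of $v_1$ lands on a check node of $\mathcal{S}'$ of degree $\geq 2$ — namely, the path from $v_1$ to $v_2$ through $L_2$-neighbours forces $v_1$'s edges into already-incident check nodes — giving $p \geq 1$; and $p \leq f$ trivially since only $f$ degree-$3$ check nodes exist in $\mathcal{S}$, while $p \leq m$ since $v_1$ has only $m$ edges, yielding $p \leq \eta = \min\{m,f\}$. Finally, the class of $\mathcal{S}'$ follows from Lemma~\ref{pro:dot}: with $m = p + q$ (here $q = m-p$ edges to unsatisfied check nodes in the present accounting is replaced by the degree-$1$/degree-$2$ split), the $dot_m$ expansion takes $\mathcal{S}'$ in class $(a-1, b')$ to $\mathcal{S}$ in class $(a, b)$ with the prescribed shift, giving $b' = b + 2(m-p) - d_\mathrm{v}$.

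The step I expect to be the main obstacle, and where care is most needed, is pinning down exactly which values of $p$ in the range $1 \leq p \leq \eta$ are \emph{achievable} — that is, showing that for \emph{every} such $p$ there is a valid choice of $v_1$ to remove (or, dually, that the characterization need only assert ``one of'' these $\eta$ substructures rather than all of them). The subtlety is that the two $L_2$-children $v_1, v_2$ on a $w$-avoiding path are not uniquely determined, and different choices of the removed node can resolve different numbers of degree-$3$ check nodes; one must verify that the set of attainable $p$ values is exactly an initial segment $\{1,\dots,\eta\}$, which requires examining where the $f$ degree-$3$ check nodes sit relative to $v_1$ in the expansion (at $L_3$ and adjacent to $v_1$, at $L_3$ but not adjacent, or at deeper odd layers), exactly as in the two-case split of Proposition~\ref{lem:nets33}. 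A clean way to finish is to note that the statement only claims $\mathcal{S}$ ``can be characterized by a $dot_m$ expansion applied to one of the $\eta$ substructures,'' so it suffices to exhibit a single admissible $(v_1, p)$ pair — and the connectivity argument above already produces one with $p \geq 1$ — while the enumeration over all $p \in \{1,\dots,\eta\}$ is what the search algorithm must try in order to be exhaustive. I would make this distinction explicit, then close by invoking Lemma~\ref{pro:dot} for the class computation and Corollary~\ref{cor:nets} together with Remark~\ref{cor:LET,NET} to confirm that the relevant ETS/NETS substructures indeed lie within the searchable interest range.
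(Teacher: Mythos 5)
Your proposal is correct and follows essentially the same route the paper intends: the paper omits a proof for this proposition, presenting it as the direct generalization of Proposition~\ref{lem:nets33}, and your argument (tree-like expansion rooted at a degree-$3$ check node, exclusion of the disconnected configuration of Fig.~\ref{discon} in the interest range, removal of one of two $L_2$ variable nodes joined by a $w$-avoiding path, and the class computation via Lemma~\ref{pro:dot} with $p$ counting the resolved degree-$3$ check nodes) is exactly that generalization. Your closing observation that the statement only requires exhibiting one admissible $(v_1,p)$ pair, with the enumeration over $p\in\{1,\dots,\eta\}$ left to the search, matches the paper's intent; note only that $p\geq 1$ follows most directly from the edge $v_1w$ itself, since $w$ has degree $2$ in $\mathcal{S}'$.
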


Based on the above results, it is easy to see that a NETS structure with $f$ degree-$3$ check nodes can be generated through successive applications of $dot_m$ expansions to ETS structures. 
For this to correspond to an exhaustive search of such NETSs, the following corollary, that generalizes Corollary~\ref{cor:nets3}, provides the range of ETSs that need to be included.
%
%

\begin{cor}
\label{cor:netsf}
In variable-regular Tanner graphs with variable degree $d_\mathrm{v}$, all the NETSs containing $f$ degree-$3$ check nodes (the rest are degree-$2$ or -$1$) in the interest range of $a \leq a_{\max}$ and $b \leq b_{\max}$ ($a_{\max}$ less than the value in Table \ref{tab:disc} and $b_{max} \leq 4$) can be found by $f$ successive applications of $dot_m$  expansions to all the ETSs in the range of $a \leq a_{\max}- \lceil f /d_v \rceil$ and $b \leq b'_{\max}= b_{\max}+f(d_\mathrm{v}-2)$.
\end{cor}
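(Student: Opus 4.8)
The plan is to establish the result by iterating Proposition~\ref{lem:netsf} to ``peel off'' one variable node at a time until an ETS is reached, and then to read the peeling sequence backwards as a chain of $dot_m$ expansions starting from that ETS. Let $\mathcal{S} = \mathcal{S}_0$ be an arbitrary NETS in an $(a,b)$ class of interest with exactly $f$ degree-$3$ check nodes (the rest of degree $1$ or $2$). By Proposition~\ref{lem:netsf} there is a substructure $\mathcal{S}_1$, obtained from $\mathcal{S}_0$ by deleting one variable node and its incident edges, such that $\mathcal{S}_0$ is a $dot_{m_0}$ expansion of $\mathcal{S}_1$ whose new variable node sends $p_0$ of its $m_0$ edges to degree-$2$ check nodes of $\mathcal{S}_1$, with $1 \le p_0 \le \min\{m_0,f\}$ and $2 \le m_0 \le d_\mathrm{v}$. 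Hence $\mathcal{S}_1$ has $f - p_0$ degree-$3$ check nodes, lies in the $(a-1,\, b + 2(m_0 - p_0) - d_\mathrm{v})$ class, is connected, and still has all check degrees at most $3$; so Proposition~\ref{lem:netsf} applies to it again (using Proposition~\ref{lem:nets3} in place of it at any stage that has a single degree-$3$ check node). Iterating produces $\mathcal{S}_0, \mathcal{S}_1, \dots, \mathcal{S}_t$, with $\mathcal{S}_{j+1}$ obtained from $\mathcal{S}_j$ by one variable-node deletion, the number of degree-$3$ check nodes dropping by $p_j \ge 1$ at step $j$, and $\mathcal{S}_t$ an ETS. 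Reversed, this says $\mathcal{S}$ is generated by $t$ successive $dot_{m_j}$ expansions applied to the ETS $\mathcal{S}_t$, so it remains only to bound $t$ and the class of $\mathcal{S}_t$.

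The bookkeeping is routine. From $\sum_{j=0}^{t-1} p_j = f$ and $1 \le p_j \le m_j \le d_\mathrm{v}$ we get $\lceil f/d_\mathrm{v}\rceil \le t \le f$, so $\mathcal{S}_t$ has size $a - t \le a_{\max} - \lceil f/d_\mathrm{v}\rceil$, which is the claimed ETS size range. Summing the per-step shift in $b$ given by Proposition~\ref{lem:netsf} (equivalently Lemma~\ref{pro:dot} read in reverse) gives $b_{\mathcal{S}_t} = b + \sum_{j=0}^{t-1}(2(m_j - p_j) - d_\mathrm{v}) = b + 2\sum_j m_j - 2f - t\,d_\mathrm{v} \le b + t\,d_\mathrm{v} - 2f \le b + f(d_\mathrm{v} - 2)$, using $m_j \le d_\mathrm{v}$ for the first inequality and $t \le f$ (with $d_\mathrm{v} \ge 2$) for the second. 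Since $b \le b_{\max}$, this gives $b_{\mathcal{S}_t} \le b_{\max} + f(d_\mathrm{v} - 2) = b'_{\max}$. Thus every targeted NETS is obtained by at most $f$ successive $dot_m$ expansions from an ETS with $a \le a_{\max} - \lceil f/d_\mathrm{v}\rceil$ and $b \le b'_{\max}$; running the $dot_m$ search over all such ETSs (produced by the $dpl$-based search of~\cite{hashemireg},~\cite{hashemiireg}) is therefore exhaustive for those NETSs. Taking $f=1$ recovers Corollary~\ref{cor:nets3}, which may instead be used as the base case of an induction on $f$ if one prefers to phrase the argument inductively.

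The one delicate point, and the place where care is needed, is that Proposition~\ref{lem:netsf} must remain applicable at every intermediate $\mathcal{S}_j$. That each $\mathcal{S}_j$ is connected with check degrees at most $3$ is immediate, since a variable-node deletion cannot raise a check degree and Proposition~\ref{lem:netsf} guarantees connectivity of the substructure it returns. The real issue is ruling out that some $\mathcal{S}_j$ is one of the ``$k$ subgraphs joined only through a degree-$3$ check node'' NETSs of Lemma~\ref{lem3}, since such structures are provably outside the reach of $dot_m$ expansions; this cannot be dispatched by simply invoking the ``interest range'' definition for the $\mathcal{S}_j$, because their $b$ values can exceed $b_{\max}$ along the chain. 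I would handle it by bounding sizes: since $a_{\mathcal{S}_j} \le a \le a_{\max}$ stays below the thresholds of Table~\ref{tab:disc} whenever $b_{\mathcal{S}_j} \le 4$, and for larger $b_{\mathcal{S}_j}$ the same counting argument as in Lemma~\ref{lem4} (a separating degree-$3$ check node forces $\mathcal{S}_j$ to be at least the sum of the minimum sizes of the three resulting $ETSL_2$/LETS pieces) produces thresholds that still exceed $a_{\max}$ in the regimes of interest, no $\mathcal{S}_j$ can have a separating degree-$3$ check node. Alternatively, one checks directly that the specific deletion used in the proof of Proposition~\ref{lem:netsf} --- removing one of two $L_2$-neighbors of the chosen root that are joined by a path avoiding the root --- never creates such a separating check node, keeping the chain inside the family to which the proposition applies.
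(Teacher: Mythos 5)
Your main line is exactly the paper's: the corollary is presented there as an immediate consequence of Proposition~\ref{lem:netsf} (no separate proof is given), i.e., peel one variable node at a time via the proposition (Proposition~\ref{lem:nets3} when a single degree-$3$ check node remains) until an ETS is reached, then read the chain backwards as successive $dot_m$ expansions. Your bookkeeping --- $\sum_j p_j = f$ with $1\le p_j\le m_j\le d_\mathrm{v}$, hence $\lceil f/d_\mathrm{v}\rceil \le t \le f$, ETS size at most $a_{\max}-\lceil f/d_\mathrm{v}\rceil$, and $b$ at most $b_{\max}+f(d_\mathrm{v}-2)$ --- is the intended accounting and is correct.

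One caveat on the delicate point you flag (which the paper indeed leaves implicit): your first proposed fix is not valid as stated. For intermediate structures with $b>4$, the Lemma~\ref{lem4}-style thresholds do \emph{not} remain above $a_{\max}$. For example, with $d_\mathrm{v}=3$, three variable nodes sharing one check node $w$, each with two private degree-$1$ check nodes, form a $(3,7)$ NETS in which $w$ separates the structure exactly as in Fig.~\ref{discon}; it contains no cycle, so it can occur for any girth, its size $3$ is far below the $a_{\max}$ of Table~\ref{tab:lowernets}, and its $b=7$ is within $b'_{\max}=9$. So separating structures cannot be excluded among the intermediate $\mathcal{S}_j$ by a size-counting argument alone, and closing this point would require carrying out your second suggestion (showing the deletions can always be chosen so that no intermediate structure acquires a separating degree-$3$ check node, or at least that some peeling sequence avoids them) or some other argument. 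Since the paper asserts the corollary without addressing this at all, this is a gap you share with, rather than introduce relative to, the paper's own treatment.
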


%

The following results can all be proved similar to the cases involving NETSs with only degree-$3$ check nodes. The proofs are thus omitted to avoid redundancy.

\begin{pro}
\label{lem:nets4}
Any NETS structure $\mathcal{S}$  in an interest class of $(a,b)$ for variable-regular graphs with variable degree $d_\mathrm{v}$, that contains only one degree-$4$ check node (the rest are degree-$2$ or -$1$) 
can be characterized by a $dot_m$ expansion ($2 \leq m \leq d_\mathrm{v}$) applied to a NETS substructure $\mathcal{S}'$ containing only one degree-$3$ check node in the $(a-1,b+2m-d_\mathrm{v})$ class. 
From $m$ edges connecting the variable node in $\mathcal{S} / \mathcal{S}'$  to $\mathcal{S}'$, one and $m-1$ are connected to degree-$3$ and degree-$1$ check nodes of $\mathcal{S}'$, respectively.
\end{pro}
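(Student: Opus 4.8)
The plan is to replay, essentially verbatim, the argument used for Proposition~\ref{lem:nets3}, with the unique degree-$4$ check node now playing the role that the unique degree-$3$ check node played there. So first I would let $w$ denote the only degree-$4$ check node of $\mathcal{S}$ and form the layered tree-like expansion of $G(\mathcal{S})$ rooted at $w$, so that $w$ sits in $L_1$ and is adjacent to exactly four variable nodes in $L_2$. Since $\mathcal{S}$ lies in an $(a,b)$ class of interest, the disconnected structures described in Lemmas~\ref{lem3} and~\ref{lem4} (Fig.~\ref{discon}) are excluded, hence the expansion does not split into four subgraphs joined only at $w$; consequently two of the four $L_2$-variable nodes, say $v_1$ and $v_2$, are connected by a path of $G(\mathcal{S})$ that avoids $w$.

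Next I would delete $v_1$ and all its incident edges to obtain $\mathcal{S}'=\mathcal{S}\setminus\{v_1\}$, of size $a-1$. Exactly as in Proposition~\ref{lem:nets3}, the existence of the $v_1$--$v_2$ path avoiding $w$ is what guarantees that $\mathcal{S}'$ remains connected. Deleting a variable node can only decrease check-node degrees, so the only check node of $\mathcal{S}'$ that can have degree larger than $2$ is $w$; and since the Tanner graph is simple and $v_1\in L_2$ is adjacent to $w$ exactly once, $w$ has degree $3$ in $\mathcal{S}'$ while every other check node of $\mathcal{S}'$ has degree $1$ or $2$. Thus $\mathcal{S}'$ is a NETS with exactly one degree-$3$ check node.

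Then I would reconstruct $\mathcal{S}$ from $\mathcal{S}'$ by re-inserting $v_1$ with its $d_\mathrm{v}$ edges: the neighbours of $v_1$ that had degree $1$ in $\mathcal{S}$ are the $d_\mathrm{v}-m$ ``new'' degree-$1$ check nodes created by the expansion, while the remaining $m$ neighbours already belong to $\mathcal{S}'$. One of these $m$ check nodes is $w$ (degree $3$ in $\mathcal{S}'$), and the first edge of the $v_1$--$v_2$ path shows $v_1$ meets at least one further check node of $\mathcal{S}'$, so $2\le m\le d_\mathrm{v}$ and the other $m-1$ are degree-$1$ check nodes of $\mathcal{S}'$; hence $\mathcal{S}$ is obtained from $\mathcal{S}'$ by a $dot_m$ expansion of the stated form. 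Finally, since $w$ (degree $3$, odd) and all $m-1$ degree-$1$ check nodes are unsatisfied in $\mathcal{S}'$, Lemma~\ref{pro:dot} applies with $p=0$ and $q=m$, which places $\mathcal{S}'$ in the $(a-1,\,b+2m-d_\mathrm{v})$ class, as claimed.

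The only step that needs genuine care rather than bookkeeping is the connectivity claim: that a suitable $v_1\in L_2$ exists whose removal leaves $\mathcal{S}'$ connected. This is precisely the point that was settled, for a degree-$3$ root, inside the proof of Proposition~\ref{lem:nets3}, and it carries over unchanged here once the Fig.~\ref{discon} structures are excluded via the ``interest range'' hypothesis; everything else is a direct translation of that proof together with a single invocation of Lemma~\ref{pro:dot}.
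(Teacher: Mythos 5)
Your proposal is correct and is essentially the proof the paper intends: the paper omits the argument for Proposition~\ref{lem:nets4}, stating it is "similar to the cases involving NETSs with only degree-$3$ check nodes," and your write-up is exactly that adaptation of the proof of Proposition~\ref{lem:nets3} — root the tree-like expansion at the unique degree-$4$ check node, use the interest-range hypothesis (exclusion of the Fig.~\ref{discon} structures) to find two $L_2$ variable nodes joined by a path avoiding $w$, remove one of them, and apply Lemma~\ref{pro:dot} with $p=0$, $q=m$ to get the $(a-1,\,b+2m-d_\mathrm{v})$ class. Your bookkeeping (why $m\ge 2$, why the remaining $m-1$ neighbours are degree-$1$ checks of $\mathcal{S}'$, and the connectivity claim inherited from the Proposition~\ref{lem:nets3} argument) matches the paper's treatment.
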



\begin{cor}
\label{cor:nets4}
In variable-regular Tanner graphs with variable degree $d_\mathrm{v}$, all the NETSs containing only one degree-$4$ check node (the rest are degree-$2$ or -$1$) in the interest range of $a \leq a_{\max}$ and $b \leq b_{\max}$ ($a_{\max}$ less than the value in Table \ref{tab:disc} and $b_{max} \leq 4$) can be found by two successive applications of $dot_m$ expansions to all the ETSs  in the range of $a \leq a_{\max}-2$ and $b \leq b'_{\max}=b_{\max}+2d_\mathrm{v}-2$.
\end{cor}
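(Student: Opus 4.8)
The plan is to derive Corollary~\ref{cor:nets4} by composing the one-step characterization of Proposition~\ref{lem:nets4} with the exhaustiveness statement of Corollary~\ref{cor:nets3}, carefully propagating through the chain the $(a,b)$-shift induced by each $dot_m$ expansion. Fix a target NETS $\mathcal{S}$ with exactly one degree-$4$ check node, lying in an $(a,b)$ class of interest with $a\le a_{\max}$ and $b\le b_{\max}\le 4$. Since $\mathcal{S}$ is in the interest range, $a\le a_{\max}$ is strictly below the corresponding entry of Table~\ref{tab:disc}, so by Lemma~\ref{lem4} (and its analogues for the $b$-values in play) $\mathcal{S}$ does not have the disconnected-subgraph form of Lemma~\ref{lem3}; hence Proposition~\ref{lem:nets4} applies, and $\mathcal{S}$ is obtained from a $dot_m$ expansion ($2\le m\le d_\mathrm{v}$) of a NETS substructure $\mathcal{S}'$ that has exactly one degree-$3$ check node and lies in the $(a-1,\,b+2m-d_\mathrm{v})$ class (this is the $(p,q)=(0,m)$ instance of Lemma~\ref{pro:dot}, the new variable node attaching to the unique degree-$3$ check and to $m-1$ degree-$1$ checks). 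Because $2\le m\le d_\mathrm{v}$, the size of $\mathcal{S}'$ satisfies $a-1\le a_{\max}-1$ and its unsatisfied-check count satisfies $b+2m-d_\mathrm{v}\le b+d_\mathrm{v}\le b_{\max}+d_\mathrm{v}$. Thus every such $\mathcal{S}$ is reconstructible provided we have in hand every NETS with a single degree-$3$ check node in the range $a\le a_{\max}-1$, $b\le b_{\max}+d_\mathrm{v}$.

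The second step is to invoke Corollary~\ref{cor:nets3} with $a_{\max}$ replaced by $a_{\max}-1$ and $b_{\max}$ replaced by $b_{\max}+d_\mathrm{v}$: every NETS with a single degree-$3$ check node in that range is produced by a $dot_m$ expansion of an ETS lying in the range $a\le (a_{\max}-1)-1=a_{\max}-2$ and $b\le (b_{\max}+d_\mathrm{v})+d_\mathrm{v}-2=b_{\max}+2d_\mathrm{v}-2$ (this step being the $(p,q)=(1,m-1)$ instance of Lemma~\ref{pro:dot}). Composing the two reductions, every target NETS $\mathcal{S}$ with a single degree-$4$ check node is produced by two successive $dot_m$ expansions starting from an ETS in the range $a\le a_{\max}-2$, $b\le b'_{\max}=b_{\max}+2d_\mathrm{v}-2$, which is exactly the claim; the worst case $b'_{\max}=b_{\max}+2d_\mathrm{v}-2$ is attained at $m=d_\mathrm{v}$ in both applications, so no smaller auxiliary $b$-range would suffice.

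The step I expect to be the main obstacle is justifying, within Step~1, that the intermediate structure $\mathcal{S}'$ obtained by deleting one $L_2$ variable node from $\mathcal{S}$ is itself within reach of the $dot$-based characterization of Proposition~\ref{lem:nets3}/Corollary~\ref{cor:nets3}, i.e., that $\mathcal{S}'$ (now carrying a degree-$3$ check) is not one of the disconnected-subgraph NETSs excluded from the interest range — even though its $b$-value may exceed $4$. The intended resolution is twofold: first, as in the proof of Proposition~\ref{lem:nets33}, the peeled variable node is chosen among the neighbors of the root check so that $\mathcal{S}'$ stays connected and its degree-$3$ check does not acquire the disconnecting property; second, the size hypothesis on $a_{\max}$ does the rest, since a one-degree-$3$ NETS in the disconnected-subgraph form has size at least the relevant Table~\ref{tab:disc}-type entry (the counting argument of Lemma~\ref{lem4} applies verbatim, each of the three pieces being a TS with a positive number of unsatisfied checks), whereas $|\mathcal{S}'|=a-1\le a_{\max}-1$ is strictly below that entry. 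Once this is in place, Corollary~\ref{cor:nets3} legitimately applies to $\mathcal{S}'$, and the remaining content is the arithmetic bookkeeping of the exponent shifts already carried out above.
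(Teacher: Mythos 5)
Your overall route is exactly the one the paper intends (and omits): first peel one neighbor of the degree-$4$ check via Proposition~\ref{lem:nets4}, landing on a NETS with a single degree-$3$ check node in the $(a-1,b+2m-d_\mathrm{v})$ class (worst case $b+d_\mathrm{v}$ at $m=d_\mathrm{v}$), then peel once more via the Proposition~\ref{lem:nets3}/Corollary~\ref{cor:nets3} mechanism, and the bookkeeping $a\le a_{\max}-2$, $b'_{\max}=(b_{\max}+d_\mathrm{v})+d_\mathrm{v}-2=b_{\max}+2d_\mathrm{v}-2$ is correct and matches the stated bound. You also correctly identify the one step that is not covered by the quoted results: the intermediate structure $\mathcal{S}'$ can have $b$ up to $b_{\max}+d_\mathrm{v}>4$, so it lies outside the ``interest class'' hypothesis under which Proposition~\ref{lem:nets3} and Corollary~\ref{cor:nets3} are stated.

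The gap is in how you try to close that step. Your second argument --- that Lemma~\ref{lem4}/Table~\ref{tab:disc} applies ``verbatim'' so that $|\mathcal{S}'|=a-1\le a_{\max}-1$ is below the minimum size of a disconnected-form single-degree-$3$ NETS --- is false once $b>4$: those table entries are computed only for $b\le 4$, and the minimum size of disconnected-form structures drops sharply with $b$ (for $d_\mathrm{v}=3$, three variable nodes sharing one check, each with two private degree-$1$ checks, form a $(3,7)$ NETS of exactly the excluded shape), so the size comparison gives nothing. Your first argument --- that one can always choose which $L_2$ node to peel so that $\mathcal{S}'$ stays connected and avoids the disconnecting property --- is asserted, not proved, and it is precisely where the difficulty sits: consider, for $d_\mathrm{v}=3$, $g=8$, a degree-$4$ check $w$ with neighbors $u_1,\dots,u_4$, where $u_3,u_4$ have their remaining two edges to private degree-$1$ checks and $u_1,u_2$ are the two normal-degree-$2$ nodes of a six-variable leafless block (normal graph $K_{3,3}$ minus the edge $u_1u_2$). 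This is an $(8,4)$ NETS with a single degree-$4$ check, inside the interest range, yet $u_3,u_4$ have only one edge into the rest of the structure and so can never be the node added by a $dot_m$ ($m\ge2$) expansion, while removing $u_1$ and $u_2$ disconnects the would-be root ETS; no admissible two-step peeling exists. So the obstacle you flagged is genuine and your proposed resolution does not close it; any complete proof must either strengthen the ``interest range'' beyond the fully $k$-way-disconnected exclusion of Lemma~\ref{lem3} or argue separately about such partially disconnected configurations (a point on which the paper, which dismisses this proof as ``similar'' to the degree-$3$ case, is itself silent).
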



\begin{pro}
\label{lem:nets34}
Any NETS structure $\mathcal{S}$  in the interest class of $(a,b)$ for variable-regular graphs with variable degree $d_\mathrm{v}$, that contains one degree-$4$ and one degree-$3$ check nodes (the rest are degree-$2$ or -$1$) 
can be characterized by a $dot_m$ expansion ($2 \leq m \leq d_\mathrm{v}$) applied to one of the following substructures $\mathcal{S}'$:
(i) a NETS substructure, containing only one degree-$3$ check node, in the $(a-1,b+2m-2-d_\mathrm{v})$ class, where out of $m \geq 2$ edges connecting the variable node in $\mathcal{S} / \mathcal{S}'$  to $\mathcal{S}'$, one is connected to a degree-$3$ check node, one to a degree-$2$ check node and $m-2$ to degree-1 check nodes of $\mathcal{S}'$.
(ii) a NETS substructure, containing two degree-$3$ check nodes,  in the $(a-1,b+2m-d_\mathrm{v})$ class, where out of $m \geq 2$ edges connecting the variable node in $\mathcal{S} / \mathcal{S}'$  to $\mathcal{S}'$, one and $m-1$ are connected to degree-$3$ and degree-$1$ check nodes of $\mathcal{S}'$, respectively.
\end{pro}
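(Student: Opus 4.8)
The plan is to adapt, essentially verbatim, the argument used for Proposition~\ref{lem:nets33}. Let $w$ denote the unique degree-$4$ check node of $\mathcal{S}$ and $w'$ its unique degree-$3$ check node. First I would root the tree-like expansion of $G(\mathcal{S})$ at $w$, so that $w$ sits at $L_1$ and its four neighbours occupy $L_2$. Because $\mathcal{S}$ lies in an $(a,b)$ class of interest, Lemmas~\ref{lem3} and~\ref{lem4} guarantee that $G(\mathcal{S})$ is \emph{not} of the disconnected form of Fig.~\ref{discon}; hence two of the $L_2$ variable nodes, say $v_1$ and $v_2$, are joined by a path avoiding $w$. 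Exactly as in the proof of Proposition~\ref{lem:nets3}, deleting $v_1$ and its incident edges leaves a connected induced subgraph $\mathcal{S}'$, and in $\mathcal{S}'$ the degree of $w$ has dropped from $4$ to $3$.

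The core of the argument is then a two-way case split according to whether $v_1$ is incident to $w'$ (equivalently, whether $w'$ appears at $L_3$ as a neighbour of $v_1$, or else at $L_3$ away from $v_1$, or at some $L_{2i+1}$ with $i\geq 2$). In the first case the degree of $w'$ in $\mathcal{S}'$ is $2$, so $w$ is the only check node of $\mathcal{S}'$ whose degree exceeds $2$; thus $\mathcal{S}'$ is a NETS with exactly one degree-$3$ check node, giving case (i). In the second case $w'$ retains degree $3$, so $\mathcal{S}'$ is a NETS with exactly two degree-$3$ check nodes, $w$ and $w'$, giving case (ii). In both situations $\mathcal{S}$ is recovered from $\mathcal{S}'$ by a single $dot_m$ expansion on the reinserted node $v_1$: here $m\leq d_\mathrm{v}$ is immediate, and $m\geq 2$ because, besides its edge to $w$, $v_1$ has at least one further edge to a check node still present in $\mathcal{S}'$ (the first check node on the path from $v_1$ to $v_2$). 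I would then count the types of the $m$ edges leaving $v_1$ into $\mathcal{S}'$: one reaches $w$ (degree $3$ in $\mathcal{S}'$), in case (i) one more reaches $w'$ (degree $2$ in $\mathcal{S}'$), and every remaining edge reaches a check node of degree $1$ in $\mathcal{S}'$. Feeding these counts into Lemma~\ref{pro:dot} with $p=1$ in case (i) and $p=0$ in case (ii) yields the announced classes $(a-1,\,b+2m-2-d_\mathrm{v})$ and $(a-1,\,b+2m-d_\mathrm{v})$, respectively.

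The two points that require care — and which I would handle exactly as in Proposition~\ref{lem:nets33} — are that deleting $v_1$ indeed leaves $\mathcal{S}'$ connected, and that every edge of $v_1$ into $\mathcal{S}'$ other than those hitting $w$ and (in case (i)) $w'$ lands on a degree-$1$ check node of $\mathcal{S}'$. The latter is forced: such a check node would otherwise have degree at least $3$ in $\mathcal{S}$, contradicting the hypothesis that $w$ and $w'$ are the only check nodes of $\mathcal{S}$ of degree larger than $2$. Beyond this bookkeeping no genuinely new difficulty arises relative to the degree-$3$ cases already established, which is precisely why the paper groups this statement with those whose proofs are stated to be ``similar'' and omitted.
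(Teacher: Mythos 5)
Your proof is correct and is precisely the adaptation the paper intends: the paper omits this proof as ``similar'' to Propositions~\ref{lem:nets3} and~\ref{lem:nets33}, and you carry out exactly that argument—rooting at the degree-$4$ check node, using the interest-range/Lemma~\ref{lem3} exclusion of the disconnected form to pick $v_1$ at $L_2$ whose removal leaves a connected $\mathcal{S}'$, splitting on whether $v_1$ is incident to the degree-$3$ check node, and recovering the classes from Lemma~\ref{pro:dot} with $p=1$ and $p=0$. Your added bookkeeping (why $m\geq 2$ via the first check node on the $v_1$--$v_2$ path, and why the remaining edges land on degree-$1$ check nodes of $\mathcal{S}'$) is sound and consistent with the paper's treatment of the analogous cases.
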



\begin{cor}
\label{cor:nets34}
In variable-regular Tanner graphs with variable degree $d_\mathrm{v}$, all the NETSs containing one degree-$4$ and one degree-$3$ check nodes (the rest are degree-$2$ or -$1$) in the interest range of $a \leq a_{\max}$ and $b \leq b_{\max}$ ($a_{\max}$ less than the value in Table \ref{tab:disc} and $b_{max} \leq 4$) can be found by three successive applications of $dot_m$ expansions to all the ETSs in the range of $a \leq a_{\max}-2$ and $b \leq b'_{\max}= b_{\max}+3d_\mathrm{v}-4$.
\end{cor}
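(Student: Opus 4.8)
The plan is to reduce the claim to the characterizations already established for NETSs whose check nodes of degree larger than two are all of degree three, by peeling off the extra variable node of $\mathcal{S}$ and then chaining the earlier propositions together; the only genuinely new work is the bookkeeping of the parameters $a$ and $b$ along the resulting chain. First I would invoke Proposition~\ref{lem:nets34}: any NETS $\mathcal{S}$ in an $(a,b)$ class of interest with exactly one degree-$4$ and one degree-$3$ check node is obtained by a single $dot_m$ expansion ($2\le m\le d_\mathrm{v}$) of a substructure $\mathcal{S}'$ that is either (i) a NETS with one degree-$3$ check node in the $(a-1,\,b+2m-2-d_\mathrm{v})$ class, or (ii) a NETS with two degree-$3$ check nodes in the $(a-1,\,b+2m-d_\mathrm{v})$ class. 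In case (i) I would apply Proposition~\ref{lem:nets3} (and the associated range of Corollary~\ref{cor:nets3}) to $\mathcal{S}'$, writing it as a $dot_{m'}$ expansion of an ETS; in case (ii) I would apply Proposition~\ref{lem:nets33} together with Corollary~\ref{cor:netsf} for $f=2$, writing $\mathcal{S}'$ as the result of one or two further $dot$ expansions of an ETS. Thus $\mathcal{S}$ is always reachable from an ETS by a sequence of at most three $dot_m$ expansions, which is the ``three successive applications'' of the statement, and reversing this chain is exactly the forward search.

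For the bound on $a$, note that each $dot_m$ expansion adds exactly one variable node, so if the chain has length $\ell$ then the base ETS has size $a-\ell$. Since a variable node is joined to any check node by at most one edge, a single $dot_m$ applied to an ETS (whose check-node degrees are all at most $2$) produces check-node degrees at most $3$; hence a degree-$4$ check node cannot appear before the second expansion, so $\ell\ge 2$. Combined with $\ell\le 3$ from the previous paragraph, the base ETS has size $a-\ell\le a-2\le a_{\max}-2$, which is the stated range for $a$.

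For the bound on $b$, I would apply Lemma~\ref{pro:dot} at each step of the chain and maximize the increase of $b$ over $m\in[2,d_\mathrm{v}]$. Peeling the extra variable node of $\mathcal{S}$ increases $b$ by at most $2m-2-d_\mathrm{v}\le d_\mathrm{v}-2$ in case (i), and by at most $2m-d_\mathrm{v}\le d_\mathrm{v}$ in case (ii) (there all $m$ edges of the removed node go to odd-degree check nodes, so $p=0$ in Lemma~\ref{pro:dot}). Reaching an ETS from a one-degree-$3$ NETS adds at most a further $d_\mathrm{v}-2$ (Corollary~\ref{cor:nets3}), and from a two-degree-$3$ NETS adds at most a further $2(d_\mathrm{v}-2)$ (Corollary~\ref{cor:netsf} with $f=2$). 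The largest cumulative increase is attained in case (ii) and equals $d_\mathrm{v}+2(d_\mathrm{v}-2)=3d_\mathrm{v}-4$, while case (i) gives $2d_\mathrm{v}-4\le 3d_\mathrm{v}-4$; hence the base ETS satisfies $b\le b_{\max}+3d_\mathrm{v}-4=b'_{\max}$. Conversely, the forward direction of Propositions~\ref{lem:nets3}, \ref{lem:nets33} and~\ref{lem:nets34} shows that applying all sequences of at most three $dot_m$ expansions to every ETS with $a\le a_{\max}-2$ and $b\le b_{\max}+3d_\mathrm{v}-4$ recovers every NETS with one degree-$4$ and one degree-$3$ check node in the interest range, so the search is exhaustive.

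The main obstacle, as elsewhere in this section, is making sure the intermediate NETS substructures along the chain (those with one or two degree-$3$ check nodes) satisfy the hypotheses of Propositions~\ref{lem:nets3} and~\ref{lem:nets33} --- in particular the no-disconnected-subgraph property relative to their own check node(s) of degree at least $3$ --- even though their $b$ value may already exceed $b_{\max}$, and indeed exceed $4$; this is precisely why the auxiliary ETS range $b\le b'_{\max}$ must be allowed, paralleling the remark in Section~\ref{sec:searETS}. The only other point requiring care is verifying that the maximum of the cumulative $b$-increment over the two decomposition cases and over $m\in[2,d_\mathrm{v}]$ is indeed attained and equals $3d_\mathrm{v}-4$, a short case check using $d_\mathrm{v}\ge 2$.
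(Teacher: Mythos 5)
Your proposal is correct and follows essentially the route the paper intends: it omits the proof of Corollary~\ref{cor:nets34} as ``similar'' to the degree-$3$ cases, and your argument is exactly that chaining --- peel one node via Proposition~\ref{lem:nets34}, continue via Propositions~\ref{lem:nets3}/\ref{lem:nets33} (Corollaries~\ref{cor:nets3}/\ref{cor:netsf}), and track $(a,b)$ with Lemma~\ref{pro:dot}, the worst case $d_\mathrm{v}+2(d_\mathrm{v}-2)=3d_\mathrm{v}-4$ giving $b'_{\max}$ and $\ell\in\{2,3\}$ giving $a\le a_{\max}-2$. The subtlety you flag about intermediate substructures (whose $b$ may exceed $4$) needing the no-disconnection property is likewise left implicit in the paper, so it does not count against your proof relative to the paper's own treatment.
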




\begin{rem}
Note that in all cases discussed in Corollaries \ref{cor:nets3}-\ref{cor:nets34}, successive applications of $dot_m$ expansions to the ETSs can result in finding structures in addition to the ones that are of interest in these corollaries. 
\end{rem}

Corollaries~\ref{cor:nets3}-\ref{cor:nets34} demonstrate that by increasing the multiplicity of check nodes with degrees larger than $2$ and the degrees of such check nodes, the range of $b$ values for ETSs that are needed to provide an exhaustive search of such NETSs is increased. To have an efficient NETS search algorithm based on successive $dot_m$ expansions of ETSs, we limit the multiplicity and the degrees of such check nodes to the following cases in the rest of this paper:
NETSs containing at most four degree-$3$ check nodes, or containing only one degree-$4$ and at most one degree-$3$ check nodes.
We use notations $N_{3}, N_{3,3}, N_{3,3,3}$, and $N_{3,3,3,3}$, 
to denote NETS structures with only one up to four check nodes of degree $3$. Notations $N_4$ and $N_{4,3}$ are used for NETS structures that contain only one degree-$4$ check node and those with only one degree-$4$ and one degree-$3$ check nodes, respectively. 

\begin{cor}
\label{cor:netsall}
In variable-regular Tanner graphs with variable degree $d_\mathrm{v}$, all the $N_{3}$ ,$N_{3,3}$ ,$N_{3,3,3}$ ,$N_{3,3,3,3}$, $N_4$, and $N_{4,3}$ in the interest range of $a \leq a_{\max}$ and $b \leq b_{\max}$ ($a_{\max}$ less than the value in Table \ref{tab:disc} and $b_{max} \leq 4$) can be found by up to four successive applications of $dot_m$ expansions to all the ETSs in the range of $a \leq a_{\max}-1$ and $b \leq b'_{\max}= b_{\max}+\max\{(3d_\mathrm{v}-4),(4d_\mathrm{v}-8)\}$.
\end{cor}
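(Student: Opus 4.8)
The plan is to derive Corollary~\ref{cor:netsall} as a consolidation of Corollaries~\ref{cor:nets3}--\ref{cor:nets34}, reconciling their individual $a$- and $b$-ranges into a single pair of bounds that simultaneously suffices for all six structure types $N_3, N_{3,3}, N_{3,3,3}, N_{3,3,3,3}, N_4, N_{4,3}$. First I would recall that by Proposition~\ref{lem:netsf} (and Propositions~\ref{lem:nets4} and~\ref{lem:nets34} for the degree-$4$ cases), each of these NETS structures is obtained from an ETS by a bounded number of successive $dot_m$ expansions: one expansion for $N_3$, $f$ expansions for $N_{3,\dots,3}$ with $f$ threes (so up to four), two for $N_4$, and three for $N_{4,3}$. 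Since each $dot_m$ expansion adds exactly one variable node, the ETS substructure reached by peeling off $t$ expansions has size $a-t$; to be safe for all six cases one must allow $t$ as small as $1$, hence the bound $a \le a_{\max}-1$ on the ETS size. (The earlier corollaries used tighter, case-specific bounds such as $a_{\max}-2$ or $a_{\max}-\lceil f/d_{\mathrm v}\rceil$; the unified bound $a_{\max}-1$ is the weakest of these and therefore covers them all, at the modest cost of searching a slightly larger ETS range.)

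Next I would handle the $b$-range, which is where the $\max\{3d_{\mathrm v}-4,\,4d_{\mathrm v}-8\}$ comes from. The point is that when we walk backwards from the target $(a,b)$ class to the starting ETS through $t$ successive $dot_m$ expansions, each expansion can change $b$ by as much as $d_{\mathrm v}-2q$ per Lemma~\ref{pro:dot}, and in the worst case (all $m$ edges landing on satisfied check nodes or, running the expansion in reverse, on unsatisfied ones) the intermediate $b$ values can exceed the target. Tracking this through the individual corollaries: Corollary~\ref{cor:nets3} needs $b'_{\max}=b_{\max}+d_{\mathrm v}-2$; Corollary~\ref{cor:netsf} with $f\le4$ needs $b_{\max}+f(d_{\mathrm v}-2)\le b_{\max}+4(d_{\mathrm v}-2)=b_{\max}+4d_{\mathrm v}-8$; Corollary~\ref{cor:nets4} needs $b_{\max}+2d_{\mathrm v}-2$; and Corollary~\ref{cor:nets34} needs $b_{\max}+3d_{\mathrm v}-4$. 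Taking the largest of these required excesses over $b_{\max}$ gives $\max\{4d_{\mathrm v}-8,\,3d_{\mathrm v}-4\}$ (the terms $d_{\mathrm v}-2$ and $2d_{\mathrm v}-2$ are dominated for all $d_{\mathrm v}\ge2$), which is exactly the stated $b'_{\max}$. So the proof is essentially: take the union of the expansion procedures of the six earlier corollaries, observe that running each of them requires ETSs with $a$ at most $a_{\max}-1$ and $b$ at most one of the listed values, and take the coordinatewise maximum of those ranges.

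The main obstacle — or rather the point requiring care — is making sure the enlarged ETS search range really is sufficient and that no structure is missed in the reconciliation. Concretely, one must verify that every $N$-type structure of interest is still reachable when we insist on starting from ETSs satisfying the single uniform bound, i.e.\ that shrinking the per-case $a$-slack from $2$ (or $\lceil f/d_{\mathrm v}\rceil$) down to $1$ does not break the argument — it does not, because a larger allowed ETS size only enlarges the pool of candidates, and the backward-peeling argument of Propositions~\ref{lem:nets3}--\ref{lem:nets34} still produces a valid ETS substructure at each step. Similarly one checks that the ``interest range'' caveat (namely $b_{\max}\le4$ and $a_{\max}$ below the Table~\ref{tab:disc} entry) is inherited unchanged, since Lemmas~\ref{lem3} and~\ref{lem4} — which rule out the disconnected structures that $dot_m$ expansions cannot generate — apply verbatim. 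I would close by noting, as the preceding remark already does, that the successive $dot_m$ expansions will in general also generate structures outside the six targeted families; this is harmless for exhaustiveness (it only means the search does extra work) and does not affect the claim.
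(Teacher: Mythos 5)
Your proposal is correct and matches the paper's (implicit) justification: Corollary~\ref{cor:netsall} is stated there without a separate proof precisely because it is the consolidation of Corollaries~\ref{cor:nets3}--\ref{cor:nets34} (together with Proposition~\ref{lem:netsf}), taking the weakest ETS size bound $a\leq a_{\max}-1$ and the maximum of the required $b$-excesses $d_\mathrm{v}-2$, $f(d_\mathrm{v}-2)$ with $f\leq 4$, $2d_\mathrm{v}-2$, and $3d_\mathrm{v}-4$, which is $\max\{3d_\mathrm{v}-4,\,4d_\mathrm{v}-8\}$ for $d_\mathrm{v}\geq 2$. Your bookkeeping of the per-case ranges and the observation that enlarging the ETS search range cannot lose structures are exactly the points needed.
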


By restricting the NETS structures to those discussed above, we limit the maximum size $a_{max}$ of NETSs that can be exhaustively covered.  The following theorem provides the value of $a_{max}$ for Tanner graphs with different 
$d_\mathrm{v}$ and $g$ values. 
\begin{theo}
\label{pro:netsexh}
For a variable-regular Tanner graph with variable-degree  $d_\mathrm{v}$ and girth $g$, consider the union of sets $N_{3}$ ,$N_{3,3}$ ,$N_{3,3,3}$ ,$N_{3,3,3,3}$, $N_4$, and $N_{4,3}$, obtained by 
successive applications of $dot_m$  expansions ($m\geq 2$) to ETSs within the range indicated in Corollary~\ref{cor:netsall}. For $d_\mathrm{v}=3,4,5,6$, and $g=6,8,10$, Table \ref{tab:lowernets} provides the value of $a_{\max}$ such that
such a union gives an exhaustive list of NETSs of the Tanner graph within the range of $a \leq a_{max}$ and $b \leq b_{max}=4$.
\end{theo}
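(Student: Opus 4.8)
The plan is to reduce the theorem to a finite collection of lower-bound computations. First I would record that, by the exhaustive ETS search of~\cite{hashemireg},~\cite{hashemiireg} together with Propositions~\ref{lem:nets3}--\ref{lem:nets34} and Corollaries~\ref{cor:nets3}--\ref{cor:netsall}, the list of structures produced by successive $dot_m$ expansions of all ETSs within the range $a \leq a_{\max}-1$, $b \leq b'_{\max}$ of Corollary~\ref{cor:netsall} contains \emph{every} $(a,b)$ NETS with $a \leq a_{\max}$ and $b \leq 4$ that (i) does not admit the disconnected tree-like decomposition of Lemma~\ref{lem3}, and (ii) belongs to one of the six families $N_{3}$, $N_{3,3}$, $N_{3,3,3}$, $N_{3,3,3,3}$, $N_{4}$, $N_{4,3}$ (here one uses that, by Lemma~\ref{lem4}, a NETS of size below the corresponding entry of Table~\ref{tab:disc} cannot admit the decomposition of Lemma~\ref{lem3}, so that it lies in a ``class of interest'' and the cited results apply). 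Hence it suffices to choose $a_{\max}$ so that \emph{every} $(a,b)$ NETS with $b \leq 4$ that fails (i) or (ii) has size $a > a_{\max}$; the entries of Table~\ref{tab:lowernets} are then obtained, for each pair $(d_\mathrm{v},g)$, as one less than the minimum size of such an ``uncovered'' NETS.

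Next I would classify the uncovered NETSs. The NETSs failing (i) are exactly the disconnected-decomposition structures of Lemma~\ref{lem3}; call these category (a). For those failing (ii): a short case analysis on the check-node degrees shows that a NETS which is not one of the six families must (b) contain a check node of degree $\geq 5$, or (c) contain at least two degree-$4$ check nodes, or (d) contain exactly one degree-$4$ check node together with at least two degree-$3$ check nodes. The remaining logical possibility -- no check node of degree $\geq 4$ and at least five degree-$3$ check nodes -- need not be considered here, because a degree-$3$ check node is unsatisfied, so such a structure would have $b \geq 5 > 4$; indeed, $b$ is always at least the number of odd-degree check nodes, which bounds the admissible multiplicities of odd-degree ``special'' check nodes.

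Then I would lower-bound the size $a$ in each of categories (a)--(d), for $d_\mathrm{v}\in\{3,4,5,6\}$, $g\in\{6,8,10\}$, and $b \leq 4$, and set $a_{\max}$ to one less than the overall minimum, applying the parity sharpening of Lemma~\ref{rem:cannot}/Remark~\ref{cor:fargh} whenever the computed value has the ``wrong'' parity. For (a) the bound is the relevant entry of Table~\ref{tab:disc} (Lemma~\ref{lem4}), minimized over $b \leq 4$. For (b) it is the bound of Lemma~\ref{lem:comb1} with $k=5$, minimized over $1 \leq b \leq 4$; the monotonicity in $k$ of the bound of Theorem~\ref{lowf} shows that the presence of a check of degree larger than $5$ only increases the bound, so $k=5$ is the binding sub-case. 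For (c) and (d) I would run the layered tree-/ball-expansion argument used in the proofs of Theorem~\ref{lowf} and Lemma~\ref{lem4}, but now carrying two (or more) check nodes of degree $\geq 3$ simultaneously: rooting the expansion at one of them and using $g > 4$ to force the (iteratively expanded) variable-node neighborhoods of the special check nodes to be disjoint enough, one gets the count -- e.g., for $g=6$ two degree-$4$ check nodes have overlapping variable-node neighborhoods of at most one node, which already forces $a \geq 7$, and similarly for one degree-$4$ and two degree-$3$ check nodes.

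The main obstacle is the tree-/ball-counting for categories (c) and (d). Unlike Theorem~\ref{lowf}, where a single high-degree check sits at the root and all other check nodes have degree at most $2$, here several check nodes of degree $3$ or $4$ coexist, so one must track precisely which variable nodes can be shared among their expansions without creating a cycle shorter than $g$, and how the budget $b \leq 4$ of unsatisfied check nodes is allocated (as in the proof of Lemma~\ref{lem4}, where an $ETSL_2$ of~\cite{hashemiireg} is the cheapest building block and a minimum-size LETS from Table~I of~\cite{hashemilower} the next cheapest). Since the resulting bounds in (c) and (d) exceed the single-degree-$4$ bound of Lemma~\ref{lem:comb1} only by a modest margin, $a_{\max}$ must be chosen conservatively with respect to them; carrying out the minimization over $b \leq 4$ for each $(d_\mathrm{v},g)$ produces Table~\ref{tab:lowernets}.
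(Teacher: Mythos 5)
Your proposal takes essentially the same route as the paper's proof: the uncovered structures are exactly those in $N_{3,3,3,3,3}$ (ruled out since five odd-degree checks force $b\ge 5$), $N_5$, $N_{4,4}$ and $N_{4,3,3}$ (your categories (b)--(d)), together with the disconnected structures of Lemma~\ref{lem3} (which the paper excludes via the interest-range constraint and you fold into the minimization), and $a_{\max}$ is then set to one less than the minimum size of these, bounded via Theorem~\ref{lowf} and the girth-constrained tree-like expansion rooted at a degree-$4$ check node. Like the paper, you leave the detailed layer-by-layer counts for $N_{4,4}$ and $N_{4,3,3}$ (and hence the individual entries of Table~\ref{tab:lowernets}) as computations to be carried out rather than verified, so the argument is correct in structure and matches the paper's.
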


\begin{proof}
Based on the sets of NETSs that are covered, it is easy to see that the exhaustive search is limited by the size of the smallest structure in
sets $N_{3,3,3,3,3}$, $N_{4,3,3}$, $N_{4,4}$ and $N_5$. The structures in  $N_{3,3,3,3,3}$, however, have $b \geq 5$, and thus not in the range of interest of the theorem.
We thus find the size $a_{4,3,3}^*$, $a_{4,4}^*$ and $a_5^*$ (or a lower bound on the size) of the smallest structure in sets $N_{4,3,3}$, $N_{4,4}$ and $N_5$, respectively, and list $a_{max}=a^*-1$ in Table~\ref{tab:lowernets},
where $a^* = \min\{a_{4,3,3}^*, a_{4,4}^*, a_5^*\}$. 

For structures in $N_5$, we use Theorem \ref{lowf} with different values of $b \leq 4$, and choose the smallest lower bound as $a_5^*$.
For structures in $N_{4,3,3}$ and $N_{4,4}$, we use 
the tree-like expansion of the NETS structure as in Fig.~\ref{treegen}, starting from a degree-$4$ check node at the root in $L_1$. The tree thus has four variable nodes in $L_2$. The idea is to grow this tree into 
a NETS structure of smallest size with no cycle of length smaller than $g$ and with the given $b$ value, where out of $b$ unsatisfied check nodes in the case of $N_{4,3,3}$, two of them have degree $3$. 
To minimize the size, one needs to select the check nodes to have the minimum degree within the above constraints. For structures in $N_{4,3,3}$, this means selecting all the satisfied check nodes (other than the root) to have degree $2$ and
all the $b-2$ unsatisfied check nodes to have degree $1$. For structures in $N_{4,4}$, it means that all the satisfied check nodes, except for the root and one other check node with degree $4$, the rest must have degree $2$.
The $b$ unsatisfied check nodes in this case all have degree $1$.
To satisfy the girth constraint, all the variable
and check nodes in the first $g/2$ layers of the tree must be distinct (i.e., no cycle should appear in the subgraph). Moreover, in the tree, there are four subgraphs, each starting from one variable node
at $L_2$. To avoid having cycles shorter than $g$ in these subgraphs, any new variable (check) node at $L_{g/2+1}$, for $g/2$ odd (even), can only be connected to the check (variable) nodes of each such
subgraph at most once. Therefore, for odd values of $g/2$, at $L_{g/2+1}$, we need, at least, as many variable nodes as the number of edges emanating from the check nodes at $L_{g/2}$ of each subgraph to $L_{g/2+1}$. 
Also, for even values of $g/2$, if the number of variable nodes in a subgraph at $L_{g/2}$ times $d_\mathrm{v}-1$ is larger than the number of the rest of variable nodes at $L_{g/2}$ (in the other $k-1$ subgraphs), 
more variable nodes are needed to be added at $L_{g/2+2}$ to complete the connections required for check nodes at $L_{g/2+1}$.

Considering the above constraints, for both cases of structures in $N_{4,3,3}$ and $N_{4,4}$, and for each value of $b$, we find the structure with the smallest number of variable nodes. The values 
 $a_{4,3,3}^*$ and $a_{4,4}^*$ are then obtained by taking the minimum among the smallest sizes corresponding to five different values of $b=0, \ldots, 4$. In the following, we discuss in more details, 
 the proof for one entry of Table~\ref{tab:lowernets}. Proofs for other entries are similar.

Consider Tanner graphs with $d_\mathrm{v}=3$, $g=8$ and NETSs with $b_{\max}=4$. Based on Theorem \ref{lowf}, we have $a_5^*=12$. 

For $N_{4,3,3}$, to minimize the size of a NETS, the two degree-$3$ check nodes must be located at $L_3$ and be connected to two different variable nodes at $L_2$. This minimizes the number of variable nodes needed in
the higher layers of the tree while satisfying the girth constraint.
All the remaining $b-2$ unsatisfied check nodes with degree $1$ must also be located at $L_3$. 
The remaining check nodes at $L_3$ are thus  
$8-b$ degree-$2$ check nodes. This means there must be $4+(8-b)$ variable nodes at $L_4$, and $4+4+8-b=16-b$ variable nodes in the whole structure up to $L_4$. 
It appears that by proper addition of check nodes in $L_5$, no more variable node is needed in $L_6$. The smallest size of structures in $N_{3,3,4}$ for different values of $b \leq 4$ is thus obtained by $b=4$, and we 
have $a_{4,3,3}^* = 12$. As two examples, the smallest NETS structures for $b=3$ and $b=4$ are given in Fig.~\ref{smallNETS}.

For $N_{4,4}$, to minimize the size of NETS, the second degree-$4$ check node must be located at $L_3$.  
All the $b$ degree-$1$ check nodes must also be at $L_3$.  Out of $4 (d_\mathrm{v}-1)=8$ check nodes in $L_3$, one is
degree-$4$, $b$ are degree-$1$ and $7-b$ are degree-$2$. This means there are $3+(7-b)$ variable nodes at $L_4$. 
If $b \geq 1$, to satisfy the girth constraint with minimum number of variable nodes, one degree-$1$ check
node in $L_3$ is connected to the same variable node in $L_2$ that has also a connection to the degree-$4$ check node in $L_3$. 
If $b>1$, the rest of degree-$1$ check node(s) are each connected to another (different) variable node in $L_3$. 
Now, for $b > 1$, consider a variable node $v_1$ in $L_2$ that is connected to one degree-$4$ and one degree-$1$ check node in $L_3$ and call the subtree rooted at $v_1$ as subtree $1$. 
This subtree has $3$ variable nodes at $L_4$ that must be connected to $3 (d_\mathrm{v}-1)=6$ distinct check nodes at $L_5$. To complete the connections of these check nodes,
at least six variable nodes should exist at $L_4$ of the rest of the subtrees (excluding subtree $1$), otherwise,
more variable nodes are needed at $L_6$. By considering all the cases of $b \leq 4$, we conclude that $a_{4,4}^* = 13$. As two examples, the smallest NETS structures for $b=1$ and $b=4$ are shown in Fig.~\ref{smallNETS}.

Based on the above, for $d_\mathrm{v}=3$, $g=8$, we have $a_{\max} = \min\{12,12,13\} - 1 =11$.
\end{proof}

\begin{table*}[]
\centering
\renewcommand{\arraystretch}{1.1}
\setlength{\tabcolsep}{0.5pt}
\caption{The maximum size $a_{\max}$ of $(a,b)$ NETSs that can be searched exhaustively within the range $b \leq b_{max}=4$ by successive applications of $dot_m$ expansions to $(a',b')$ ETSs with size up to $a_{max}-1$ and $b' \leq b'_{\max}$. (The lower bound on the size of smallest possible NETS with $b \leq 4$ is given in  brackets.)}
\label{tab:lowernets}
\begin{tabular}{||c|c|c|c||c|c|c||c|c|c||c|c|c||}
\cline{1-13}
&\multicolumn{3}{c||}{$d_\mathrm{v}=3$}&\multicolumn{3}{c||}{$d_\mathrm{v}=4$}&\multicolumn{3}{c||}{$d_\mathrm{v}=5$}&\multicolumn{3}{c||}{$d_\mathrm{v}=6$}\\
\cline{2-13}
&$g=6$&$g=8$&$g=10$&$g=6$&$g=8$&$g=10$&$g=6$&$g=8$&$g=10$&$g=6$&$g=8$&$g=10$\\
\cline{1-13}
$a_{\max}$&$\begin{array}{@{}c@{}} 6(4) \end{array} $&$\begin{array}{@{}c@{}} 11(6) \end{array} $&$\begin{array}{@{}c@{}} 16(8) \end{array} $&$\begin{array}{@{}c@{}} 6(5) \end{array} $&$\begin{array}{@{}c@{}}15(9) \end{array} $&$\begin{array}{@{}c@{}}24(15) \end{array} $&$\begin{array}{@{}c@{}}8(6) \end{array} $&$\begin{array}{@{}c@{}}19(12) \end{array} $&$\begin{array}{@{}c@{}}43(24) \end{array} $&$\begin{array}{@{}c@{}}8(7) \end{array} $&$\begin{array}{@{}c@{}}21(15) \end{array} $&$\begin{array}{@{}c@{}}46(35) \end{array} $\\
\cline{1-13}
$b'_{\max}$&$\begin{array}{@{}c@{}} 9 \end{array} $&$\begin{array}{@{}c@{}} 9 \end{array} $&$\begin{array}{@{}c@{}} 9 \end{array} $&$\begin{array}{@{}c@{}} 12 \end{array} $&$\begin{array}{@{}c@{}}12 \end{array} $&$\begin{array}{@{}c@{}}12 \end{array} $&$\begin{array}{@{}c@{}}16 \end{array} $&$\begin{array}{@{}c@{}}16 \end{array} $&$\begin{array}{@{}c@{}}16 \end{array} $&$\begin{array}{@{}c@{}}20 \end{array} $&$\begin{array}{@{}c@{}}20 \end{array} $&$\begin{array}{@{}c@{}}20 \end{array} $\\
\cline{1-13}
\end{tabular}
\end{table*}

\begin{figure}[] 
\centering
\includegraphics [width=0.6\textwidth]{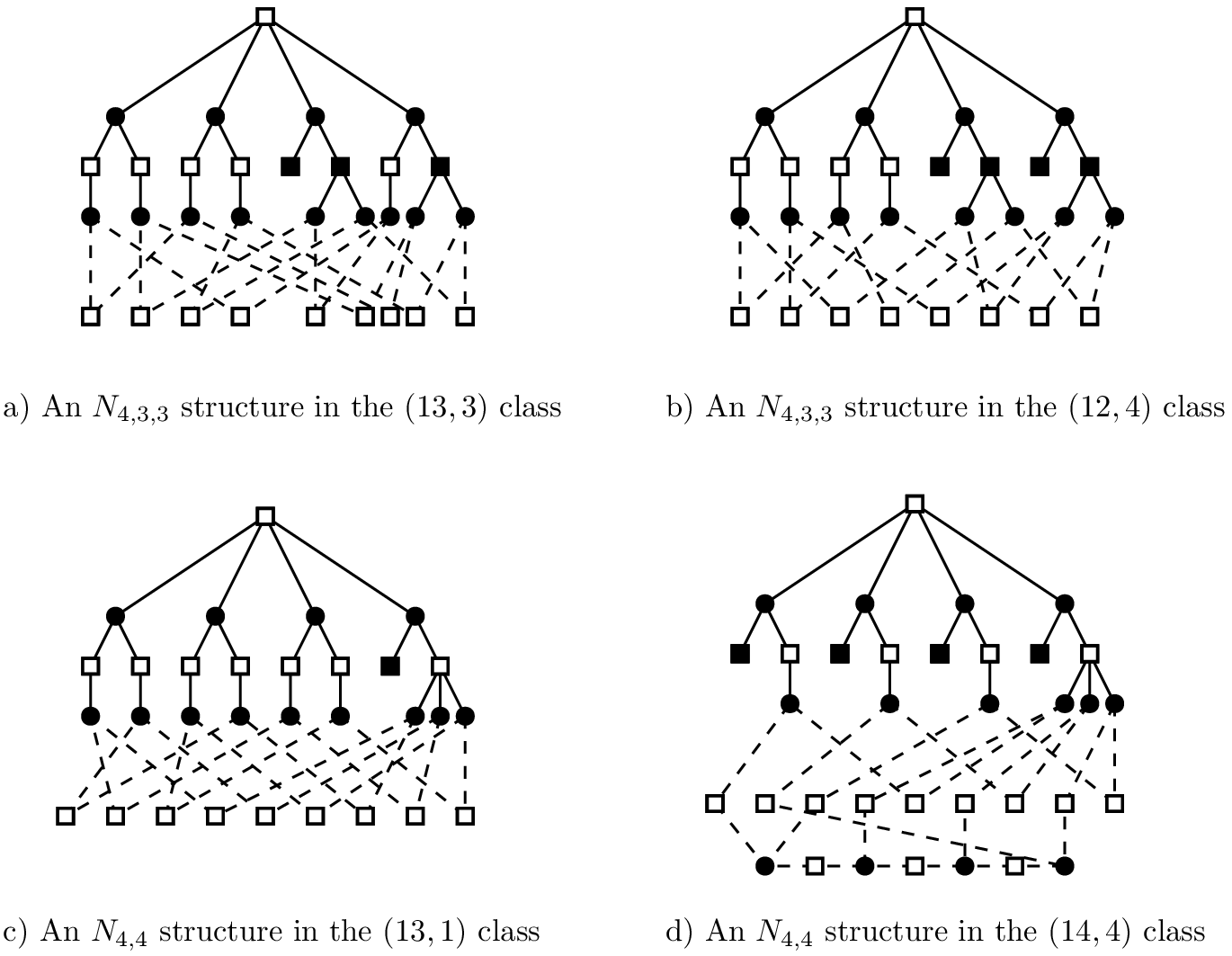}
\caption{Four examples of smallest possible NETS structures in $N_{4,3,3}$ and $N_{4,4}$ for variable-regular graphs with $d_\mathrm{v}=3$, $g=8$, in the range $b \leq 4$.}
\label{smallNETS}
\end{figure}

Using Corollary \ref{cor:netsall}, one can find the $b'_{\max}$ value which indicates the range of $b$ values for ETSs that are required for the exhaustive search of the desired NETSs. The $b'_{\max}$ values for graphs with different $d_\mathrm{v}$ values are also provided in Table \ref{tab:lowernets}. In Table \ref{tab:lowernets}, we have also included the lower bound on the size of the smallest possible NETS with $b \leq 4$ in brackets. As an example, the entries corresponding to $d_\mathrm{v}=3$, and $g=8$ in Table \ref{tab:lowernets} show that, for such variable-regular graphs, we can exhaustively search all the NETSs with $a=6,7,8,9,10,11$ and $b \leq 4$.
 
The pseudo-code of the proposed search algorithm is given in Algorithm \ref{algnets}.  In the proposed \textit{dot-based} NETS search algorithm, the input is the exhaustive 
list of  ETSs  in the range of $a \leq a_{max}-1$ and $b'_{max}$. In the search process, $dot$ expansion is applied to any instance of TSs (ETSs and NETSs) in the interest range of $a\leq a_{max}-1$ and $b \leq b'_{max}$. 
The sets ${\cal I}_{TS}^{a}$ and ${\cal I}_{ETS}^{a}$  are the sets of all the instances of TSs and ETSs in the $(a,b)$ classes with $b \leq b'_{\max}$, respectively. The set ${\cal I}_{NETS}^{a}$ is the set of all the instances of NETSs in the $(a,b)$ classes with $b \leq b_{\max}$.

\begin{algorithm}
\caption{{\bf (NETS Search)} Finds list of the instances of $(a,b)$ NETS structures of a variable-regular Tanner graph $G=(V,E)$ with girth $g$ and variable degree $d_\mathrm{v} $, for $a \leq a_{\max}$ and $b \leq b_{max}$ ($a_{\max}$
is obtained from Table~\ref{tab:lowernets} for $b_{max}=4$). The input is all the instances of $(a,b)$ ETS structures, in the range $a \leq a_{\max}-1$ and $b \leq b'_{\max}$, $\mathcal{I}_{ETS}$ ($b'_{\max}$
is obtained from Table~\ref{tab:lowernets}).  The output is the set ${\cal I}_{NETS}$, which contains  the instances of NETSs in the interest range. ${\cal I}_{NETS}$= \textbf{NetsSrch} (${\cal I}_{ETS}, a_{max}, b_{max}$)}
\label{algnets}
 \begin{algorithmic} [1] 
\State \textbf{Inputs:} $G, g, d_\mathrm{v}, \mathcal{I}_{ETS}$ ($a_{max}$, $b_{max}=4$).
\State  \textbf{Initializations:}  $ {\cal I}_{NETS}^{a} \gets \emptyset; {\cal I}_{TS}^{a}={\cal I}_{ETS}^{a}, ~\forall ~a \leq a_{\max}$; $b'_{\max}$ is obtained from Table \ref{tab:lowernets}.
\For {$a=g/2, \dots, a_{\max}-1$}
\For {any structure $\mathcal{S} \in {\cal I}_{TS}^{a}$}
\parState {Consider $\mathcal{V}$ to be the set of variable nodes in $V \setminus \mathcal{S}$  which have at least two connections to the check nodes in $\Gamma{(\mathcal{S})}$.} \label{dotnets}
\vspace{1pt}
\For {each variable node $v \in \mathcal{V}$}
\parState{$\mathcal{S}'=\{\mathcal{S} \cup v\} \setminus {\cal I}_{TS}^{a+1}$.}
\parState{$b=|\Gamma_{o}{(\mathcal{S}')}|$.}
\If {$b \leq b'_{\max}$}
\parState {${\cal I}_{TS}^{a+1}= {\cal I}_{TS}^{a+1} \cup \mathcal{S}'$.}
\If {$b \leq b_{\max}$}
\parState {${\cal I}_{NETS}^{a+1}= {\cal I}_{NETS}^{a+1} \cup \mathcal{S}'$.}
\EndIf
\EndIf
\EndFor
\EndFor
\EndFor
\State \textbf{Output:} ${\cal I}_{NETS}= \{{\cal I}_{NETS}^{a},~\forall ~a \leq a_{\max}\}$.
\end{algorithmic}
\end{algorithm}



\begin{rem}
\label{rem:exhau}
We note that if in Algorithm \ref{algnets}, we increase the value of $a_{max}$ beyond that of Table \ref{tab:lowernets} (but less than the one in Table \ref{tab:disc}), 
by exhaustive search of ETSs in the range of $a \leq a_{max}$ and $b \leq b'_{max}$, we can still find all the $N_{3},N_{3,3},N_{3,3,3},N_{3,3,3,3},N_{4},N_{4,3}$ structures in the new range of $a \leq a_{max}$ and $b \leq 4$, 
but there is no guarantee to find the other NETS structures in the new range exhaustively.
\end{rem}

\subsection{Non-Exhaustive Search  of NETSs in Variable-Regular LDPC Codes}
\label{sec:non-ex}

The  exhaustive search of NETSs proposed in Subsection \ref{sec:nets} has two limitations. First, the value of $b'_{\max}$ obtained in Subsection \ref{sec:nets}, is rather large which implies a high complexity for the exhaustive search of ETSs. 
Moreover, for the given values of $d_\mathrm{v}$, $g$ and  $b_{\max}$, the value of $a_{\max}$ is relatively small. For these two reasons, we  propose a non-exhaustive search of NETSs in a wider range of $a$ and $b$ values based on setting $b'_{\max}=b_{\max}+t$, where $t\geq 1$, instead of the value indicated in Table~\ref{tab:lowernets}.  Our experimental results show that by increasing $b'_{\max}$ beyond $b_{\max}+2$, 
the number of new NETSs that can be found in the interest range is negligible.
 
The NETS search algorithm proposed in Algorithm \ref{algnets} can also be used for the non-exhaustive search of NETSs. As the input, in this case, one should find and provide all the ETSs in the range $a \leq a_{max}$ and $b \leq b'_{max}$. 
However, since the $b'_{max}$ is less than the value given in Table \ref{tab:lowernets}, the list of NETSs, $\mathcal{I}_{NETS}$, would be non-exhaustive. One should also note that since the algorithm imposes no restriction 
on the degree of check nodes of searched NETSs, by increasing $a_{\max}$, some other NETSs with combination of different check node degrees can be found as well. 

\subsection{Search Algorithm to Find NETSs in Irregular LDPC Codes}
\label{sec:irreg}

Due to the variety of variable degrees in variable-irregular LDPC codes, we are not able to provide results similar to those in Subsection \ref{sec:nets} in relation to exhaustive search of NETSs in irregular codes.
Algorithm \ref{algnets} can, however, be still used for the non-exhaustive search of NETSs in irregular graphs. To obtain an exhaustive list of ETSs as the input to Algorithm \ref{algnets} in this case, one can use the search algorithms of \cite{hashemiireg}.
 
\section{Bounds on the Stopping Distance of LDPC Codes}
\label{sec:bounds}
Stopping sets can be viewed as a subset of TSs, where any check node has a degree of at least two. Elementary SSs (ESSs) and non-elementary SSs (NESSs) are thus subsets of ETSs and NETSs, respectively.
In the following, we tailor/modify the results established for ETSs and NETSs for ESSs and NESSs, respectively. 

\subsection{Lower Bound on the Stopping Distance of Variable-Regular LDPC Codes}
\label{sec:lowbound}

By definition, an ESS is a TS for which the degree of all the check nodes is $2$. Any  ESS thus corresponds to a LETS with $b=0$. The following lower bound on stopping distance is simple to prove.

\begin{pro}
\label{lem:smallpossi}
The result of Theorem~\ref{lowf} with $k=2$ and $b=0$ provides a lower bound, $L_{SS_1}$, on $s_{\min}$ for variable-regular LDPC codes.
\end{pro}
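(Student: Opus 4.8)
The plan is to split the set of all stopping sets of a variable-regular LDPC code into the elementary ones (ESSs) and the non-elementary ones (NESSs), derive a size lower bound for each family from Theorem~\ref{lowf} evaluated at $b=0$, and then observe that the ESS bound is the smaller of the two, so it lower-bounds $s_{\min}$.

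First I would treat the ESS case. A (nonempty) ESS $\mathcal{S}$ has every check node of its induced subgraph of degree exactly $2$ and, being a stopping set, no degree-$1$ check node, so $\Gamma_o(\mathcal{S})=\emptyset$, i.e.\ $b=0$; also $G(\mathcal{S})$ contains at least one check node, necessarily of degree $k=2$, since each variable node has degree $d_\mathrm{v}\ge 2$. The admissibility hypothesis of Theorem~\ref{lowf}, namely $b<k(d_\mathrm{v}-1)+(k\bmod 2)=2(d_\mathrm{v}-1)$, holds because $b=0$ and $d_\mathrm{v}\ge 2$. Hence Theorem~\ref{lowf} with $k=2$ and $b=0$ applies and gives $|\mathcal{S}|\ge L_{SS_1}$, where $L_{SS_1}$ is by definition the right-hand side of (\ref{ineq1}) evaluated at $k=2$, $b=0$.

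Next I would treat the NESS case. A NESS $\mathcal{S}$ is a NETS with $b=0$ (again no degree-$1$ check nodes), and, being non-elementary, $G(\mathcal{S})$ contains at least one check node of some degree $k\ge 3$; for that $k$ the condition $0=b<k(d_\mathrm{v}-1)$ is again satisfied. By Lemma~\ref{lem:comb1}, the bound of Theorem~\ref{lowf} with this $k$ and $b=0$ lower-bounds $|\mathcal{S}|$. The remaining point — and the only one needing a moment's care — is to invoke the monotonicity noted right after Theorem~\ref{lowf}: for fixed $b$, $g$, $d_\mathrm{v}$, the value of (\ref{ineq1}) is non-decreasing in $k$. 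Thus $|\mathcal{S}|$ is at least the value of (\ref{ineq1}) at $k=3$, $b=0$, which is at least its value at $k=2$, $b=0$, i.e.\ $|\mathcal{S}|\ge L_{SS_1}$. (One could even use Corollary~\ref{cor:nets} to take $k=4$ here, but $k=3$ already suffices.)

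Combining the two cases, every stopping set — elementary or not — has size at least $L_{SS_1}$, hence $s_{\min}\ge L_{SS_1}$, which is the claim. So the proof is essentially a two-case reduction to Theorem~\ref{lowf}; I do not expect any genuine obstacle, only the bookkeeping of (i) checking the hypothesis $b<k(d_\mathrm{v}-1)+(k\bmod 2)$ in both branches (trivial since $b=0$), and (ii) using the monotonicity of (\ref{ineq1}) in $k$ so the NESS branch never yields a bound weaker than $L_{SS_1}$. If desired, Remark~\ref{cor:fargh} could be appended to sharpen $L_{SS_1}$ to $L_{SS_1}+1$ when $d_\mathrm{v}$ is odd and the expression in (\ref{ineq1}) comes out odd.
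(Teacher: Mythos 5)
Your ESS branch is fine and matches what the paper intends (an ESS is an $(a,0)$ TS all of whose check nodes have degree $2$, so Theorem~\ref{lowf} with $k=2$, $b=0$ applies directly). The gap is in your NESS branch, and it is a genuine one: you assert that a NESS is "a NETS with $b=0$ (again no degree-1 check nodes)", but in this paper $b=|\Gamma_o(\mathcal{S})|$ counts \emph{odd-degree} (unsatisfied) check nodes, not degree-$1$ check nodes. A stopping set has no degree-$1$ checks, yet a non-elementary stopping set can perfectly well contain odd-degree check nodes of degree $3$ or more — this is exactly why the paper introduces the classes $SS_3, SS_{3,3},\dots$ and why Proposition~\ref{lem:smallpossinon} invokes Theorem~\ref{lowf} with $k=3$ and $b=1$ (the degree-$3$ check node is itself unsatisfied, forcing $b\geq 1$). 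Consequently, applying Theorem~\ref{lowf}/Lemma~\ref{lem:comb1} "with this $k$ and $b=0$" to such a NESS is not a legitimate use of those results: the bound (\ref{ineq1}) is stated for the actual $(a,b)$ class of the trapping set, and since $T=k(d_\mathrm{v}-1)-b'$ decreases in $b$, the $b=0$ value is \emph{larger} than what the theorem guarantees for a structure with $b\geq 1$. The monotonicity in $k$ you invoke is only stated at fixed $b$, so it does not close this hole; you would additionally have to show that the bound at $(k\geq 3,\,b_{\mathrm{actual}})$ dominates the bound at $(2,0)$ for every $b_{\mathrm{actual}}$ a NESS can have, which is not addressed.

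The simplest repair (and, given the remark that $L_{SS_1}$ coincides with the Orlitsky et al.\ girth bound, presumably the paper's intent) avoids the case split altogether: in \emph{any} stopping set every check node has degree at least $2$, so the tree-like expansion of $G(\mathcal{S})$ from an arbitrary check node, as in the proof of Theorem~\ref{lowf}, contains at least as many variable nodes as the minimal configuration counted there with $k=2$, $b=0$ (where all checks are assumed to have degree exactly $2$ and none has degree $1$); higher check degrees only enlarge the count. This gives $a\geq L_{SS_1}$ for every stopping set, elementary or not, in one stroke. Your closing remark about sharpening via Remark~\ref{cor:fargh} is fine but incidental.
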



\begin{rem}
We note that the result of Proposition \ref{lem:smallpossi} is essentially the same as the lower bound obtained in \cite{Orlit}  on the stopping distance of variable-regular LDPC codes.
\end{rem}

To potentially improve the lower bound of Proposition~\ref{lem:smallpossi}, $L_{SS_1}$, we use the fact that ESSs, as a special case of LETSs, 
have a graphical structure that lends itself well to the efficient exhaustive \textit{dpl} search algorithm of~\cite{hashemireg}. 
Using the $dpl$ search algorithm with $b_{\max}=0$, we can efficiently and exhaustively find all the ESSs of a variable-regular LDPC code with a maximum given size $a_{\max}$.
In the following, we establish a lower bound,  $L_{SS_2}$ ($L_{SS_2} > L_{SS_1}$) , on the size of smallest NESSs. We then perform an exhaustive $dpl$-based search of ESSs of maximum size $a_{\max} = L_{SS_2} -1$.
If this search does not find any ESS, then we establish $L_{SS_2} \leq s_{\min}$. Otherwise, the smallest size of found ESSs is the exact value of $s_{\min}$.

\begin{pro}
\label{lem:smallpossinon}
The result of Theorem~\ref{lowf} with $k=3$ and $b=1$ provides a lower bound, $L_{SS_2}$, on the size of  NESSs.
\end{pro}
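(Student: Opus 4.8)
The plan is to reduce this to a direct application of Theorem~\ref{lowf}, just as was done for Proposition~\ref{lem:smallpossi}. First I would recall that a NESS is, by definition, a stopping set (so $b=0$ in the sense that it has no degree-one check nodes when viewed as a stopping set) that is non-elementary, i.e., its induced subgraph contains at least one check node of degree $k \geq 3$. The key observation is that although a NESS has no unsatisfied check nodes, the tree-like argument underlying Theorem~\ref{lowf} applies to \emph{any} $(a,b)$ TS whose induced subgraph contains a check node of degree $k$; here we exploit Lemma~\ref{lem:comb1}, which states explicitly that the lower bound of Theorem~\ref{lowf} applies to the size $a$ of an $(a,b)$ NETS whose induced subgraph contains at least one check node of degree $k \geq 3$.

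Next I would argue that, among NETSs, the NESSs form a subset: a NESS is a NETS with $b=0$. So I can invoke Lemma~\ref{lem:comb1} with $b=0$. Now the question is which value of $k$ to use. Since $k \geq 3$ and, by inspection of the right-hand side of (\ref{ineq1}), the bound is monotonically increasing in $k$ (this monotonicity is noted in the text right after Theorem~\ref{lowf}), the smallest guaranteed bound over all admissible NESSs is obtained by taking $k=3$ — every NESS has some check node of degree $\geq 3$, hence in particular satisfies the hypothesis ``contains at least one check node of degree $3$ or more'', and $k=3$ gives the weakest (but universally valid) instance of the bound. Then I would substitute $k=3$ into Theorem~\ref{lowf}. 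With $k=3$ we get $k \bmod 2 = 1$, so $b' = b - 1$. Setting $b = 1$ makes $b' = 0$ and $T = k(d_\mathrm{v}-1) - b' = 3(d_\mathrm{v}-1)$; this is exactly the parametrization claimed in the statement. One should check the side condition $b < k(d_\mathrm{v}-1) + (k \bmod 2) = 3(d_\mathrm{v}-1) + 1$, which holds for $b=1$ whenever $d_\mathrm{v} \geq 2$.

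The slightly delicate point — and the only real content beyond bookkeeping — is reconciling ``a NESS has $b=0$'' with ``apply Theorem~\ref{lowf} with $b=1$''. The resolution: when we isolate a degree-$3$ check node $w$ of the NESS and view it as the root of the tree-like expansion (as in Fig.~\ref{treegen}), the counting argument in the proof of Theorem~\ref{lowf} treats the parity at $w$ via the quantity $b' = b - (k \bmod 2)$, which for $k=3$ subtracts exactly one unit to account for $w$ itself being odd-degree in the local expansion; taking $b=1$ in the formula then correctly yields $b'=0$, matching the fact that the genuine NESS has no \emph{other} odd-degree check nodes. In other words, the ``$b=1$'' here is a formal device that, through the $k \bmod 2$ correction, produces the bound appropriate to a stopping set containing one degree-$3$ check node and otherwise only degree-$2$ check nodes — which is precisely the minimum-size configuration one must rule out. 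I would therefore conclude: by Lemma~\ref{lem:comb1} applied with $k=3$ and $b=1$, the value on the right-hand side of (\ref{ineq1}) is a lower bound on the size of every NESS, and we denote it $L_{SS_2}$; that $L_{SS_2} > L_{SS_1}$ follows from the monotonicity of (\ref{ineq1}) in $k$ together with the fact that $L_{SS_1}$ is the same formula evaluated at $k=2$. This last comparison is the one place I would double-check the arithmetic carefully, since for some small $d_\mathrm{v}$ the relevant rounding terms (the $\max$ and ceiling in the $g/2$-odd branch) could in principle interact unexpectedly, but the strict inequality should hold in all the listed cases.
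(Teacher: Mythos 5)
Your argument rests on a false premise about what $b$ means. By the paper's definitions, $b=|\Gamma_{o}(\mathcal{S})|$ counts the \emph{odd-degree} (unsatisfied) check nodes of $G(\mathcal{S})$, not the degree-one check nodes. A stopping set has no degree-one check nodes, but a non-elementary stopping set containing a degree-$3$ (or degree-$5$, \dots) check node necessarily has $b\geq 1$; the paper's own examples, e.g.\ the $SS_{3,3,3,3,3}$ structure in Fig.~\ref{smallNESS}, have $b=5$. So the claim ``a NESS is a NETS with $b=0$'' is wrong, and so is your later assertion that ``the genuine NESS has no \emph{other} odd-degree check nodes.'' This matters because your strategy is to plug the NESS's own class parameters into Theorem~\ref{lowf}/Lemma~\ref{lem:comb1}: for a NESS with, say, three degree-$3$ check nodes, the honest application uses $k=3$, $b=3$, whose right-hand side in (\ref{ineq1}) is \emph{smaller} than the $k=3$, $b=1$ value $L_{SS_2}$, so your argument does not show that such a NESS has size at least $L_{SS_2}$. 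Monotonicity in $k$ only disposes of NESSs whose non-elementary check nodes all have degree $\geq 4$; it says nothing about NESSs with several odd-degree check nodes.

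The missing step is to use the stopping-set property inside the tree-counting argument itself: since a SS has no degree-one check node, every non-root check node in the tree-like expansion has degree at least $2$, so no branch is terminated by an unsatisfied check node --- odd-degree ($\geq 3$) check nodes only add edges and variable nodes, never remove them. Hence the minimum-size configuration for a NESS is exactly one degree-$3$ check node at the root with all remaining check nodes of degree $2$, and that is precisely the configuration counted by Theorem~\ref{lowf} with $k=3$ and $b=1$ (the root is the single unsatisfied check node, giving $b'=0$ and $T=3(d_\mathrm{v}-1)$). With this observation every NESS, whatever its actual $b$, has size at least $L_{SS_2}$; combining it with the monotonicity in $k$ for the case where the smallest non-elementary check-node degree exceeds $3$ completes the proof. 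This is exactly the consideration the paper invokes elsewhere (item ($iv$) in the proof of Theorem~\ref{pro:nessexh}) and the one your ``$b=0$'' bookkeeping replaces incorrectly; the remaining substitutions ($b'=0$, $T=3(d_\mathrm{v}-1)$, the side condition on $b$, and $L_{SS_2}>L_{SS_1}$) are fine.
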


To further improve the lower bound of Proposition~\ref{lem:smallpossinon} on $s_{\min}$, if possible, we need to perform an exhaustive search of NESSs. 
This can be performed, by using the NETS search algorithm of Section \ref{sec:searTS} with some modifications as described below.

We first note that, compared to Subsection \ref{sec:nets}, here, we are not interested in NETSs with unsatisfied check nodes of degree-$1$. This implies that the range of exhaustive search for NESSs, as a subset of NETSs, can be potentially increased.

We use notations $SS_{3}$, $SS_{3,3}$, $SS_{3,3,3}$, $SS_{3,3,3,3}$, to denote NESSs with only one up to four check nodes of degree $3$, respectively. 
Notations $SS_4$ and $SS_{4,3}$ are used for NESSs that contain only one degree-$4$ check node and only one degree-$4$ and one degree-$3$ check nodes, respectively. 
Similar to Subsection \ref{sec:nets}, we limit the search of NESSs to the  following configurations:  $SS_{3}$,$SS_{3,3}$, $SS_{3,3,3}$, $SS_{3,3,3,3}$, $SS_{4}$, and $SS_{4,3}$.
The following result is in parallel with Corollary~\ref{cor:netsall}. 

\begin{cor}
\label{cor:ssall}
In variable-regular Tanner graphs with variable degree $d_\mathrm{v}$, all the $SS_{3}$,$SS_{3,3}$, $SS_{3,3,3}$, $SS_{3,3,3,3}$, $SS_{4}$, and $SS_{4,3}$ in the interest range of $a \leq a_{\max}$ ($a_{\max}$ less than the value in Table \ref{tab:disc}) can be found by up to four successive applications of $dot_m$ expansions to all the LETSs in the range of $a \leq a_{\max}-1$ and $b \leq b'_{\max}= 4+\max\{(3d_\mathrm{v}-4),(4d_\mathrm{v}-8)\}$.
\end{cor}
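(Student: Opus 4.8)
The plan is to mirror the proof strategy already developed for NETSs in Corollary~\ref{cor:netsall}, observing that the only substantive difference is that NESSs forbid degree-1 check nodes, so the $dot_m$ expansions never need to connect the new variable node to unsatisfied check nodes of degree~1 within the substructure, and the base objects being expanded are LETSs (ESSs are precisely LETSs with $b=0$) rather than general ETSs. First I would invoke Corollaries~\ref{cor:nets3}--\ref{cor:nets34} specialized to the stopping-set setting: a $SS_3$ (one degree-3 check node, rest degree-2) is, by Proposition~\ref{lem:nets3} with no degree-1 check nodes present, obtainable from a LETS substructure via a single $dot_m$ expansion in which the new variable node attaches to one degree-2 check node and $m-1$ other degree-2 check nodes of the substructure; the substructure lies in an $(a-1,b)$ class with $b = 2m-2$, hence $b \leq 2d_\mathrm{v}-2 \leq 3d_\mathrm{v}-4$. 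Iterating, $SS_{3,3}$, $SS_{3,3,3}$, $SS_{3,3,3,3}$ are obtained by up to four successive $dot_m$ expansions, and the analogue of Corollary~\ref{cor:netsf} gives that covering all $SS_{3,\ldots,3}$ with $f\leq 4$ degree-3 check nodes requires LETSs with $b \leq 4 + f(d_\mathrm{v}-2) \leq 4 + 4(d_\mathrm{v}-2) = 4d_\mathrm{v}-4$.

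Next I would handle the $SS_4$ and $SS_{4,3}$ cases exactly as in Propositions~\ref{lem:nets4} and~\ref{lem:nets34}: a $SS_4$ is obtained by one $dot_m$ expansion from a $SS_3$, hence by two successive expansions from a LETS, and the corresponding bound (parallel to Corollary~\ref{cor:nets4}) is $b'_{\max} = 4 + 2d_\mathrm{v}-2 = 2d_\mathrm{v}+2$; a $SS_{4,3}$ is obtained by one $dot_m$ expansion from a $SS_4$ or a $SS_{3,3}$, hence by three successive expansions from a LETS, and the corresponding bound (parallel to Corollary~\ref{cor:nets34}) is $b'_{\max} = 4 + 3d_\mathrm{v}-4 = 3d_\mathrm{v}$. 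Taking the union over all six configurations, the total number of successive $dot_m$ expansions needed is at most four (the maximum over the six cases, attained by $SS_{3,3,3,3}$), and the required range of $b$ values for the base LETSs is the maximum of all the individual $b'_{\max}$ values, namely $4 + \max\{3d_\mathrm{v}-4,\,4d_\mathrm{v}-8\}$, matching the statement. The $a$-range for the base LETSs follows from the fact that each $dot_m$ expansion adds exactly one variable node, so a NESS of size $a$ descends from a LETS of size at least $a - 4$; but since every $dot_m$ with $m \geq 2$ adds a variable node with at least two connections, and because a $dot$ expansion can absorb up to $d_\mathrm{v}$ check-node connections at once, the sharper bookkeeping of Corollary~\ref{cor:netsf} (using $\lceil f/d_\mathrm{v}\rceil$) collapses, for $f \leq d_\mathrm{v}$, to requiring only $a \leq a_{\max}-1$, which is what the statement claims.

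The main obstacle I expect is making the ``at most one variable node lost per relevant expansion, uniformly across all six configurations'' bookkeeping airtight, i.e. justifying the $a \leq a_{\max}-1$ rather than $a \leq a_{\max}-4$: one must argue that in the worst configuration ($SS_{3,3,3,3}$, which genuinely needs four expansions) the four added variable nodes can be chosen so that removing just the \emph{last} one still leaves a legitimate smaller NESS/ETS whose own descent is already covered recursively — this is the same subtlety handled implicitly in Corollary~\ref{cor:netsf} via the $\lceil f/d_\mathrm{v}\rceil$ term, and here I would note that since all target configurations have $f + (\text{number of degree-4 nodes}) \leq 4 \leq d_\mathrm{v}$ is \emph{not} always true (e.g. $d_\mathrm{v}=3$), one must instead argue directly that a single degree-4 or a cluster of degree-3 check nodes can be created by adding just one variable node whose removal returns to a covered substructure, so that the induction depth and the size loss decouple. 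Beyond this, the remaining steps are routine specializations of the already-proved Propositions and Corollaries for NETSs, with ``ETS'' replaced by ``LETS'' and ``$b_{\max} = 4$'' hard-coded, and the proof is therefore omitted in the same spirit as the other results in this subsection whose proofs parallel the NETS case.
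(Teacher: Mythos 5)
Your overall route is the paper's route: the paper gives no separate proof of this corollary, stating it ``in parallel with Corollary~\ref{cor:netsall}'' (i.e., the NETS chain with $b_{\max}=4$ hard-coded, since all six target configurations have at most four odd-degree check nodes) and justifying the replacement of ETSs by LETSs in the remark following Algorithm~\ref{alg2}; your per-configuration maxima $4+f(d_\mathrm{v}-2)$, $2d_\mathrm{v}+2$ and $3d_\mathrm{v}$ do reproduce $b'_{\max}=4+\max\{3d_\mathrm{v}-4,4d_\mathrm{v}-8\}$. However, the premise on which you build the specialization is wrong: you claim that because NESSs forbid degree-1 check nodes, ``the $dot_m$ expansions never need to connect the new variable node to unsatisfied check nodes of degree 1 within the substructure.'' It is exactly the opposite. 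The intermediate structures in the chain are ordinary ETSs/NETSs with many degree-1 check nodes --- that is precisely what the large $b'_{\max}$ budget is accounting for --- and later expansions must attach to those degree-1 check nodes in order to eliminate them so that the final structure is a stopping set. Concretely, an $SS_3$ of size $a$ is the $dot_{d_\mathrm{v}}$ child of an ETS in the $(a-1,\,d_\mathrm{v}-1)$ class, the new variable node attaching to one degree-2 check node and to all $d_\mathrm{v}-1$ degree-1 check nodes of the substructure (Proposition~\ref{lem:nets3} with $b=1$, $m=d_\mathrm{v}$); your version, attaching to $m$ degree-2 check nodes of a substructure in class $b=2m-2$, would create $m$ degree-3 check nodes and leave degree-1 check nodes behind, so it characterizes neither an $SS_3$ nor a stopping set. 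The same false premise is also what you offer as the reason the bases can be taken to be \emph{leafless} ETSs --- the one point where this corollary genuinely differs from Corollary~\ref{cor:netsall} --- so that claim is left unsupported; the paper's reason is the leaflessness of NESSs themselves (in a stopping set every variable node sees only check nodes of degree at least two), not any avoidance of degree-1 check nodes during expansion.

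Two further corrections. The ``main obstacle'' you devote your last paragraph to is not an obstacle: the corollary is a sufficiency statement about the input list, each peeling step removes exactly one variable node and at least one step is needed (a NESS is not an ETS), so every base lies at size at most $a_{\max}-1$ and is automatically contained in the range $a\leq a_{\max}-1$; the $\lceil f/d_\mathrm{v}\rceil$ term of Corollary~\ref{cor:netsf} only \emph{tightens} the range for a fixed $f$, and taking the union over configurations simply yields $a_{\max}-1$, so no argument about choosing the four added variable nodes or ``removing just the last one'' is required. Finally, the intermediates you call $SS_3$, $SS_4$, $SS_{3,3}$ are in general not stopping sets but NETSs with degree-1 check nodes, and per Propositions~\ref{lem:nets4} and~\ref{lem:nets34} a structure with one degree-4 (and one degree-3) check node is characterized as a $dot_m$ child of an $N_3$ (respectively of an $N_3$ or an $N_{3,3}$), not of an ``$SS_4$''; the expansion counts you quote happen to be right, but the provenance as stated does not match the propositions you are invoking.
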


The following result (parallel to Theorem~\ref{pro:netsexh}) provides the range in which the NESS search is exhaustive.

\begin{theo}
\label{pro:nessexh}
For a variable-regular Tanner graph with variable-degree  $d_\mathrm{v}$ and girth $g$, consider the union of sets $SS_{3}$,$SS_{3,3}$, $SS_{3,3,3}$, $SS_{3,3,3,3}$, $SS_{4}$, and $SS_{4,3}$, obtained by 
successive applications of $dot_m$  expansions ($m\geq 2$) to LETSs within the range indicated in Corollary~\ref{cor:ssall}. For $d_\mathrm{v}=3,4,5,6$, and $g=6,8,10$, Table \ref{tab:lowerSSs} provides the value of $a_{\max}$ such that
such a union gives an exhaustive list of NESSs of the Tanner graph within the range of $a \leq a_{max}$.
\end{theo}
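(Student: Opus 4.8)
The plan is to mirror, almost verbatim, the structure of the proof of Theorem~\ref{pro:netsexh}, adjusting only for the fact that NESSs cannot contain any degree-$1$ check node (so all unsatisfied check nodes are absent and every check node has degree $\geq 2$). Since the sets $SS_{3},SS_{3,3},SS_{3,3,3},SS_{3,3,3,3},SS_{4},SS_{4,3}$ are exactly the NESSs that the $dot_m$-based search of Corollary~\ref{cor:ssall} reaches, the exhaustiveness of the search within $a \leq a_{\max}$ is limited by the size of the smallest NESS structure lying \emph{outside} this collection, i.e. the smallest structure in $SS_{3,3,3,3,3}$, $SS_{4,3,3}$, $SS_{4,4}$, or $SS_{5}$. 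First I would argue, as in Theorem~\ref{pro:netsexh}, that $SS_{3,3,3,3,3}$ need not be considered here because any such structure, being a stopping set with five degree-$3$ check nodes, forces $b=0$ for the \emph{stopping set} but the structure itself must contain five odd-degree check nodes when viewed as a trapping set-style expansion — more precisely, a stopping set is required to have \emph{no} degree-$1$ checks, so the only admissible configuration with five degree-$3$ checks still keeps them as genuine (degree-$3$, hence odd) check nodes of the induced subgraph; one shows the minimal such structure is no smaller than the minimum over the other three families, so it is not the binding constraint. I would set $a^\ast = \min\{a^\ast_{4,3,3}, a^\ast_{4,4}, a^\ast_{5}\}$ and record $a_{\max} = a^\ast - 1$ in Table~\ref{tab:lowerSSs}.

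For the three binding families I would compute (or lower-bound) the minimal sizes using the tree-like expansion rooted at the relevant high-degree check node, exactly as in the proof of Theorem~\ref{pro:netsexh}, with one simplification: because a NESS has no unsatisfied (degree-$1$) check node, every non-root check node in the minimizing configuration has degree exactly $2$, except for the one or two additional prescribed check nodes of degree $3$ or $4$. For $SS_5$ I would apply Theorem~\ref{lowf} with $k=5$ and $b=0$ (using Lemma~\ref{lem:comb1}, which extends the bound to NETSs containing \emph{at least} one degree-$k$ check node) to get $a^\ast_5$. For $SS_{4,4}$ I would root the tree at one degree-$4$ check node (four variable nodes at $L_2$), place the second degree-$4$ check node at $L_3$, make all remaining satisfied check nodes degree-$2$, and then count the minimum number of variable nodes needed through layer $g/2$ — and, for $g/2$ odd, the extra variable nodes forced at $L_{g/2+1}$ by the girth constraint (no new variable node at that layer may connect to the check nodes of any one $L_2$-rooted subtree more than once), and for $g/2$ even the analogous counting at $L_{g/2+2}$. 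For $SS_{4,3,3}$ the computation is the same with one degree-$4$ root, two degree-$3$ check nodes placed at $L_3$ on distinct $L_2$ variable nodes, and all other check nodes degree-$2$. Taking the minimum over $b$ is unnecessary here (unlike Theorem~\ref{pro:netsexh}) since $b=0$ is forced for stopping sets, which actually makes the bookkeeping cleaner.

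I would then carry out the detailed count for one representative entry — say $d_\mathrm{v}=3$, $g=8$ — and state that the remaining entries follow by the same procedure, exactly as the authors did for Theorem~\ref{pro:netsexh}. I expect the main obstacle to be the same as in Theorem~\ref{pro:netsexh}: correctly handling the girth-$g$ layer-counting, i.e. determining, for each $(d_\mathrm{v},g)$, how many variable nodes are forced at the ``overflow'' layer $L_{g/2+1}$ (odd $g/2$) or $L_{g/2+2}$ (even $g/2$) to complete the connections of the check nodes at $L_{g/2}$ without creating a cycle shorter than $g$, and ensuring that the high-degree check node(s) (the second degree-$4$ or the two degree-$3$) are placed so as to minimize this overflow. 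A secondary subtlety is verifying that excluding degree-$1$ check nodes genuinely enlarges the exhaustive range relative to Table~\ref{tab:lowernets} (so that Table~\ref{tab:lowerSSs} has larger $a_{\max}$ entries), which follows because dropping the degree-$1$ checks removes the cheapest way to absorb unsatisfied checks and hence inflates the minimal sizes $a^\ast_{4,3,3}$, $a^\ast_{4,4}$, $a^\ast_5$ of the out-of-reach families.
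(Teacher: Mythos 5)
Your overall plan (mirror the proof of Theorem~\ref{pro:netsexh}, find the smallest structures outside the searchable families, and set $a_{\max}=a^{*}-1$) is indeed the paper's approach, but you make two related errors that leave a genuine gap. First, you discard $SS_{3,3,3,3,3}$ from the minimization. In Theorem~\ref{pro:netsexh} the family $N_{3,3,3,3,3}$ could be ignored only because its members have $b\geq 5$, i.e., they fall outside the prescribed range $b\leq b_{\max}=4$ of that theorem. Here there is no restriction on $b$: the theorem asks for an exhaustive list of \emph{all} NESSs of size at most $a_{\max}$, whatever their number of odd-degree check nodes, and a stopping set with five degree-$3$ check nodes (which has $b=5$, not $b=0$ as you state) is a perfectly admissible NESS that the restricted search cannot reach. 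Its minimum size must therefore be included, $a^{*}=\min\{a^{*}_{4,3,3},a^{*}_{4,4},a^{*}_{5},a^{*}_{3,3,3,3,3}\}$; this is precisely difference ($iii$) in the paper's proof, and the smallest $SS_{3,3,3,3,3}$ structure is even exhibited in Fig.~\ref{smallNESS}. Your assertion that ``the minimal such structure is no smaller than the minimum over the other three families'' is exactly what has to be computed via the tree-like count; asserting it without that computation leaves the exhaustiveness claim unproven (for even $d_\mathrm{v}$ such structures are ruled out by Lemma~\ref{rem:cannot}, but for $d_\mathrm{v}=3,5$ they exist and could a priori be the binding family).

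Second, your premise that ``$b=0$ is forced for stopping sets'' is false and propagates into the computations. A stopping set may contain unsatisfied (odd-degree) check nodes; the only requirement is that every check degree be at least $2$, so the unsatisfied checks of a NESS have odd degree $\geq 3$ (difference ($ii$) in the paper's proof). Hence the structures in $SS_{5}$ have $b=1$ (the degree-$5$ check itself is odd), those in $SS_{4,3,3}$ have $b=2$, and those in $SS_{3,3,3,3,3}$ have $b=5$. Applying Theorem~\ref{lowf} with $k=5$ and $b=0$ uses a parameter pair for which no structure exists and inflates the bound (e.g., it gives $16$ instead of the valid $15$ for $d_\mathrm{v}=3$, $g=8$), so it is not a legitimate lower bound on the size of $SS_{5}$ structures and could overstate the exhaustive range. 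The correct observation is that no minimization over $b$ is needed because $b$ is determined by the prescribed high-degree check nodes, not because $b$ vanishes. With these two corrections (include $SS_{3,3,3,3,3}$, and use the $b$ values dictated by the configurations), your layer-by-layer counting plan coincides with the paper's argument.
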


\begin{table*}[]
\centering
\renewcommand{\arraystretch}{1.1}
\setlength{\tabcolsep}{1pt}
\caption{The maximum size $a_{\max}$ of SSs that can be searched exhaustively by successive applications of $dot_m$ expansions to $(a',b')$ LETSs with size up to $a_{max}-1$ and $b' \leq b'_{\max}$.}
\label{tab:lowerSSs}
\begin{tabular}{||c|c|c|c||c|c|c||c|c|c||c|c|c||}
\cline{1-13}
&\multicolumn{3}{c||}{$d_\mathrm{v}=3$}&\multicolumn{3}{c||}{$d_\mathrm{v}=4$}&\multicolumn{3}{c||}{$d_\mathrm{v}=5$}&\multicolumn{3}{c||}{$d_\mathrm{v}=6$}\\
\cline{2-13}
&$g=6$&$g=8$&$g=10$&$g=6$&$g=8$&$g=10$&$g=6$&$g=8$&$g=10$&$g=6$&$g=8$&$g=10$\\
\cline{1-13}
$a_{\max}$&$\begin{array}{@{}c@{}} 7 \end{array} $&$\begin{array}{@{}c@{}} 13 \end{array} $&$\begin{array}{@{}c@{}} 19 \end{array} $&$\begin{array}{@{}c@{}} 8 \end{array} $&$\begin{array}{@{}c@{}}17 \end{array} $&$\begin{array}{@{}c@{}}29 \end{array} $&$\begin{array}{@{}c@{}}8 \end{array} $&$\begin{array}{@{}c@{}}19 \end{array} $&$\begin{array}{@{}c@{}}43 \end{array} $&$\begin{array}{@{}c@{}}10 \end{array} $&$\begin{array}{@{}c@{}}22 \end{array} $&$\begin{array}{@{}c@{}}56 \end{array} $\\
\cline{1-13}
$b'_{\max}$&$\begin{array}{@{}c@{}} 9 \end{array} $&$\begin{array}{@{}c@{}} 9 \end{array} $&$\begin{array}{@{}c@{}} 9 \end{array} $&$\begin{array}{@{}c@{}} 12 \end{array} $&$\begin{array}{@{}c@{}}12 \end{array} $&$\begin{array}{@{}c@{}}12 \end{array} $&$\begin{array}{@{}c@{}}16 \end{array} $&$\begin{array}{@{}c@{}}16 \end{array} $&$\begin{array}{@{}c@{}}16 \end{array} $&$\begin{array}{@{}c@{}}20 \end{array} $&$\begin{array}{@{}c@{}}20 \end{array} $&$\begin{array}{@{}c@{}}20 \end{array} $\\
\cline{1-13}
\end{tabular}
\end{table*}

\begin{proof}
The proof is similar to the proof of Theorem \ref{pro:netsexh} with the following differences: ($i$) Despite the case of Theorem \ref{pro:netsexh}, where NETSs with only $b \leq 4$ were studied, here, for NESSs, there is no such limitation, and we are interested in NESSs with any value of $b$ (including those with $b>4$), ($ii$) Since there exists no degree-$1$ check node in a SS, the degree of all the unsatisfied check nodes of NESSs should be odd values greater than or equal to $3$,
($iii$) Due to ($i$), in addition to minimum-size structures in $SS_{4,3,3}$, $SS_{4,4}$, $SS_{5}$, one needs to also consider the minimum-size structures in $SS_{3,3,3,3,3}$ as potentially limiting the range of exhaustive search, ($iv$) In the tree-like expansion of the subgraph, to minimize the size of the NESS and due to the non-existence of degree-$1$ check nodes, one needs to assume that except the few check nodes with degree larger than $2$, all the rest of check nodes have degree-$2$.
\end{proof}

Fig. \ref{smallNESS} shows four examples of smallest NESS structures of $SS_{4,3,3}$, $SS_{4,4}$, $SS_{5}$ and $SS_{3,3,3,3,3}$ for graphs with $d_\mathrm{v}=3$ and $g=8$.
\begin{figure}[] 
\centering
\includegraphics [width=0.6\textwidth]{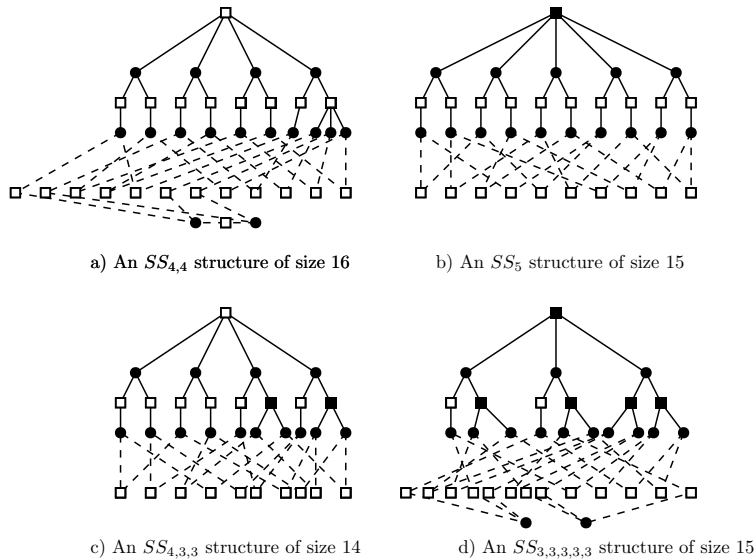}
\caption{Four examples of smallest NESS structures in variable-regular graphs with $d_\mathrm{v}=3,g=8$.}
\label{smallNESS}
\end{figure}

\begin{rem}
Note that the values of $a_{max}$ in Table \ref{tab:lowerSSs} are generally larger than those in Table \ref{tab:lowernets}. 
For example, while the range of exhaustive search of NESSs for variable-regular graphs with $d_\mathrm{v}=3$ and $g=8$ is $13$, this value for the exhaustive search of NETSs is $11$.
\end{rem}

If the exhaustive search of SSs (ESSs and NESSs) up to size $a_{max}$ listed in Table \ref{tab:lowerSSs} fails to find any SS, 
then $L_{SS_3} = a_{max}+1$ is a lower bound on $s_{\min}$. Otherwise, the smallest size of found SSs gives the exact value of $s_{\min}$. 

In Table \ref{tab:lowerboundsSS}, we have listed $L_{SS_3}$, as well as the values of $L_{SS_1}$ and $L_{SS_2}$, obtained from Propositions \ref{lem:smallpossi} and \ref{lem:smallpossinon}, for Tanner graphs with different values of $d_\mathrm{v}$ and $g$. 

\begin{table*}[]
\centering
\renewcommand{\arraystretch}{1.1}
\setlength{\tabcolsep}{0.7pt}
\caption{Values $L_{SS_3}$,$L_{SS_2}$ and $L_{SS_1}$  (as potential lower bounds on  $s_{min}$) for codes with $d_\mathrm{v}=3,4,5,6$, and $g=6,8,10$}
\label{tab:lowerboundsSS}
\begin{tabular}{||c|c|c|c||c|c|c||c|c|c||c|c|c||}
\cline{1-13}
&\multicolumn{3}{c||}{$d_\mathrm{v}=3$}&\multicolumn{3}{c||}{$d_\mathrm{v}=4$}&\multicolumn{3}{c||}{$d_\mathrm{v}=5$}&\multicolumn{3}{c||}{$d_\mathrm{v}=6$}\\
\cline{2-13}
&$g=6$&$g=8$&$g=10$&$g=6$&$g=8$&$g=10$&$g=6$&$g=8$&$g=10$&$g=6$&$g=8$&$g=10$\\
\cline{1-13}
$L_{SS_3}$&$\begin{array}{@{}c@{}} 8 \end{array} $&$\begin{array}{@{}c@{}} 14 \end{array} $&$\begin{array}{@{}c@{}} 20 \end{array} $&$\begin{array}{@{}c@{}} 9 \end{array} $&$\begin{array}{@{}c@{}} 18 \end{array} $&$\begin{array}{@{}c@{}} 30 \end{array} $&$\begin{array}{@{}c@{}} 9 \end{array} $&$\begin{array}{@{}c@{}} 20 \end{array} $&$\begin{array}{@{}c@{}} 44 \end{array} $&$\begin{array}{@{}c@{}} 11 \end{array} $&$\begin{array}{@{}c@{}} 23 \end{array} $&$\begin{array}{@{}c@{}} 57 \end{array} $\\
\cline{1-13}
$L_{SS_2}$&$\begin{array}{@{}c@{}} 5 \end{array} $&$\begin{array}{@{}c@{}} 9 \end{array} $&$\begin{array}{@{}c@{}} 13 \end{array} $&$\begin{array}{@{}c@{}} 7 \end{array} $&$\begin{array}{@{}c@{}} 12 \end{array} $&$\begin{array}{@{}c@{}} 21 \end{array} $&$\begin{array}{@{}c@{}} 7 \end{array} $&$\begin{array}{@{}c@{}} 15 \end{array} $&$\begin{array}{@{}c@{}} 31 \end{array} $&$\begin{array}{@{}c@{}} 8 \end{array} $&$\begin{array}{@{}c@{}} 18 \end{array} $&$\begin{array}{@{}c@{}} 43 \end{array} $\\
\cline{1-13}
$L_{SS_1}$&$\begin{array}{@{}c@{}} 4 \end{array} $&$\begin{array}{@{}c@{}} 6 \end{array} $&$\begin{array}{@{}c@{}} 10 \end{array} $&$\begin{array}{@{}c@{}} 5 \end{array} $&$\begin{array}{@{}c@{}} 8 \end{array} $&$\begin{array}{@{}c@{}} 17 \end{array} $&$\begin{array}{@{}c@{}} 6 \end{array} $&$\begin{array}{@{}c@{}} 10 \end{array} $&$\begin{array}{@{}c@{}} 26 \end{array} $&$\begin{array}{@{}c@{}} 7 \end{array} $&$\begin{array}{@{}c@{}} 12 \end{array} $&$\begin{array}{@{}c@{}} 37 \end{array} $\\
\cline{1-13}
\end{tabular}
\end{table*}

\begin{ex}
In variable-regular graphs with $d_\mathrm{v}=3$ and $g=8$, while the size of smallest possible ESSs (from Proposition \ref{lem:smallpossi}) is $6$, by the exhaustive search of ESSs, one can potentially improve the lower bound on $s_{min}$ to $9$. Moreover, by considering NESSs, the bound can be potentially improved further to $14$. 
\end{ex}

The pseudo code for obtaining a lower bound $s_{\min}^{(l)}$ on $s_{\min}$ is presented in Algorithm \ref{alg2}. The algorithm starts by exhaustively searching for ESSs of size at most $L_{SS_3}-1$ in Lines \ref{startESS}-\ref{endESS}. 
During the search of ESSs, if any ESS is found, the size of that ESS is assigned as the temporary value for $s_{min}^{(l)}$. (Since the search of LETSs is hierarchical, if any SS is found in Line~\ref{ss lets}, it is the smallest one in the range of interest.) 
If this temporary $s_{min}^{(l)}$ is larger than $L_{SS_2}$, then the NETS search algorithm of Subsection \ref{sec:nets} (Algorithm \ref{algnets}) 
is used to find NESSs with size less than this temporary $s_{min}^{(l)}$. If such a NESS is found, the size of that NESS  is assigned as the new and final value for $s_{min}^{(l)}$. (Since the search of NETSs is hierarchical, if any NESS is found in Line~\ref{ss nets}, it is the smallest one in the range of interest.)

\begin{algorithm}
\caption{Finding a lower bound $s_{\min}^{(l)}$ on the stopping distance of a variable-regular Tanner graph $G$ with variable degree $d_\mathrm{v}$ and girth $g$.}
\label{alg2}
 \begin{algorithmic} [1] 
\State \textbf{Inputs:} $G, g, d_\mathrm{v}$.
\parState {\textbf{Initializations:} Set $a_{\max}=L_{SS_3}$, and select  $b'_{\max}$ from Table \ref{tab:lowerSSs}. }
\State $s_{\min}^{(l)}=a_{\max}$.
\State{Run the exhaustive LETS search algorithm in the range of $a \leq a_{max} - 1$ and $b \leq b'_{max}$.} \label{runLETS}
\While{the $dpl$ search is running}  \label{startESS}
\If{a LETS in an $(a,0)$ class is found} \label{ss lets}
\State{Stop the search, and set $s_{\min}^{(l)}= a$, $a_{\max}=a$.} \label{remove1}
\EndIf
\EndWhile  \label{endESS}
\If{$s_{\min}^{(l)} > L_{SS_2}$}
\State{${\cal I}_{NETS}$= \textbf{NetsSrch}(${\cal I}_{LETS}$, $a_{max}$, $b_{max}=4$)} \label{runNETS}
\While{the NETS search is running} 
\If{a NESS of size $a$ is found, where $a < a_{\max}$,} \label{ss nets}
\State{Stop the search, and set $s_{\min}^{(l)}= a$.} \label{remove2}
\EndIf
\EndWhile 
\EndIf
\State \textbf{Output:} $s_{\min}^{(l)}$. 
\end{algorithmic}
\end{algorithm}

\begin{rem}
In Subsection \ref{sec:nets}, the input of Algorithm~\ref{algnets} for the search of NETSs in the range of $a\leq a_{max}$ and $b \leq 4$ was the list of all ETSs in the range of $a\leq a_{max}-1$ and $b \leq b'_{max}$. However,  NESSs are leafless, \textit{i.e.}, each variable node is connected to at least two check nodes. Therefore, for finding the NESSs in Line \ref{ss nets} of Algorithm \ref{alg2}, the input is just the list of LETSs in the range of $a\leq a_{max}-1$ and $b \leq b'_{max}$,  that has already been found in Line \ref{runLETS}.
\end{rem}


\begin{rem}
We note that by removing the conditions that stop the algorithm when an ESS or a NESS is found, one can find the list of all stopping sets with size less than or equal to $a_{max} = L_{SS_3} - 1$ exhaustively.
\end{rem}

\subsection{Upper Bound on the Stopping Distance of LDPC Codes}
\label{sec:upper}

If we fail to find the exact stopping distance of an LDPC code (variable-regular) based on the approach described in Subsection~\ref{sec:lowbound}, then, we have established that $s_{\min} \geq L_{SS_3}$. 
For such cases, in this subsection, we also find an upper bound on $s_{\min}$. To obtain this upper bound, we find a stopping set with size larger than or equal to $L_{SS_3}$.
We do this by devising a non-exhaustive search algorithm for SSs with a range that can go well beyond $L_{SS_3}$. This search algorithm is also applicable to irregular LDPC codes and can provide an upper bound on 
$s_{\min}$ for such codes.

The new algorithm also searches for both ESSs and NESSs, and to search for both categories, it requires to search for LETSs. The LETS search can be performed through the exhaustive $dpl$ searches of~\cite{hashemireg} and~\cite{hashemiireg}, for regular and irregular graphs, respectively. The problem, however, is that the complexity of such a search increases rather rapidly, if the range of search, indicated by the value of $a_{\max}$, is increased much beyond the value of $L_{SS_3}$. 
To overcome the problem of high complexity of the exhaustive $dpl$ search, in cases where the smallest size of stopping sets is well above $L_{SS_3}$, 
we modify the search such that it can handle larger values of $a_{\max}$. This however, comes at the expense of losing the exhaustiveness of the search, and thus we are not guaranteed to find 
the stopping sets with the lowest weight in our search.  In this part of the work, rather than selecting the $b'_{\max}$ as in the original $dpl$ characterization/search, we choose it to be a smaller value. 
To compensate for the detrimental effect that this new choice will have on the  exhaustiveness of the $dpl$ search, rather than using the specific expansion techniques that the original $dpl$ characterization 
determines for each LETS class, we apply \textit{all} the possible expansions from the set of $dot, path$ and $lollipop$ expansions to LETS structures in each class in the range of $a \leq a_{\max}$ and $b \leq b'_{\max}$. 
The only constraint for the application of a certain expansion technique to LETS structures within a specific class is that the expanded structure must still remain within the range $a \leq a_{\max}$ and $b \leq b'_{\max}$. 
Given the values $a_{\max}$ and $b'_{\max}$, Routine~\ref{algexp} provides a pseudo code for finding the list of expansion techniques that are required to be applied to all the LETSs in each $(a,b)$ class. 
These expansions are stored in the $(a,b)$ entry of table $\mathcal{EX}$, $\mathcal{EX}_{(a,b)}$.
The expansion $dot$ is applied to all the $(a,b)$ classes with $a \leq a_{\max}-1$. Also, $pa_m$ and $lo_m^c$ are applied to all the $(a,b)$ classes with $a \leq a_{\max}-m$.
The only constraint for using an expansion technique is that the $b$ value(s) of the new LETS structure(s) need to remain in the range identified by $b'_{\max}$.

\begin{routine}
 \caption{{\bf (Expansions)} Finds all the possible expansions $\mathcal{EX}_{(a,b)}$ for the $(a,b)$ classes of LETS structures, $g/2 \leq a < a_{\max}$,  $1 \leq b \leq b'_{\max}$, for a Tanner graph 
 with girth $g$ and variable degree $d_\mathrm{v}$ (for irregular graphs, $d_\mathrm{v}=d_{v_{\min}}$). $\mathcal{EX}=$Expansions$(a_{\max}, b'_{\max}, g, d_\mathrm{v})$}
\label{algexp}
 \begin{algorithmic} [1] 
\State \textbf{Initialization:} $a=a_{\max}-1$. 
\While {$a \geq g/2$}
\State  $\mathcal{EX}_{(a,b)} \gets dot,~\forall~b \leq b'_{\max}$.
\For {$b = 1, \ldots, b'_{\max}$}
\State $m=2$.
\While{$a+m \leq a_{\max}$}
\If {$b \leq b'_{\max}-2+m(d_\mathrm{v}-2)$}
\State $\mathcal{EX}_{(a,b)}\gets pa_m,lo_m^c$.
\EndIf
\State $m=m+1$.
\EndWhile
\EndFor
\State $a=a-1$.
\EndWhile
\State \textbf{Output:} $\mathcal{EX}$.
\end{algorithmic}
 \end{routine}

A pseudo code for obtaining an upper bound $s_{\min}^{(u)}$ on stopping distance is presented in Algorithm \ref{algup}. To start the algorithm, 
one can select $a_{\max}$ to be initially a rather large value $a_0$, say three or four times $L_{SS_3}$.
The procedures of searching for ESSs and NESSs are generally similar to those in Section \ref{sec:lowbound}, with some differences explained in the following. 
In Algorithm~\ref{alg2}, for the exhaustive search of LETSs (Line \ref{runLETS}), the set of expansions $\mathcal{EX}$, and $b'_{max}$ are obtained by the characterization algorithm  of \cite{hashemireg}.
Also, in Algorithm~\ref{alg2}, for the exhaustive search of NETSs in the range of $a \leq a_{max}$ and $b \leq 4$ (Line \ref{runNETS}), the value of $b'_{max}$ for the exhaustive 
search of LETSs is obtained from Table \ref{tab:lowerSSs}.
In Algorithm~\ref{algup}, however, for both non-exhaustive search of LETSs and NETSs, $b'_{\max}$ is chosen to be a rather small value. 
This value for variable-regular codes is set at $b'_{\max}=g/2(d_\mathrm{v}-2)$, in Algorithm \ref{algup}. This covers the class of shortest simple cycles of the graph. 
Also, for the search of NETSs (NESSs) in Algorithm \ref{algnets}, $b_{max}=b'_{max}$. For irregular graphs, the value is chosen as $b'_{\max}=4$ in Algorithm~\ref{algup}. 
Also, when the values of $a_{\max}$ and $b'_{\max}$ are set, the expansions $\mathcal{EX}$ needed for all the relevant classes of LETS structures are determined through Routine~\ref{algexp}.

If an ESS of size $a$ is found, then $a$ is a temporary upper bound for the stopping distance of the code. Then the NETS search is used to find any possible NESS with size less than the size of the smallest ESS.
If such a NESS is found, then its size is an upper bound on the stopping distance of the code. If the search terminated without finding any stopping set, or if one is interested in tightening the upper bound, one can increase the value of $b'_{\max}$ in a new search, to allow for covering more structures. In the latter case, where a stopping set of weight $s_{\min}^{(u)}$ has already been found, one should set 
$a_{\max} = s_{\min}^{(u)}-1$, for the new search. 

\begin{algorithm}
\caption{Finding an upper bound $s_{\min}^{(u)}$ on the stopping distance of Tanner graph $G$ with variable degree $d_\mathrm{v}$ and girth $g$.}
\label{algup}
 \begin{algorithmic} [1] 
 \State \textbf{Inputs:} $G, g, d_\mathrm{v}$.
\parState {\textbf{Initializations:} Set $a_{\max}= a_0$,  $b'_{\max}=g/2(d_\mathrm{v}-2)$ for variable-regular codes, or $b'_{\max}=4$ for irregular codes, and $s_{\min}^{(u)}= \infty$. (For irregular codes, $d_\mathrm{v}=d_{\mathrm{v}_{\min}}$.)}
\State{$\mathcal{EX}=${\bf Expansions}$(a_{\max}, b'_{\max}, g, d_\mathrm{v})$ (Routine~\ref{algexp}).}\label{psu:alg}
\State{Run the LETS search algorithm (\cite{hashemireg}, \cite{hashemiireg}) based on the expansions in $\mathcal{EX}$.}
\While{the LETS search is running} \label{startwhile}
\If{the run-time exceeds $T$,}
\State{Stop the LETS search, and go to Step \ref{end}.}
\EndIf
\If{a LETS in an $(a,0)$ class is found, where $a < s_{\min}^{(u)}$,}
\State{Stop the LETS search, set $s_{\min}^{(u)}= a, a_{max}=a$.}
\EndIf
\EndWhile \label{endwhile}
\State{${\cal I}_{NETS}$= \textbf{NetsSrch}(${\cal I}_{LETS},a_{\max},b'_{\max}$)}
\While{the NETS search is running} 
\If{a NESS of size $a$ is found, where $a < s_{\min}^{(u)}$,}
\State{Stop the NETS search,  set $s_{\min}^{(u)}= a$.}
\EndIf
\EndWhile
\If{$s_{\min}^{(u)}= \infty$,}
\State {$b'_{\max}=b'_{\max}+1$, and go to Step \ref{psu:alg}.}
\EndIf
\If{$s_{\min}^{(u)} < \infty$, but looking for a tighter bound,}\label{startif}
\State{$b'_{\max}=b'_{\max}+1$, $a_{\max}= s_{\min}^{(u)}-1$, and go to Step \ref{psu:alg}.}
\EndIf \label{endif}
\State \textbf{Output:} $s_{\min}^{(u)}$. \label{end}
\end{algorithmic}
\end{algorithm}

\section{Numerical results}
\label{sec:num} 
We have applied our technique to find lower and upper bounds on the stopping distance of a large number of variable-regular and irregular LDPC codes. 
These include both random and structured codes with a wide range of rates and block lengths. Here, we present the results for $20$ variable-regular and $8$ irregular codes.
These codes and their parameters can be seen in Tables \ref{tab:dss3} and \ref{tab:dssireg}, respectively.
For all the run-times reported in this paper, a desktop computer with $2.4$-GHz CPU and $8$-GB RAM is used, and the search algorithms are implemented in MATLAB.
In Tables \ref{tab:dss3} and \ref{tab:dssireg}, for the cases where the exact $s_{\min}$ is found, this value is reported in the 
column corresponding to the lower bound, and we have ``-'' in the upper bound column. Otherwise, the value $L_{SS_3}$ is reported as the lower bound, and the upper bound is obtained using the non-exhaustive $dpl$ search algorithm. In such cases, the value $b'_{\max}$ that has been used to provide the upper bound is reported in the last column of the table. 
For all cases, the run-time to obtain the lower and upper bounds are also reported.
For structured codes, their structural properties are used to simplify the search.
These codes are $\mathcal{C}_{10}, \mathcal{C}_{13}-\mathcal{C}_{20}$ in Table~\ref{tab:dss3}, and $\mathcal{C}_{21}-\mathcal{C}_{26}$, in Table~\ref{tab:dssireg}.
Also, for all cases, the letter \textit{e} or \textit{n} is reported in brackets to indicate whether the smallest SS  found in the search algorithm is elementary or non-elementary, respectively.

\begin{table}[]
\centering
\renewcommand{\arraystretch}{0.9}
\setlength{\tabcolsep}{1.2pt}
\caption{Stopping Distance or Bounds on the Stopping  Distance of some Variable-Regular LDPC Codes}
\label{tab:dss3}
\begin{tabular}{||c|c|c|c|c||c|c|c||}
\cline{1-8}
Code&$d_\mathrm{v}$&Girth&Rate&Length&Lower Bound&Upper Bound&$b'_{\max}$\\

\cline{1-8}
$\mathcal{C}_1$\cite{mackayencyclopedia}&3&6&0.5&504&$\begin{array}{@{}c@{}}s_{\min} \geq 7 \\ \hdashline 5~sec. \end{array} $&$\begin{array}{@{}c@{}} 16(\textit{n}) \\ \hdashline 5~min. \end{array} $&4\\
\cline{1-8}
$\mathcal{C}_2$\cite{mackayencyclopedia}&3&6&0.5&816&$\begin{array}{@{}c@{}}s_{\min}\geq 7 \\ \hdashline 6~sec. \end{array} $&$\begin{array}{@{}c@{}} 24(\textit{n}) \\ \hdashline 6~min. \end{array} $&4\\
\cline{1-8}
$\mathcal{C}_3$\cite{mackayencyclopedia}&3&6&0.5&1008&$\begin{array}{@{}c@{}}s_{\min}\geq 7 \\ \hdashline 6~sec. \end{array} $&$\begin{array}{@{}c@{}} 26(\textit{n}) \\ \hdashline 34~min. \end{array} $&5\\
\cline{1-8}
$\mathcal{C}_{4}$\cite{mackayencyclopedia}&3&6&0.77&1057&$\begin{array}{@{}c@{}} 7(\textit{n}) \\ \hdashline 43~sec. \end{array} $&-&-\\
\cline{1-8}
$\mathcal{C}_{5}$\cite{mackay2}&3&6&0.75&2000&$\begin{array}{@{}c@{}} 6(e) \\ \hdashline 15~sec. \end{array} $&-&-\\
\cline{1-8}
$\mathcal{C}_{6}$\cite{mackay2}&3&6&0.77&3000&$\begin{array}{@{}c@{}}s_{\min}\geq 7 \\ \hdashline 48~sec. \end{array} $&$\begin{array}{@{}c@{}} 8(e) \\ \hdashline 1~min. \end{array} $&4\\
\cline{1-8}
$\mathcal{C}_{7}$\cite{mackay2}&3&6&0.8&5000&$\begin{array}{@{}c@{}} 6(e) \\ \hdashline 28~sec. \end{array} $&-&-\\
\cline{1-8}
$\mathcal{C}_{8}$\cite{mackay2}&3&6&0.81&8000&$\begin{array}{@{}c@{}}s_{\min}\geq 7 \\ \hdashline 51~sec. \end{array} $&$\begin{array}{@{}c@{}} 14(e) \\ \hdashline 23~min. \end{array} $&4\\
\cline{1-8}
$\mathcal{C}_{9}$\cite{mackayencyclopedia}&3&6&0.87&16383&$\begin{array}{@{}c@{}}s_{\min}\geq 7 \\ \hdashline 209~sec. \end{array} $&$\begin{array}{@{}c@{}} 9(\textit{n}) \\ \hdashline 3~min. \end{array} $&3\\
\cline{1-8}
$\mathcal{C}_{10}$\cite{Tanner3}&3&8&0.41&155&$\begin{array}{@{}c@{}} s_{\min}\geq 13 \\ \hdashline 6~sec. \end{array} $&$\begin{array}{@{}c@{}} 18(\textit{n}) \\ \hdashline 45~sec. \end{array} $&4\\
\cline{1-8}
$\mathcal{C}_{11}$\cite{Peg}&3&8&0.5&504&$\begin{array}{@{}c@{}} s_{\min}\geq 13 \\ \hdashline 363~sec. \end{array} $&$\begin{array}{@{}c@{}} 19(\textit{n}) \\ \hdashline 10~min. \end{array} $&5\\
\cline{1-8}
$\mathcal{C}_{12}$\cite{Peg}&3&8&0.5&1008&$\begin{array}{@{}c@{}} s_{\min}\geq 13 \\ \hdashline 245~sec. \end{array} $&$\begin{array}{@{}c@{}} 37(\textit{n}) \\ \hdashline 35~min. \end{array} $&5\\
\cline{1-8}
$\mathcal{C}_{13}$\cite{tas2}&3&8&0.88&4000&$\begin{array}{@{}c@{}} 8(\textit{e}) \\ \hdashline 73~sec. \end{array} $&-&-\\
\cline{1-8}
$\mathcal{C}_{14}$\cite{Tanner4}&3&8&0.82&5219&$\begin{array}{@{}c@{}} 12(\textit{e}) \\ \hdashline 913~sec. \end{array} $&-&-\\
\cline{1-8}
$\mathcal{C}_{15}$\cite{tas}&3&10&0.5&546&$\begin{array}{@{}c@{}} 14(\textit{e}) \\ \hdashline 42~sec. \end{array} $&-&-\\
\cline{1-8}
$\mathcal{C}_{16}$\cite{Rosenthal}&3&12&0.5&4896&$\begin{array}{@{}c@{}} 24(\textit{e}) \\ \hdashline 729~sec. \end{array} $&-&-\\
\cline{1-8}
$\mathcal{C}_{17}$\cite{tas2}&4&8&0.69&1274&$\begin{array}{@{}c@{}} 8(\textit{e}) \\ \hdashline 10~sec. \end{array} $&-&-\\
\cline{1-8}
$\mathcal{C}_{18}$\cite{tas2}&4&8&0.77&2890&$\begin{array}{@{}c@{}} s_{\min}\geq 17 \\ \hdashline 468~sec. \end{array} $&$\begin{array}{@{}c@{}} 20(\textit{e}) \\ \hdashline 9~min. \end{array} $&10\\
\cline{1-8}
$\mathcal{C}_{19}$\cite{tas2}&5&8&0.23&210&$\begin{array}{@{}c@{}} 10(\textit{e}) \\ \hdashline 4~sec. \end{array} $&-&-\\
\cline{1-8}
$\mathcal{C}_{20}$\cite{tas2}&5&8&0.75&8000&$\begin{array}{@{}c@{}} s_{\min}\geq 19 \\ \hdashline 154~sec. \end{array} $&-&-\\
\cline{1-8}
\end{tabular}
\end{table}

We note that the lower bounds (or the exact stopping distances) are all obtained in times that are at most about $16$ minutes, and in many cases, only in a few seconds. 
The upper bounds are obtained in at most about $35$ minutes, and in many cases, less than $4$ minutes. 
 Using a computer with Core 2 Duo E6700 2.67GHz CPU and 2 GB of RAM, it took the search algorithm of~\cite{Hirotomo2}, about $600$ and $3085$ hours to provide an upper bound on the stopping distance of $\mathcal{C}_1$ and $\mathcal{C}_3$, respectively. In comparison, it has taken the non-exhaustive $dpl$ search algorithm of this paper only $5$ and $34$ minutes to find the same upper bounds  for $\mathcal{C}_1$ and $\mathcal{C}_3$, respectively.
Also the exact stopping distance of $\mathcal{C}_1$ has been reported in \cite{Rosnes}, which is matched with the bound reported here (the run-time has not been reported in \cite{Rosnes}).

To the best of our knowledge, the upper bound for random codes ${\cal C}_2$ and $\mathcal{C}_4$ has not been reported in the literature. 
It takes our algorithm only $43$ seconds to find the exact stopping distance of $\mathcal{C}_4$, a random high rate code with block length $1057$.

We  believe that the run-times reported here would be much less than those of any existing search algorithm. 
In fact, in our opinion, no existing algorithm would be able to handle $\mathcal{C}_{9}$, which is a code of rate $0.87$ and block length $16383$. It takes our algorithm only about $2$ and $3$ minutes to provide the lower and upper bounds of $7$ and $9$ on the $s_{min}$ of this code, respectively.

Codes $\mathcal{C}_{5}$-$\mathcal{C}_{8}$ are four high-rate random codes with variable degree $3$ and girth $6$ constructed by the program of \cite{mackay2}.\footnote{Using \textit{code6.c} with seed$=380$ in \cite{mackay2}.}
These random high-rate codes with large block lengths are challenging codes for all the existing approaches in the literature.
One can see that the exact stopping distance or the lower and upper bounds of these codes have been found by the proposed algorithms in most cases in a few seconds. 
To the best of our knowledge, except a few structured medium length codes with rate $0.5$, no result has been reported in the literature for codes with relatively large block length and high rate.

Also, an upper bound of $18$ on the stopping distance of $\mathcal{C}_{10}$ (Tanner $(155,64)$) has been found in just $45$ seconds. The obtained upper bound  matches  the exact value of $s_{min}$
reported in \cite{Rosnes}. Among seven variable-regular LDPC codes reported in \cite{Rosnes}, the run-time for finding the stopping distance of only two structured small block length codes, including $\mathcal{C}_{10}$, has been reported. 
The stopping distance of $\mathcal{C}_{10}$ has been found in about $1$ minute on a standard desktop computer \cite{Rosnes}.
 
 Codes $\mathcal{C}_{11}$ and $\mathcal{C}_{12}$ are two variable-regular codes constructed by PEG algorithm \cite{Peg} (available in \cite{mackayencyclopedia}). It takes $10$ and $35$ minutes to find upper bounds of $19$ and $37$ on the stopping distance of $\mathcal{C}_{11}$ and $\mathcal{C}_{12}$, respectively. 
 For the purpose of comparing the run-times, we note that, it took the algorithm of~\cite{Hirotomo2}, about $25$ hours to find the same upper bound for $\mathcal{C}_{11}$. This bound  also matches the exact stopping distance reported in \cite{Rosnes}. Also, to the best of our knowledge, the upper bound of $37$ on the $s_{min}$ of $\mathcal{C}_{12}$ has not been reported in the literature so far. 
 
 Moreover, the exact stopping distance of $\mathcal{C}_{14}$, a high-rate structured code with block length $5219$ has been found in just $913$ seconds.
 
In~\cite{tas}, the authors constructed QC-LDPC codes that are cyclic liftings of fully-connected $3 \times n$ protographs, and have the shortest block length for a given girth. Code $C_{15}$ is the shortest
cyclic lifting of the $3 \times 6$ fully-connected base graph with girth $10$, reported in~\cite{tas}. We find $s_{\min}$ of this code to be $14$ in $28$ seconds.
 Code $\mathcal{C}_{16}$ is the Ramanujan $(4896,2448)$ code with $g=12$. For this code, we find 
the exact value of $s_{\min}$ to be $24$, in about $12$ minutes. For the purpose of comparing the run-times, we note that, it took the algorithm of~\cite{Hirotomo2}, about $162$ hours to find the same upper bound for $\mathcal{C}_{16}$.

 Recently, QC-LDPC codes with girth $8$, whose parity-check matrices have some symmetries, and are in many cases shorter than previously existing girth-$8$ QC-LDPC codes, were constructed in~\cite{tas2}. 
  We tested the codes of~\cite{tas2}, and observed that our proposed algorithm can find the exact $s_{\min}$, or obtain lower and upper bounds on $s_{\min}$, for many of them in a matter of seconds or minutes.
 For example, we have found the stopping  distances of all $18$ codes with $d_\mathrm{v}=3$, $g=8$, $R \leq 0.88$ and $n \leq 4000$ (Table I of \cite{tas2}), each in less than or about one minute. The last code in that table is $\mathcal{C}_{13}$ in Table \ref{tab:dss3}.  
 
 While most of the variable-regular codes studied in the literature, see, e.g., \cite{Rosnes}, have $d_{\mathrm{v}}=3$, the algorithms proposed here can find the exact stopping distance, or provide lower and upper bounds on stopping distance of variable-regular codes with $d_\mathrm{v}>3$. 
As an example, we are able to provide lower bounds on, or obtain the exact value of, $s_{\min}$ for all the variable-regular LDPC codes provided in Tables II and III of \cite{tas2}, in just a few minutes. These are codes with variable degrees $4$ and $5$, respectively, and with  $R \leq 0.84$ and $n \leq 14750$. In many cases, also, we find upper bounds on $s_{\min}$ for these codes. Four examples of the codes in Tables II and III of \cite{tas2} are listed as the last entries of Table~\ref{tab:dss3}.

Based on the value of stopping distance, block length, rate and degree distribution of the reported codes in the literature \cite{Rosnes}, \cite{richter}, \cite{Hu2}, \cite{Hirotomo2}, we believe finding the exact (or bounds on the) stopping distance of codes such as $\mathcal{C}_3$, $\mathcal{C}_8$, $\mathcal{C}_9$, $\mathcal{C}_{12}$, $\mathcal{C}_{14}$, $\mathcal{C}_{18}$ and $\mathcal{C}_{20}$ are out of the reach of their algorithms or the run-times will be significantly larger than ours.

We have used Algorithm~\ref{algup} to provide an upper bound on the stopping distance of eight irregular codes listed in Table \ref{tab:dssireg}.
\begin{table}[]
\centering
\renewcommand{\arraystretch}{0.9}
\setlength{\tabcolsep}{3pt}
\caption{Upper Bounds on the Stopping Distance of some Irregular LDPC Codes}
\label{tab:dssireg}
\begin{tabular}{||c|c|c|c||c|c||}
\cline{1-6}
Code&Girth&Rate&Length&Upper Bound&$b'_{\max}$\\ 
\cline{1-6}
$\mathcal{C}_{21}$\cite{CCSD}&6&0.5&128&$\begin{array}{@{}c@{}} 11(n) \\ \hdashline 118~sec. \end{array}$&8\\
\cline{1-6}
$\mathcal{C}_{22}$\cite{802.11}&6&0.83&648&$\begin{array}{@{}c@{}} 7(n) \\ \hdashline 24~sec. \end{array}$&7\\
\cline{1-6}
$\mathcal{C}_{23}$\cite{802.16}&6&0.83&1824&$\begin{array}{@{}c@{}} 8(e) \\ \hdashline 203~sec. \end{array}$&7\\
\cline{1-6}
$\mathcal{C}_{24}$\cite{802.16}&6&0.75&2304&$\begin{array}{@{}c@{}} 12(e) \\ \hdashline 165~sec. \end{array}$&8\\
\cline{1-6}
$\mathcal{C}_{25}$\cite{802.16}&6&0.67&2304&$\begin{array}{@{}c@{}} 15(e) \\ \hdashline 184~sec. \end{array}$&7\\
\cline{1-6}
$\mathcal{C}_{26}$\cite{802.11}&6&0.75&1944&$\begin{array}{@{}c@{}} 12(e) \\ \hdashline 289~sec. \end{array}$&8\\
\cline{1-6}
$\mathcal{C}_{27}$\cite{Peg}&8&0.5&1008&$\begin{array}{@{}c@{}} 13(e) \\ \hdashline 21~min. \end{array}$&10\\
\cline{1-6}
$\mathcal{C}_{28}$\cite{Peg}&8&0.5&2048&$\begin{array}{@{}c@{}} 15(e) \\ \hdashline 20~min. \end{array}$&10\\
\cline{1-6}
\end{tabular}
\end{table}
 Codes $\mathcal{C}_{21}-\mathcal{C}_{26}$ have been adopted in standards, and Codes $\mathcal{C}_{27}$ and $\mathcal{C}_{28}$ are random codes constructed by the PEG algorithm. 
In \cite{Rosnes2}, the exact stopping distance of all the IEEE 802.16e LDPC codes \cite{802.16} was reported. Our upper bound search algorithm can also find the same stopping distance in each case, 
most of the time in just a few seconds (no run-time for obtaining these results was reported in \cite{Rosnes2}). 

In this paper, we propose an efficient search algorithm to provide an exhaustive/non-exhaustive list of NETSs.
The results obtained from this search algorithm along with the theoretical results in \cite{hashemilower} support the assertion that in the harmful classes of TSs, 
there is no NETS (otherwise, they could potentially be harmful). For example, Table \ref{tab:38} shows the list of TSs  of Tanner (155,64) code \cite{Tanner3} ($d_\mathrm{v}=3$, $g=8$) 
in the wide range of $a \leq 13$ and $b \leq 4$. The multiplicities of LETS, ETSL and NETS structures are also shown separately in the table.
One can compare the number of LETS, ETSL and NETS in this  code to see that in the classes believed to be most harmful (with relatively small $a$ and $b$ values), the only TSs are LETSs.
\begin{table}[]
\centering
\caption{Multiplicities of $(a,b)$ TSs consisting of LETSs, ETSLs and NETSs of $\mathcal{C}_{10}$ within the range of $a \leq 13$ and $b \leq 4$}
\renewcommand{\arraystretch}{1.1}
\label{tab:38}
\begin{tabular}{||c|c|c|c|c|c|| }
\cline{1-6}
&\multicolumn{5}{c||}{$\mathcal{C}_{10}$}\\
\cline{1-6}
$(a,b)$&Total&Total&Total&Total&Total\\
class&LETS&ETSL&NETS&TS&TS\cite{zhang}\\
\cline{1-6}
(4,4)&465&0&0&465&465\\
\cline{1-6}
(5,3)&155&0&0&155&155\\
\cline{1-6}
(6,4)&930&1860&0&2790&2790\\
\cline{1-6}
(7,3)&930&0&0&930&930\\
\cline{1-6}
(8,2)&465&0&0&465&465\\
\cline{1-6}
(8,4)&5115&9300&0&14415&14415\\
\cline{1-6}
(9,3)&1860&3720&0&5580&5580\\
\cline{1-6}
(10,2)&1395&0&0&1395&1395\\
\cline{1-6}
(10,4)&29295&48360&5580&83235&83235\\
\cline{1-6}
(11,3)&6200&9300&1860&17360&17360\\
\cline{1-6}
(12,2)&930&0&0&930&930\\
\cline{1-6}
(12,4)&180885&134850&47895&363630&\textbf{36280}\\
\cline{1-6}
(13,3)&34875&5580&2790&43245&43245\\
\cline{1-6}
\end{tabular}
\end{table}
 Based on the value $a_{max}= 11$ from Table  \ref{tab:lowernets}, the results of NETSs presented in Table~\ref{tab:38} for the classes with $a \leq 11$ and $b \leq 4$ are exhaustive. 
 For finding NETSs beyond this range, in Algorithm \ref{algnets}, the exhaustive list of ETSs within the range  $a \leq 12$ and $b \leq 5$ has been used.
 In \cite{zhang}, authors used the \textit{branch-\&-bound} approach to propose an exhaustive search algorithm for finding TSs.  However, similar to the other branch-\&-bound algorithms, 
 this approach is only applicable to codes with short block lengths. The multiplicity of TSs in different classes found by our algorithm for Tanner (155,64) code is matched with 
 the one reported in \cite{zhang}, except for the $(12,4)$ class. While our search algorithm has found $363630$ TSs in the $(12,4)$ class, only $36280$ TSs have been reported in \cite{zhang}.\footnote{We believe that the result reported in \cite{zhang} should be a typographical error.}  
The run-time of the algorithm of \cite{zhang} to find the exhaustive list of  TSs is not reported. However, while it took the algorithm of \cite{zhang} $59$  minutes\footnote{This is the only run-time reported in \cite{zhang}. The run-time is for a standard desktop computer with a 2.67-GHz processor.} to find the TSs of a PEG code in the range of $a \leq 5$ and $b \leq 5$, our $dpl$ search algorithm finds the same set of TSs in less than 20 seconds. 
 
 \section{Conclusion}
\label{sec:conclude}
 
In this paper, we proposed a  hierarchical graph-based expansion approach to characterize non-elementary
trapping sets (NETS) of low-density parity-check (LDPC) codes. The proposed characterization is based
on \textit{depth-one tree (dot)} expansion technique. Each NETS structure $\mathcal{S}$ is characterized as a sequence of embedded NETS structures that starts from an ETS, and grows in each step by using a $dot$ expansion, until it reaches $\mathcal{S}$. The  characterization allowed us to devise efficient search algorithms for finding all the instances of $(a,b)$ NETS structures with $a \leq a_{max}$ and $b \leq b_{max}$, in a guaranteed fashion. 
The exhaustive search of NETSs along with the theoretical results provided in \cite{hashemilower} support the assertion that in the harmful classes of TSs, there is no NETS (otherwise, they could
potentially be harmful). We also devised a low-complexity non-exhaustive search algorithm for finding NETSs within a much wider range compared to the range for the exhaustive search. 

Moreover, in this paper, we derived tight lower and upper bounds on the stopping distance $s_{min}$ of
LDPC codes. The bounds, which were established using a combination of analytical results and search techniques, are applicable to LDPC codes with a wide range of rates and block lengths. 
To derive the bounds, we partitioned the stopping sets into two categories of elementary and non-elementary.  We noted that elementary stopping sets (ESSs) and non-elementary stopping sets (NESSs) are 
subset of leafless ETSs (LETSs) and NETSs, respectively. Using exhaustive LETS and NETS search algorithms, we searched the stopping sets of size less than $L$.
If the search happened to find a stopping set, then the smallest size of such a stopping set was $s_{min}$. Otherwise, if the search failed, then a lower bound of $L \leq s_{min}$ was established on $s_{min}$. 
For the upper bound, the LETS and NETS search algorithms were modified to increase the range of search for stopping sets with larger size at the expense of losing the exhaustiveness of the search. 
The proposed technique was applied to a large number of LDPC codes, and lower and upper bounds on $s_{min}$, and in many cases the exact value of $s_{min}$, were obtained in a matter of seconds or minutes. 
Many of such codes are out of the reach of the existing search-based algorithms that often have practical constraints on the block length, rate or the degree distribution of the codes that they can handle.

\end{document}